\DeclareMathOperator{\dist}{dist}
\DeclareMathOperator{\supp}{supp}
\DeclareMathOperator{\wt}{wt}
\DeclareMathOperator{\rank}{rank}
\newcommand{\ac}{\mathsf{AC}^0}
\newcommand{\superconcentrator}{superconcentrator-induced }
\declaretheorem[{style=definition,numberwithin=section}]{definition}
\declaretheorem[{style=definition,sibling=definition}]{theorem}
\declaretheorem[{style=definition,sibling=definition}]{lemma}
\declaretheorem[{style=definition,sibling=definition}]{claim}
\declaretheorem[{style=definition,sibling=definition}]{corollary}
\declaretheorem[{style=definition,sibling=definition}]{proposition}
\title{On the Minimum Depth of Circuits with Linear Number of Wires Encoding Good Codes}
\author{Andrew Drucker\footnote{Independent. Email:\href{mailto:andy.drucker@gmail.com}{andy.drucker@gmail.com}.}  \and Yuan Li\footnote{School of Computer Science, Fudan University.  Email: \href{mailto:yuan_li@fudan.edu.cn}{yuan\_li@fudan.edu.cn}. An earlier version of this work appeared as Chapter 2 of the author's Ph.D. thesis \cite{Li17}.  Part of this work was done while the authors were affiliated with the University of Chicago Computer Science Dept.}}
\date{}
\begin{document}

\maketitle

\begin{abstract}

Let $S_d(n)$ denote the minimum number of wires of a depth-$d$ (unbounded fan-in) circuit encoding an error-correcting code $C:\{0, 1\}^n \to \{0, 1\}^{32n}$ with distance at least $4n$. G\'{a}l, Hansen, Kouck\'{y}, Pudl\'{a}k, and Viola [IEEE Trans. Inform. Theory 59(10), 2013] proved that $S_d(n) = \Theta_d(\lambda_d(n)\cdot n)$ for any fixed $d \ge 3$. By improving their construction and analysis, we prove $S_d(n)= O(\lambda_d(n)\cdot n)$. Letting $d = \alpha(n)$, a version of the inverse Ackermann function, we obtain circuits of linear size. This depth $\alpha(n)$ is the minimum possible to within an additive constant 2; we credit the nearly-matching depth lower bound to G\'{a}l \emph{et al.}, since it directly follows their method (although not explicitly claimed or fully verified in that work), and is obtained by making some constants explicit in a graph-theoretic lemma of  Pudl\'{a}k [Combinatorica, 14(2), 1994], extending it to super-constant depths.


We also study a subclass of MDS codes $C: \mathbb{F}^n \to \mathbb{F}^m$ characterized by the Hamming-distance relation $\dist(C(x), C(y)) \ge m - \dist(x, y) + 1$ for any distinct $x, y \in \mathbb{F}^n$.  (For linear codes this is equivalent to the generator matrix being totally invertible.)  We call these \emph{superconcentrator-induced codes}, and we show their tight connection with superconcentrators. Specifically, we observe that any linear or nonlinear circuit encoding a superconcentrator-induced code must be a superconcentrator graph, and any superconcentrator graph can be converted to a linear circuit, over a sufficiently large field (exponential in the size of the graph), encoding a superconcentrator-induced code.

\vspace{0.1cm}
\textbf{Keywords:} error-correcting codes, circuit complexity, superconcentrator.
\end{abstract}

\section{Introduction}
Understanding the computational complexity of \emph{encoding} error-correcting codes is an important task in theoretical computer science. Complexity measures of interest include time, space, and parallelism. Error-correcting codes are indispensable as a tool in computer science. Highly efficient encoding algorithms (or circuits) are desirable in settings studied by theorists including zero-knowledge proofs \cite{GLS+21}, circuit lower bounds \cite{CT19}, data structures for error-correcting codes \cite{Vio19}, pairwise-independent hashing \cite{Ish08}, and secret sharing \cite{Dru14}. Besides that, the \emph{existence} of  error-correcting codes with efficient encoding circuits sheds light on the designing of practical error-correcting codes.

We consider codes with constant rate and constant relative distance, which are called asymptotically good error-correcting codes or \emph{good codes} for short. The complexity of encoding good codes has been studied before. Bazzi and Mitter \cite{BM05} proved that branching programs with linear time and sublinear space cannot encode good codes. By using the sensitivity bounds \cite{Bop97}, one can prove that $\ac$ circuits cannot encode good codes; Lovett and Viola proved that $\ac$ circuits cannot sample good codes \cite{LV11}; Beck, Impagliazzo and Lovett \cite{BIL12} strengthened the result.

Dobrushin, Gelfand and Pinsker \cite{DGP73} proved that there exist linear-size circuits encoding good codes. Sipser and Spielman \cite{Spi96, SS96} explicitly constructed good codes that are encodable by bounded fanin circuits of depth $O(\log n)$ and size $O(n)$, and decodable by circuits of size $O(n \log n)$. For bounded fan-in, the depth $O(\log n)$ is obviously optimal. Henceforth, unless otherwise stated, we consider circuits with unbounded fan-in, where the size is measured by the number of \emph{wires} instead of gates.

G\'{a}l, Hansen, Kouck\'{y}, Pudl\'{a}k, and Viola \cite{GHK+12} investigated the circuit complexity of encoding good codes. G\'{a}l \emph{et al.} constructed circuits recursively and probabilistically, with clever recursive composition ideas, which resemble the construction of superconcentrators in \cite{DDP+83}. They also proved size lower bounds for bounded depth, by showing that any circuit encoding good codes must satisfy some superconcentrator-like properties; the lower bound follows from the size bounds for a variant of bounded-depth
superconcentrators studied by Pudl\'{a}k \cite{Pud94}.  Their construction's wire upper bounds are of form $O_d(n \cdot \lambda_d(n))$  (in our notation\footnote{Our definition of $\lambda_d(n)$ follows Raz and Shpilka \cite{RS03}. It is slightly different from G\'{a}l \emph{et al.}'s. In \cite{GHK+12}, the function $\lambda_i(n)$ is actually $\lambda_{2i}(n)$ in our notation.}) and their lower bounds are of form $\Omega_d(n \cdot \lambda_d(n))$, matching up to a multiplicative constant $c_d$ for constant values $d$. They also proved that there exist $O_d(n)$-size $O(\log \lambda_d(n))$-depth circuits encoding good codes. Here $\lambda_d(n)$ are slowly growing inverse Ackermann-type functions, e.g., $\lambda_2(n) = \Theta(\log n)$, $\lambda_3(n) = \Theta(\log\log n)$, $\lambda_4(n) = \Theta(\log^* n)$.

Druk and Ishai \cite{Dru14} proposed a randomized construction of good codes meeting the Gilbert-Varshamov bound, which can be encoded by linear-size logarithmic-depth circuits (with bounded fan-in). Their construction is based on linear-time computable pairwise independent hash functions \cite{Ish08}.

Chen and Tell \cite{CT19} constructed \emph{explicit} circuits of depth $d$ encoding linear codes with constant relative distance and code rate $\Omega\left(\frac{1}{\log n}\right)$ using $n^{1 + 2^{-\Omega(d)}}$ wires, for every $d \ge 4$. They used these explicit circuits to prove bootstrapping results for threshold circuits.

\subsection{Background and results}

To encode good error-correcting codes, linear size is obviously required. It is natural to ask, what is the minimum depth required to encode goods using a linear number of wires?  This question is addressed, but not fully answered, by the work of G\'{a}l \emph{et al.} \cite{GHK+12}. 

We show that one can non-explicitly encode error-correcting codes with constant rate and constant relative distance using $O(n)$ wires, with depth at most $\alpha(n)$, for sufficiently large $n$. Here, $\alpha(n)$ is a version of the inverse Ackermann function.  This is nearly optimal, by a lower bound of $\alpha(n) - 2$ that we credit to G\'{a}l \emph{et al.} \cite{GHK+12} as discussed below.  Our new upper bound states:

\begin{theorem}
\label{thm:depth_ub}
(Upper bound) Let $r \in (0, 1)$ and $\delta \in (0, \frac{1}{2})$ such that $r < 1 - h(\delta)$. For sufficiently large $n$ and for any $d \ge 3$, not necessarily a constant, there exists a linear circuit $C: \{0, 1\}^n \to \{0, 1\}^{\lfloor \frac{n}{r} \rfloor}$ of size $O_{r, \delta}(\lambda_d(n) \cdot n)$ and depth $d$ that encodes an error-correcting code with relative distance $\ge \delta$. In particular, when $d = \alpha(n)$, the circuit is of size $O_{r, \delta}(n)$.
\end{theorem}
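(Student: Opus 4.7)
The plan is to prove Theorem~\ref{thm:depth_ub} by a recursive probabilistic construction that refines the method of G\'al et al.\ \cite{GHK+12}. At a high level, a depth-$d$ encoder on $n$ bits is built by partitioning the message into $n/m$ blocks of length $m$, encoding each block independently with a copy of a depth-$(d-1)$ encoder, and then attaching a depth-$2$ ``top layer'' computing additional random linear combinations of the message. The top layer plays the role of a random linear code: for any rate $r < 1 - h(\delta)$, the Gilbert--Varshamov bound guarantees a random linear map $\{0,1\}^n \to \{0,1\}^{\lfloor n/r\rfloor}$ whose image has relative distance $\ge \delta$ with positive probability, and this is precisely where the hypothesis enters.

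I would start the induction at $d=3$, the smallest depth in the statement. The base case combines an optimal depth-$2$ encoder of wire complexity $O(n\log n)$ with a sparsification/top-layer step so that each input bit fans out only $O(\lambda_3(n))=O(\log\log n)$ times, yielding a depth-$3$ encoder of size $O(\lambda_3(n)\cdot n)$. For the inductive step, given a depth-$(d-1)$ encoder $E_{d-1}$ of size $s_{d-1}(m)\le c\cdot \lambda_{d-1}(m)\cdot m$, I would choose the block length $m$ sharply, as the largest integer with $\lambda_{d-1}(m)\le \lambda_d(n)$. Then the $n/m$ inner copies contribute $(n/m)\cdot s_{d-1}(m)=O(\lambda_d(n)\cdot n)$ wires, while the random top layer contributes $O(n)$ wires, giving a recursion of the form
$s_d(n)\le (n/m)\,s_{d-1}(m) + O(n)$,
which telescopes to $s_d(n)\le c\cdot \lambda_d(n)\cdot n$ with the \emph{same} constant $c$ across all depths.

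The main obstacle, and the point where we improve on \cite{GHK+12}, is keeping the hidden constant in $O(\lambda_d(n)\cdot n)$ independent of $d$; otherwise setting $d=\alpha(n)$ would blow the size up. Two ingredients should be enough: first, choosing the block length $m$ tightly enough that each level of recursion contributes only an additive $O(n)$, making the recursion truly telescoping rather than accumulating a multiplicative constant per level; and second, a single global probabilistic analysis of the entire recursion tree, drawing all top-layer matrices independently and union-bounding over the at most $2^n$ nonzero message patterns and over the levels of the tree, to show that with positive probability every nonzero codeword has weight $\ge \delta \lfloor n/r\rfloor$. Once existence is established with positive probability, absorbing the $O(1)$-factor slack into the leading constant completes the proof; specializing to $d=\alpha(n)$ gives $\lambda_d(n)=O(1)$ and hence linear size.
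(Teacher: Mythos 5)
Your proposal departs fundamentally from the paper's approach, and as sketched it has several gaps that I don't see how to close. The paper's construction is not a coordinate-block decomposition at all: it is a decomposition by \emph{input weight range}. G\'al et al.\ (and this paper) work with \emph{partial good codes}: circuits $C:\{0,1\}^n\to\{0,1\}^{32n}$ that guarantee output weight $\ge 4n$ only when the input weight lies in a prescribed interval $[r,s]$. A depth-$(d+2)$ encoder for the full weight range $[1,n]$ is then obtained by composing (via a union-bound merging step) roughly $\lambda_{d+2}(n)$ PGCs, each handling one interval of an Ackermann-type ladder, and each of \emph{linear} size. Each interval-PGC is built by a sandwich: a depth-1 \emph{condenser} (a unique-neighbor-expander-based range detector that shrinks $n$ inputs down to $\lfloor n/r\rfloor$ while preserving the minimum weight for inputs in that range), a recursive depth-$(d-2)$ PGC on the smaller instance, and a depth-1 \emph{amplifier} (a disperser-based map stretching $32\lfloor n/r\rfloor$ outputs back up to $32n$ while preserving relative weight). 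Keeping the constant independent of $d$ requires one more idea: each inner PGC is arranged to have bounded output fanin, so that the final merging layer in the composition lemma can be \emph{collapsed} into the previous one, costing only $O(\ell D n)$ wires and no extra depth.

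Concretely, the issues with your sketch: (1) Partitioning the message into $n/m$ disjoint blocks and encoding each independently does not yield relative distance $\Omega(1)$ — a weight-1 message supported inside a single block produces a codeword of weight proportional to $m$, not $n$. You gesture at a "top layer" to repair this, but (2) the top layer cannot simultaneously be a GV-achieving random linear map \emph{and} have $O(n)$ wires; a GV-meeting generator matrix is dense and requires $\Theta(n^2)$ wires, so GV rate-boosting in this paper is only applied once at the very end, and only \emph{after} one already has a distance-$\Omega(n)$ code, using a bounded-degree disperser rather than a fully random matrix. (3) Your depth accounting doesn't close: depth-$(d-1)$ inner encoders plus a depth-$2$ top layer gives depth $d+1$, not $d$; the paper's recursion moves in steps of $2$ (condenser at depth 1, inner circuit at depth $d-2$, amplifier at depth 1), and the crucial bounded-fanin collapse is what prevents the composition step from adding yet another layer. (4) The "telescoping" recursion $s_d(n)\le (n/m)s_{d-1}(m)+O(n)$ does not produce $\lambda_d(n)\cdot n$: when you work it out with $m$ chosen as you describe, you accumulate one additive $O(n)$ term per unrolling, so the number of unrollings must itself be $\Theta(\lambda_d(n))$, which is exactly what the Ackermann-ladder composition in the paper is carefully set up to achieve; the block picture doesn't explain why only $\lambda_d(n)$ terms appear. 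I'd encourage you to read the composition lemma and the reduction lemma in the paper: once you see that the decomposition is by weight range (via the PGC notion), the roles of the condenser, amplifier, and bounded-fanin collapse become clear, and the $O(\lambda_d(n)\cdot n)$ bound with an absolute constant falls out of the two-step induction in Theorem~\ref{thm:ub_main}.
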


Our upper bound improves the construction and the analysis by G\'{a}l \emph{et al.} \cite{GHK+12}. Specifically, let $S_d(n)$ denote the minimum size of a depth-$d$ linear circuit encoding a code $C:\{0, 1\}^n \to \{0, 1\}^{32n}$ with distance $4n$. G\'{a}l \emph{et al.} proved that $S_d(n) = O_d(\lambda_d(n)\cdot n)$, where the hidden constant in $O_d(\lambda_d(n)\cdot n)$ grows rapidly as an Ackermann-type function (if one follows their arguments and expand the computation straightforwardly in Lemma 26 in \cite{GHK+12}.) They also proved that, for any \emph{fixed} $m$, when the depth $d = O(\log(\lambda_m(n)))$, $S_d(n) = O_m(n)$. Their upper bound is strong, but suboptimal. Our main technical contribution is to prove $S_d(n) = O(\lambda_d(n) \cdot n)$, where the hidden constant is an absolute constant. Explaining how this improvement is obtained is a bit technical, and defeered to the next subsection.

Turning to the lower bounds: we credit the lower bound in the result below to G\'{a}l \emph{et al.}
(although it was not explicitly claimed or fully verified in that work), since it is directly obtainable by their size lower-bound method and the tool of  Pudl\'{a}k \cite{Pud94} on which it relies, when that tool is straightforwardly extended to super-constant depth.\footnote{The $\Omega(\cdot)$ notation in the circuit size lower bound of G\'{a}l \emph{et al.}, for example, Theorem 1 in \cite{GHK+12}, involves an implicit constant which decays with the depth $d$, as can be suitable for constant depths; similarly for the tool of Pudl\'{a}k, Theorem 3.(ii) in \cite{Pud94}, on which it relies. For general super-constant depths, more explicit work is required to verify the decay is not too rapid.}

\begin{theorem}
\label{thm:depth_lb}
(Lower bound) \cite{GHK+12}
Let $\rho \in (0, 1)$ and $\delta \in (0, \frac{1}{2})$, and let constant $c > 0$. Let $C_n : \{0, 1\}^{n} \to \{0, 1\}^{\lfloor n/\rho \rfloor}$ be a family of circuits of size at most $cn$ that encode error-correcting codes with relative distance $\ge \delta$.  Arbitrary Boolean-function gates of unrestricted fanin are allowed in $C_n$. If $n$ is sufficiently large, i.e., $n \ge N(r, \delta, c)$, the depth of the circuit $C_n$ is at least $\alpha(n) - 2$.
\end{theorem}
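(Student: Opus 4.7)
The plan is to follow the strategy of G\'al \emph{et al.}\ \cite{GHK+12} essentially verbatim, reducing the depth lower bound to a graph-theoretic lower bound on depth-$d$ robust connectors via Pudl\'ak's theorem \cite{Pud94}. The only new ingredient is an explicit accounting of constants so that the bound remains nontrivial for $d = d(n)$ growing (slowly) with $n$.

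First, I would invoke the standard argument from \cite{GHK+12} that any circuit $C_n$ of size $\le cn$ encoding a code of relative distance $\ge \delta$ and rate $\rho$ induces, on its underlying DAG, a ``robust connector'' property: there exist constants $a = a(\rho, \delta, c) > 0$ and $b = b(\rho, \delta, c) > 0$ such that for every $S \subseteq [n]$ and $T \subseteq [\lfloor n/\rho\rfloor]$ with $|S|, |T| \ge a n$, the graph admits $\ge bn$ vertex-disjoint paths from $S$ to $T$. The proof is the familiar distance-vs.-cut-size argument: a too-small $S$-to-$T$ cut would force, by a dimension count, distinct codewords agreeing on an unduly large fraction of $T$, contradicting the distance assumption. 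Crucially, this step is purely combinatorial and applies even to circuits with arbitrary unbounded-fan-in Boolean gates.

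Second, I would appeal to Pudl\'ak's lower bound that any depth-$d$ robust connector on $n$ inputs has at least $\gamma \cdot n \cdot \lambda_d(n)$ edges, where $\gamma$ depends only on $a$ and $b$. The proof in \cite{Pud94} is an induction on $d$: at each level one either peels off a low-degree vertex or passes to an internal subgraph which, on a suitable sub-population, is a depth-$(d-1)$ robust connector with slightly degraded parameters; the size bound is then obtained by unwinding the Ackermann-like recursion. The key technical work is to trace the evolution of $(a, b)$ through these $d$ levels and verify that the per-level loss is absolute (rather than growing with the current depth), so that $(a, b)$ remain bounded away from the degenerate regime and the leading constant $\gamma$ does not collapse as $d$ ranges up to $\alpha(n)$. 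This is precisely the issue flagged in the paper's footnote: the statement in \cite{Pud94} was written with $d = O(1)$ in mind, so one must re-examine each step to confirm no hidden $d$-dependent factor sneaks into the constants.

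Finally, combining $|E(G_n)| \le cn$ with $|E(G_n)| \ge \gamma \cdot n \cdot \lambda_d(n)$ yields $\lambda_d(n) \le c/\gamma = O(1)$, and the definition of $\alpha(\cdot)$ as the (appropriately normalized) inverse of the Ackermann hierarchy then forces $d \ge \alpha(n) - 2$ for $n$ sufficiently large; the additive $2$ absorbs the gap between the constant threshold $c/\gamma$ and the precise value of $\lambda_{\alpha(n)}(n)$. The main obstacle is the per-level bookkeeping in the second step, but no conceptual novelty beyond \cite{Pud94, GHK+12} is expected --- one simply needs to rewrite Pudl\'ak's recursion with explicit absolute constants throughout and confirm that the resulting product still yields the $\lambda_d(n)$ factor for depths up to $\alpha(n)$.
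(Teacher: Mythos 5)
There is a genuine gap in the final step of your plan. You set out to ``confirm no hidden $d$-dependent factor sneaks into the constants'' in Pudl\'ak's bound, and your concluding sentence combines $|E(G_n)| \le cn$ with $|E(G_n)| \ge \gamma \cdot n \cdot \lambda_d(n)$ to get $\lambda_d(n) \le c/\gamma = O(1)$ --- which only works if $\gamma$ is an absolute constant. But when you actually trace the constants through Pudl\'ak's depth-reduction lemma, you will find that they \emph{do} decay with depth: each reduction from depth $d+2$ to depth $d$ halves the densely-regular parameter $\epsilon$, so after $\sim d/2$ levels the leading constant is $\Theta(2^{-d/2})$. (Having an ``absolute per-level loss'' is not the same as the total loss over $d$ levels staying bounded; a constant-factor loss per level compounds to an exponential loss in $d$.) This is exactly what the paper's refined statement of Pudl\'ak, Theorem \ref{thm:pudlak_lb_refinement}, makes explicit: $D(n, d, \epsilon, \delta, 1/r) \ge \Omega\bigl(2^{-d/2}\epsilon\delta^2\lambda_d(r)n\bigr)$. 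Plugging in $|E(G_n)| \le cn$ therefore yields only $\lambda_d(\lfloor n/\rho\rfloor) = O(d\,2^{d/2})$, not $O(1)$, and your final deduction $d \ge \alpha(n) - 2$ does not follow from that alone.

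The missing ingredient is an observation about the interplay between this decay and the extreme slowness of the $\lambda_d$ hierarchy: even though $d\,2^{d/2}$ is not $O(1)$, one can show $\lambda_d(c\cdot d\,2^{d/2}) \le d$ for large $d$ (Proposition \ref{prop:lambda_dd}), and separately $\lambda_d(d) \le 4$ (Proposition \ref{prop:inv_ack_def_properties}(ii)); chaining these through $\lambda_{d+2} = \lambda_d^*$ gives $\lambda_{d+2}(n) \le 6$ and hence $d \ge \alpha(n) - 2$. Without some such argument --- i.e.\ a reason why a depth-dependent bound on $\lambda_d(n)$ that grows like $d\,2^{d/2}$ is still strong enough --- the proof does not close. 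Everything up to this point (code-to-robust-connector/densely-regular reduction, Pudl\'ak-style induction on depth, and the intent to make constants explicit) matches the paper's approach; it is only this last leap from $\lambda_d(n) = O(d\,2^{d/2})$ to $d \ge \alpha(n) - 2$ that is assumed rather than proved in your proposal.
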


The proof for Theorem \ref{thm:depth_lb} closely follows \cite{GHK+12} and is an application of a graph-theoretic argument in the spirit of \cite{Val77, DDP+83, Pud94, RS03}. In detail, we use Pudl\'{a}k's size lower bounds \cite{Pud94} on ``densely regular'' graphs, and rely on the connection between good codes and densely regular graphs by G\'{a}l \emph{et al.} \cite{GHK+12}. Pudl\'{a}k's bound was originally proved for bounded depth; to apply it to unbounded depth, we explicitly determine the hidden constants by directly following Pudl\'{a}k's work, and verify that their decay at higher unbounded depths is moderate enough to allow the lower-bound method to give superlinear bounds up to depth $\alpha(n) - 3$. Even after our work, the precise asymptotic complexity of encoding good codes remains an open question for $d$ in the range [$\omega(1)$, $\alpha(n)$ - 3].

Stepping back to a higher-level view, the strategy of the graph-theoretic lower-bound arguments in the cited and related works is as follows:
\begin{itemize}
	\item Prove any circuit computing the target function must satisfy some superconcentrator-like connection properties;
	\item Prove any graph satisfying the above connection properties must have many edges;
	\item Therefore, the circuit must have many wires.
\end{itemize}

Valiant \cite{Val75, Val76, Val77} first articulated this kind of argument, and proposed the definition of superconcentrators. Somewhat surprisingly, Valiant showed that linear-size superconcentrators exist. As a result, one cannot prove superlinear size bounds using this argument (when the depth is unbounded). Dolev, Dwork, Pippenger, and Wigderson \cite{DDP+83} proved $\Omega(\lambda_d(n) \cdot n)$ lower bounds for bounded-depth (weak) superconcentrators, which implies circuit lower bounds for functions satisfying weak-superconcentrator properties. Pudl\'{a}k \cite{Pud94} generalized Dolev \emph{et al.}'s lower bounds by proposing the definition of \emph{densely regular graphs}, and proved lower bounds for bounded-depth densely regular graphs, which implies circuit lower bounds for functions satisfying densely regular property, including shifters, parity shifters, and Vandermonde matrices. Raz and Shpilka \cite{RS03} strengthened the aforementioned superconcentrator lower bounds by proving a powerful graph-theoretic lemma, and applied it to prove superlinear lower bounds for matrix multiplication. (This powerful lemma can reprove all the above lower bounds.) G\'{a}l \emph{et al.} \cite{GHK+12} proved that any circuits encoding good error-correcting codes must be densely regular. They combined this with Pudl\'{a}k's lower bound on densely regular graphs \cite{Pud94} to obtain $\Omega(\lambda_d(n) \cdot n)$ size bounds for depth-$d$ circuits encoding good codes.

All the circuit lower bounds mentioned above apply even to the powerful model of \emph{arbitrary-gate} circuits, that is,
\begin{itemize}
	\item each gate has unbounded fanin,
	\item a gate with fanin $s$ can compute any function from $\{0, 1\}^s$ to $\{0, 1\}$,
	\item circuit size is measured as the number of wires.
\end{itemize}
In this ``arbitrary-gates'' model, any function from $\{0, 1\}^n$ to $\{0, 1\}^m$ can be computed by a circuit of size $mn$.

It is known that any circuits encoding good codes must satisfy some superconcentrator-like connection properties \cite{Spi96}, \cite{GHK+12}. Our other result is a theorem in the \emph{reverse} direction in the algebraic setting over large finite fields. Motivated by this connection, we study \emph{\superconcentrator codes} (Definition \ref{def:sc_codes}), a subclass of maximum distance separable (MDS) codes \cite{LC04, GRS12}, and observe its tight connection with superconcentrators.

\begin{theorem}
\label{thm:sc_code_informal}
(Informal) Given any $(n, m)$-superconcentrator, one can convert it to a linear arithmetic circuit encoding a code $C: \mathbb{F}^n \to \mathbb{F}^m$ such that
\begin{equation}
\label{equ:sc_induced_def}
\dist(C(x), C(y)) \,\ge\, m - \dist(x, y) + 1 \quad \forall x \not= y \in \mathbb{F}^n
\end{equation}
by replacing each vertex with an addition gate and assigning the coefficient for each edge uniformly at random over a sufficiently large finite field (where $d2^{\Omega(n+m)}$ suffices, and $d$ is the depth of the superconcentrator).

\end{theorem}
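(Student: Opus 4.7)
The plan is to apply the Schwartz--Zippel lemma to a determinantal identity extracted from the superconcentrator structure. Given the $(n,m)$-superconcentrator DAG $G$, I turn every internal and output vertex into an addition gate and attach a formal variable $w_e$ to each edge $e$ as its multiplicative coefficient. The resulting arithmetic circuit computes a linear map $C_G(x) = Mx$, whose transfer matrix satisfies
\[
M_{t,s} \;=\; \sum_{P:\, s \to t} \prod_{e \in P} w_e,
\]
the sum ranging over directed paths from input $s$ to output $t$. Substituting uniformly random field elements for the $w_e$ will yield the circuit of the statement.

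Next I reformulate the target inequality \eqref{equ:sc_induced_def}. For a linear code it is equivalent to the ``totally invertible'' condition on $M$: for every $k \le \min(n,m)$ and every $S \subseteq [n]$, $T \subseteq [m]$ with $|S| = |T| = k$, the $k \times k$ submatrix $M_{T,S}$ is non-singular. Indeed, if some $x \neq y$ with $\dist(x,y) = t$ violated \eqref{equ:sc_induced_def}, then $M(x-y)$ would vanish on some $t$-subset $T$, so $M_{T, \supp(x-y)}$ would annihilate the nonzero vector $(x-y)|_{\supp(x-y)}$; the converse is similar. It therefore suffices to exhibit a single assignment of the $w_e$ that simultaneously makes all $\binom{n}{k}\binom{m}{k}$ submatrices non-singular for every $k$.

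The technical core, and what I expect to be the main obstacle, is verifying that $\det M_{T,S}$, viewed as a polynomial in $\{w_e\}$, is not identically zero. By definition of an $(n,m)$-superconcentrator, there exist $k$ vertex-disjoint directed paths $P_1, \ldots, P_k$ from $S$ to $T$, say realizing a bijection $\sigma: S \to T$ with $P_j$ going from $s_j$ to $\sigma(s_j)$. I specialize $w_e := 1$ for every $e \in \bigcup_j P_j$ and $w_e := 0$ for every other edge. Because the $P_j$ are vertex-disjoint (hence edge-disjoint), the only surviving directed path from $s_j$ to any $t \in T$ under this specialization is $P_j$ itself: any alternative would have to switch between two disjoint $P_\ell$'s through a shared vertex that does not exist. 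Hence $M_{T,S}$ becomes a signed permutation matrix and $\det M_{T,S} = \pm 1 \neq 0$. (Equivalently, the Lindstr\"om--Gessel--Viennot lemma shows that each vertex-disjoint path system contributes a distinct squarefree monomial, so no cancellation can occur.) Consequently $\det M_{T,S}$ is a nonzero polynomial of total degree at most $kd$, where $d$ is the depth of $G$.

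To close the argument I apply Schwartz--Zippel with a union bound. For a uniformly random assignment of the $w_e$ over $\mathbb{F}$, each fixed $(S,T)$-pair is bad with probability at most $kd / |\mathbb{F}|$, so
\[
\Pr\bigl[\text{some } M_{T,S} \text{ is singular}\bigr] \;\le\; \sum_{k=1}^{\min(n,m)} \binom{n}{k}\binom{m}{k} \cdot \frac{kd}{|\mathbb{F}|} \;\le\; \frac{d \cdot \min(n,m) \cdot 2^{n+m}}{|\mathbb{F}|}.
\]
Taking $|\mathbb{F}| = d \cdot 2^{\Omega(n+m)}$ drives this below $1$, so some assignment works; the resulting linear circuit has a totally invertible transfer matrix and therefore encodes a superconcentrator-induced code satisfying \eqref{equ:sc_induced_def}.
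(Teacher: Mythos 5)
Your proof is correct and follows essentially the same route as the paper: reduce \eqref{equ:sc_induced_def} to total invertibility of the transfer matrix (Lemma \ref{lem:linear_algebra_cha}), show each $\det M_{T,S}$ is a nonzero polynomial of degree $\le kd$ by specializing the edge weights to $1$ on a system of $k$ vertex-disjoint paths (Claim \ref{claim:det0}), and then apply Schwartz--Zippel with a union bound over all $\binom{n}{k}\binom{m}{k}$ square submatrices (Lemma \ref{lem:sc_codes_on_sc}). Your Lindstr\"om--Gessel--Viennot remark is a nice added perspective, but the underlying argument is the same.
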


We also observe that any arithmetic circuit, linear or nonlinear, encoding a code $C: \mathbb{F}^n \to \mathbb{F}^m$ satisfying \eqref{equ:sc_induced_def}, viewed as a graph, must be a superconcentrator.

The proof of Theorem \ref{thm:sc_code_informal} relates the connectivity properties with the rank of a matrix, and uses the Schwartz-Zippel lemma to estimate the rank of certain submatrices; these techniques are widely used, for example, in \cite{CCL13, Lov18}. In addition, the idea of assigning uniform random coefficients (in a finite field) to edges, to form linear circuits, has appeared before in e.g. network coding \cite{ACL+00, LYC03}.
The question we study is akin to a higher-depth version of the GM-MDS type questions about matrices with restricted support \cite{Lov18}.

Observe that any code satisfying the distance inequality \eqref{equ:sc_induced_def} is a good code. The existence of depth-$d$ size-$O(\lambda_d(n) \cdot n)$ superconcentrators \cite{DDP+83, AP94}, for any $d \ge 3$, immediately implies the existence of depth-$d$ (linear) arithmetic circuits of size $O(\lambda_d(n) \cdot n)$ encoding good codes \emph{over large finite field}.

In a subsequent work \cite{Li23}, inspired by this connection and using similar techniques, the second author proved that any $(n, m)$-superconcentrator can compute the shares of an $(n, m)$ linear threshold secret sharing scheme. In other words, any $(n, m)$-\superconcentrator code induces an $(n, m)$ linear threshold secret sharing scheme. Results in \cite{Li23} can be viewed as an application of \superconcentrator codes.

\subsection{Circuit construction techniques}
In terms of techniques, G\'{a}l \emph{et al.} proposed the notion of \emph{range detectors} (Definition \ref{def:range_detector}). They use probabilistic methods to prove certain linear-size depth-1 range detectors \emph{exist} and then compose the range detectors \emph{recursively} to construct circuits encoding good codes. We improve the recursive construction by tuning the parameters and changing the way range detectors are composed (in the inductive step), which leads to an \emph{absolute constant} in the size bound $O(\lambda_d(n)\cdot n)$, eliminating the dependency on $d$. By utilizing a property of the inverse Ackermann function, we immediately obtain circuits of depth $\alpha(n)$ and size $O(n)$, avoiding a size-reduction argument.

Let us give an overview of the construction, highlighting the difference between G\'{a}l \emph{et al.}'s. First, we distill some notions to facilitate the construction, each of which is either explicit or present in essence in \cite{GHK+12}.
\begin{itemize}
    \item Partial good code (i.e., Definition \ref{def:pgc}), or PGC for short: when the weight of an input is within a certain range, a PGC outputs a codeword whose relative weight is above an absolute constant.
    \item Condenser: a condenser function reduces the number of inputs while preserving the \emph{minimum} absolute weight. Using probabilistic arguments, G\'{a}l \emph{et al.} proved that certain \emph{unbalanced unique-neighbor expanders} exist, and converted the expanders to linear-size condensers, that is, $\left(n, \lfloor \frac{n}{r} \rfloor, s, \frac{n}{r^{1.5}}, s\right)$-range detectors. We reuse their condensers.
    \item Amplifier: these amplify the number of inputs while preserving the \emph{relative distance}.  G\'{a}l \emph{et al.} uses a probabilistic argument and applies the Chernoff bound to show linear-size amplifiers exist (e.g., in the proof of Theorem 22 and Lemma 26). We choose to decouple the argument by using the disperser graphs given by \cite{RT00}, choosing the coefficients at random, and then applying a union bound. 

    \item Rate booster: these do not change the number of inputs, but can boost the rate close to the Gilbert-Varshamov bound. Lemma 17 in \cite{GHK+12} embodies a rate booster, although a not fully optimized one; G\'{a}l \emph{et al.} also state\footnote{In the conference version (COCOON 2023), we mistakenly stated that such rate-boosting is not discussed by G\'{a}l \emph{et al.} (based on non-journal versions of the paper), but in fact it is discussed in Section 4 of the journal version.} that a tighter analysis would approach the Gilbert-Varshamov bound. Using a probabilistic argument similar to that with the amplifiers and to the argument in \cite{GHK+12}, we provide a verification of this claim.
    \item Composition Lemma: this tool allows us to combine $h$ partial good codes into a larger one, at the cost of increasing the depth by 1, and the size by $O(n)$. Furthermore, if each partial good code has bounded output fanin $O(1)$, then one can \emph{collapse} the last layer at the cost of increasing the size by $O(hn)$. The composition idea was implicit in G\'{a}l \emph{et al.}'s (e.g., in Theorem 22 and Lemma 26 in \cite{GHK+12}).
\end{itemize}

With the these concepts, the task is to construct the circuits using these building blocks. Our analysis framework is also inspired by \cite{DDP+83}; we show the minimal size of a depth-$d$ $(n, r, s)$-PGC satisfies a similar recursion as that of \emph{partial superconcentrators}. For the base case when the depth is small (i.e., $d \le 4$), we completely reuse G\'{a}l \emph{et al.}'s construction. In the key inductive step, G\'{a}l \emph{et al.} construct a depth-$(d+2)$ circuits encoding an $\left(n, \frac{n}{r}, n\right)$-PGC by composing a number of, say $h$, PGCs. \emph{Assuming} each smaller PGC is of size $O(n)$, the total size would be $O(hn)$. They let
\begin{equation}
\label{equ:Gal_param}
    \begin{cases}
      k_1 & =\,  \min\{r, \lceil m^{3/4} \rceil \}\\
      k_{i+1} & =\, \lambda_d(k_i)^3,
    \end{cases}       
\end{equation}
where $h$ is the least integer such that $k_h$ is below a particular universal constant. Each inner part is an $\left(\left\lfloor\frac{n}{\lambda_{d}(k_i)^2} \right\rfloor, \frac{n}{k_i}, \frac{n}{\lambda_{d}(k_i)^2} \right)$-PGC of depth $d$ and size $O_d(n)$, topped by a condenser, appended by an ampilifier. Then they argue that $h = 2\lambda_{d+2}(n) + O_d(1)$, and thus the total size of the depth-$(d+2)$ circuit becomes $O_{d+2}(\lambda_{d+2}(n)\cdot n)$. They did not mention the constant in $O_d(1)$; if one works out the constant straightforwardly, it is a very fast-growing constant depending on $d$.

Our remedy is to choose the parameters and compose the PGCs differently in the inductive construction. To obtain a depth-$(d+2)$ circuit encoding an $\left(n, \frac{n}{r}, n\right)$-PGC, we compose at most $\lambda_{d+2}(n)$ PGCs that are of the parameter $\left(n, \frac{n}{A^{(i)}_{k-1}(c_0)}, \frac{n}{A^{(i-1)}_{k-1}(c_0)}\right)$, and of size $O(n)$, where $1 \le i \le \lambda_{d+2}(n)$. In addition, we ensure that each PGC has bounded output fanin $O(1)$, which allows us to collapse the last layer in the composition lemma without blowing up the size. Let $r = A^{(i-1)}_{k-1}(c_0)$. To construct a depth-$(d+2)$ $\left(n, \frac{n}{A_{k-1}(r)}, \frac{n}{r}\right)$-PGC, we cannot reduce it to a depth-$d$ PGC directly, due to the parameter restriction in the condenser. (Recall that the linear-size condenser constructed is an $\left(n, \lfloor \frac{n}{r} \rfloor, s, \frac{n}{r^{1.5}}, s\right)$-range detector, not an $\left(n, \lfloor \frac{n}{r} \rfloor, s, \frac{n}{r}, s\right)$-range detector.) We work around this by composing two PGCs: an $\left(n, \frac{n}{A(k-1, r)}, \frac{n}{4r^2}\right)$-PGC, and an $\left(n, \frac{n}{4r^2}, \frac{n}{r}\right)$-PGC, where each is of linear size, and have bounded output fanin $O(1)$. The former can be constructed by a linear-size condenser, an $\left(\left\lfloor \frac{n}{2r} \right\rfloor, \frac{n}{A(k-1, r)}, \frac{n}{2r}\right)$-PGC of depth $d$, and an amplifier; the latter we construct directly by a simple argument. This composition is inspired by the superconcentrator construction by Dolev \emph{et al.} \cite{DDP+83}, who constructed superconcentrators of depth $d$ and size $O(\lambda_d(n) \cdot n)$. Indeed, the encoding circuits \emph{have to} resemble the superconcentrators, since it was known that the circuits must be a (weak) superconcentrators.

The above sketch leads to a construction of depth-$d$ size-$O(\lambda_d(n) \cdot n)$ circuits encoding good codes, where $d$ is not necessarily a constant; by letting $d = \alpha(n)$, one immediately gets depth-$\alpha(n)$ size-$O(n)$ circuits. This improves the analysis, getting rid of an additive $O(\log \alpha(n))$ factor in the depth (compared with Corollary 32 in \cite{GHK+12}, or Corollary 1.1 in \cite{DDP+83}). But a tight analysis to prove this requires one small new ingredient compared to prior works. Specifically, we observe that, if $\lambda_d(n) \le d$, then $\lambda_{d+2}(n) = O(1)$ (See Proposition A.2 (ii) for details.) The above observation can also improve the depth bound on the linear-size superconcentrators in \cite{DDP+83} from $\alpha(n) + O(\log \alpha(n))$ to $\alpha(n)$ (without change in the construction). In contrast, G\'{a}l \emph{et al.} deployed a different strategy to achieve linear size. Having constructed an encoding circuit of depth $d = O(1)$ and size $O_d(\lambda_d(n)\cdot n)$, they then apply a size-reduction transformation $O(\log \lambda_d(n))$ times, a transformation which reduces a size-$m$ instance to a size $m/2$ instance with additional cost $O(m)$ wires. In this way they obtain a circuit of depth $2d+O(\log \lambda_d(n))$ and size $O_d(n)$.

We verify our codes can obtain any constant rate and constant relative distance within the Gilbert-Varshamov bound, by using a disperser graph at the bottom layer and collapsing that layer afterward (for linear circuits). A similar rate-boosting by probabilistic construction is described in \cite{GHK+12}; see their (quantitatively somewhat weaker) Lemma 17 and Corollary 18 and the comments following, stating an improved analysis is possible.

\section{Inverse Ackermann functions}
\begin{definition}
\label{def:lambda}
(Definition 2.3 in \cite{RS03}) For a function $f$, define $f^{(i)}$ to be the composition of $f$ with itself
$i$ times. For a function $f:\mathbb{N} \to \mathbb{N}$ such that $f(n) < n$ for all $n > 0$, define
\[
f^*(n) \,=\, \min \{ i : f^{(i)}(n) \le 1\}.
\]

 Let
\begin{eqnarray*}
\lambda_1(n) & \,=\, & \lfloor \sqrt{n} \rfloor \ , \\
\lambda_2(n) & \,=\, & \lceil \log n \rceil \ , \\
\lambda_d(n) & \,=\, & \lambda^*_{d-2}(n) \ .
\end{eqnarray*}
\end{definition}

As $d$ gets larger, $\lambda_d(n)$ becomes extremely slowly growing, for example, $\lambda_3(n) = \Theta(\log\log n)$, $\lambda_4(n) = \Theta(\log^* n)$, $\lambda_5(n) = \Theta(\log^* n)$, etc.

We define a version of the inverse Ackermann function as follows.

\begin{definition} [Inverse Ackermann Function] For any positive integer $n$, let
\[
\alpha(n) \,=\, \min\{\text{even } d: \lambda_d(n) \le 6 \}.
\]
\end{definition}

There are different variants of the inverse Ackermann function; they differ by at most a multiplicative constant factor.

We need the definition of the Ackermann function. We put some relevant properties in Appendix \ref{app:inv_Ack_properites}.

\begin{definition}
\label{def:acker}
(Ackermann function \cite{Tar75, DDP+83}) Define
\begin{equation}
\begin{cases}
	A(0, j) \,=\, 2j, & \text{for $j \ge 1$}\\
	A(i, 1) \,=\, 2, & \text{for $i \ge 1$}\\
	A(i, j) \,=\, A(i-1, A(i, j-1)), & \text{for $i \ge 1, j \ge 2$}.
\end{cases}
\end{equation}
\end{definition}

The Ackermann function grows rapidly. From the definition, one can easily verify that $A(0, i) = 2i$, $A(1, i) = 2^i$ and
$
A(2, i)=
\begin{matrix}
\underbrace{2^{2^{{}^{.\,^{.\,^{.\,^2}}}}}} & 
\\  
i\mbox{ copies of }2
    &
\end{matrix}.
$
For notational convenience, we often write $A(i, j)$ as $A_i(j)$.

\section{Upper bound}

In this section, we prove Theorem \ref{thm:depth_ub}. That is, we non-explicitly construct, for any rate $r \in (0, 1)$ and relative distance $\delta \in (0, \frac{1}{2})$ satisfying $r < 1 - h(\delta)$, circuits encoding error-correcting codes $C:\{0, 1\}^n \to \{0, 1\}^{\lfloor n/r \rfloor}$ with relative distance $\delta$, where the circuit is of size $O_{r, \delta}(\lambda_d(n) \cdot n)$ and depth $d$. In particular, when the depth $d = \alpha(n)$, the circuit is of size $O_{r, \delta}(n)$.

First, we construct circuits encoding codes $C:\{0, 1\}^n \to \{0, 1\}^{32n}$ with relative distance $\frac{1}{8}$, where the constants $32$ and $\frac{1}{8}$ are picked for convenience. Then, we verify that one can boost the rate and the distance to achieve the Gilbert-Varshamov bound (without increasing the depth of the circuit).

Note that random linear codes achieve the Gilbert-Varshamov bound. However, circuits encoding random linear codes have size $O(n^2)$. In contrast, our circuits have size $O(n)$. Our circuits consist of XOR gates only; we call these \emph{linear circuits} hereafter. We point out that the construction generalizes to any finite field, where XOR gates are replaced by addition gates (over that finite field).

Recall that let $S_d(n)$ denote the minimum size of a depth-$d$ linear circuit encoding a code $C:\{0, 1\}^n \to \{0, 1\}^{32n}$ with distance $4n$.

\begin{definition} \label{def:pgc} (Partial good code) $C: \{0, 1\}^n \to \{0, 1\}^{32n}$ is called \emph{$(n, r, s)$-partial good code}, or \emph{$(n, r, s)$-PGC} for short, if for all $x \in \{0, 1\}^n$ with $\wt(x) \in [r, s]$, we have $\wt(C(x)) \ge 4n$.
\end{definition}

Denote by $S_d(n, r, s)$ the minimum size of a linear circuit of depth-$d$ that encodes an $(n, r, s)$-PGC, where $r, s$ are real numbers.

\begin{definition} \cite{Sip88, CW89} Bipartite graph $G = (V_1 = [n], V_2 = [m], E)$ is a \emph{$(k, \epsilon)$-disperser graph} if for any $X \subseteq V_1$ with $|X| \ge k$, $|\Gamma(X)| \ge (1 - \epsilon)m$.
\end{definition}

We rely on the following probabilistic construction of dispersers.

\begin{theorem}
\label{thm:prob_disperser}
(Theorem 1.10 in \cite{RT00}) For every $n \ge k > 1$, $m \ge 1$, and $\epsilon > 0$, there exists a $(k, \epsilon)$-disperser graph $G = (V_1 = [n], V_2 = [m], E)$ with left degree
\[
D \:=\: \left\lceil \frac{1}{\epsilon} \left(\ln\left(\frac{n}{k}\right)+1\right) + \frac{m}{k}\left(\ln\left(\frac{1}{\epsilon}\right) + 1\right) \right\rceil.
\]
	
\end{theorem}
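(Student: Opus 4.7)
The plan is to establish existence by the probabilistic method. I would sample a random bipartite graph $G$ on $V_1 = [n]$ and $V_2 = [m]$ by, independently for each $u \in V_1$, selecting $D$ neighbors uniformly at random from $V_2$ with replacement. (Any multi-edges that arise only shrink $|\Gamma(X)|$, so a simple graph obtained by collapsing multi-edges is at least as good a disperser, and the probability computations become fully independent across edges.) A graph fails to be a $(k, \epsilon)$-disperser iff there exist $X \subseteq V_1$ with $|X| \ge k$ and $Y \subseteq V_2$ with $|Y| = \lceil (1-\epsilon) m \rceil$ such that $\Gamma(X) \subseteq Y$; by monotonicity in $X$, it suffices to union-bound over $|X| = k$ exactly.

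Fix such a pair $(X, Y)$. Each of the $kD$ random edges emanating from $X$ lands in $Y$ independently with probability $|Y|/m \le 1 - \epsilon$, so
\[
\Pr[\Gamma(X) \subseteq Y] \,\le\, (1 - \epsilon)^{kD} \,\le\, e^{-\epsilon k D},
\]
using the elementary inequality $\ln\frac{1}{1-\epsilon} \ge \epsilon$. A union bound over the $\binom{n}{k}\binom{m}{\lceil \epsilon m \rceil}$ choices of $(X,Y)$, combined with the standard estimates $\binom{n}{k} \le (en/k)^k$ and $\binom{m}{\lceil \epsilon m\rceil} \le (e/\epsilon)^{\epsilon m}$, bounds the total failure probability by
\[
\exp\!\bigl( k \ln(n/k) + k + \epsilon m \ln(1/\epsilon) + \epsilon m - \epsilon k D \bigr).
\]

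It then suffices to pick $D$ so that this exponent is strictly negative, i.e.,
\[
\epsilon k D \,>\, k\bigl(\ln(n/k) + 1\bigr) + \epsilon m \bigl(\ln(1/\epsilon) + 1\bigr),
\]
which after dividing by $\epsilon k$ is exactly the formula claimed in the theorem; the ceiling then makes $D$ an integer. With this choice, the failure probability is strictly below $1$, so some realization yields a $(k,\epsilon)$-disperser with left-degree $D$. The argument is entirely routine, so there is no real obstacle: the only care needed is to use the tight bound $\ln\frac{1}{1-\epsilon} \ge \epsilon$ (rather than a looser Bernoulli-type estimate) and to bound $\binom{m}{\epsilon m}$ by $(e/\epsilon)^{\epsilon m}$ rather than by $2^m$, which is what produces the $\ln(1/\epsilon)$ factor instead of a much larger $\ln m$ factor in the stated degree.
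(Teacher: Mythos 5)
The paper does not supply a proof of this statement; it is quoted directly as Theorem~1.10 of Radhakrishnan and Ta-Shma \cite{RT00}. Your argument is the standard probabilistic-method derivation and is essentially what that reference does: sample a left-$D$-regular random bipartite (multi)graph, cover the failure event by pairs $(X,Y)$ with $\Gamma(X)\subseteq Y$, bound each pair's probability by $(1-\epsilon)^{kD}\le e^{-\epsilon kD}$, union-bound with $\binom{n}{k}\le(en/k)^k$ and the analogous bound on the binomial in $m$, and solve for the threshold $D$, which lands exactly on the stated formula. The one detail worth tightening is the size of $Y$: to simultaneously guarantee $|Y|/m\le 1-\epsilon$ (so each of the $kD$ edges lands in $Y$ with probability at most $1-\epsilon$) and cover every $X$ with $|\Gamma(X)|<(1-\epsilon)m$, the correct choice is $|Y|=\lceil(1-\epsilon)m\rceil-1$, equivalently complement size $\lfloor\epsilon m\rfloor+1$; your choice $|Y|=\lceil(1-\epsilon)m\rceil$ has $|Y|/m>1-\epsilon$ whenever $(1-\epsilon)m$ is not an integer, which technically breaks the first step. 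This is cosmetic and does not change the computation or the final degree, so overall your derivation matches the cited source's approach.
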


A \emph{probabilistic (linear) circuit} is a linear circuit with random coefficients. That is, each XOR gate computes $c_1 x_1 + c_2 x_2 + \ldots + c_m x_m$, where $x_1, x_2, \ldots, x_m$ are incoming wires, coefficients $c_1, c_2, \ldots, c_n \in \{0, 1\}$ are chosen independently and uniformly at random. A probabilistic circuit can be viewed as a distribution of linear circuits. Our work in what follows relies on the following simple observation: if one of the incoming wires is nonzero, the output is uniformly distributed in $\{0, 1\}$.

\begin{figure}[ht]
\includegraphics[scale=0.4]{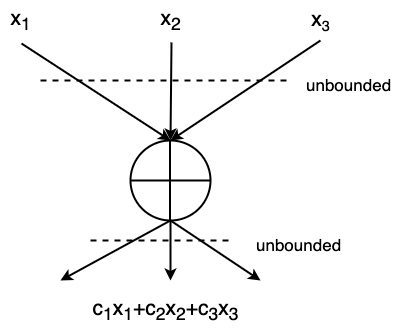}
\centering
\caption{Gate in a probabilistic circuit}
\label{fig:xor_gate}
\end{figure}

\begin{lemma}
\label{lem:rate_booster}	
(Rate booster) Let $\delta \in (0, 32)$, and let $c > 1$, $\gamma \in (0, \frac{1}{2})$ be such that $h(\gamma) < 1 - \frac{1}{c}$.	 For sufficiently large $n \ge N(c, \gamma)$, there exists a probabilistic circuit $C : \{0, 1\}^{32n} \to \{0, 1\}^{\lfloor cn \rfloor}$ of depth 1 and size $O_{\delta, c, \gamma}(n)$ such that
\[
\Pr_C\left[\wt(C(x)) \ge \gamma \cdot \lfloor cn \rfloor\right] \:>\: 1 - 2^{-n}
\]
for any $x \in \{0, 1\}^{32n}$ with $\wt(x) \ge \delta n$. Moreover, the output gates have bounded fanin $O_{\delta, c, \gamma}(1)$.
\end{lemma}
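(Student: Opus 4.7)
The plan is a depth-1 random linear construction: use a sparse disperser graph from Theorem~\ref{thm:prob_disperser} to wire the inputs to the output gates, and put a random-coefficient XOR at each output. The disperser guarantees that for every input $x$ of weight at least $\delta n$, almost every output gate reads at least one nonzero coordinate of $x$, and then the random XOR trick turns $C(x)_v$ into an independent uniform $\{0,1\}$ bit. A Chernoff tail bound drives the failure probability below $2^{-n}$ exactly under the hypothesis $h(\gamma) < 1 - 1/c$.

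Concretely, put $m := \lfloor cn \rfloor$ and invoke Theorem~\ref{thm:prob_disperser} with left side of size $32n$, right side of size $2m$, threshold $k = \delta n$, and a small constant $\epsilon > 0$ to be fixed; this produces a $(\delta n, \epsilon)$-disperser $G$ of constant left degree $D = O_{c,\delta,\epsilon}(1)$. The total number of edges is $32nD$, so the average right degree is $16D/c$, and by Markov at least half of the right vertices have right degree at most $K := 32D/c$. Let $V_2'$ be any $m$ such low-degree right vertices; for any $X \subseteq [32n]$ with $|X| \ge \delta n$,
\[
|\Gamma_G(X) \cap V_2'| \;\ge\; |\Gamma_G(X)| - (|V_2| - |V_2'|) \;\ge\; (1-\epsilon)\cdot 2m - m \;=\; (1-2\epsilon)\,m.
\]
Define $C$ by placing at each $v \in V_2'$ one XOR gate that computes a uniform random $\{0,1\}$-linear combination of its $\le K$ input neighbors; then $C$ has depth $1$, at most $mK = O_{c,\delta,\gamma}(n)$ wires, and each output gate has fanin at most $K = O_{c,\delta,\gamma}(1)$, as required.

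For the analysis, fix any $x$ with $\wt(x) \ge \delta n$ and let $S = \Gamma_G(\supp(x)) \cap V_2'$, so $|S| \ge (1-2\epsilon) m$. At each $v \in S$ the gate XORs in at least one independent uniform coefficient, so $C(x)_v$ is Bernoulli$(1/2)$; for $v \notin S$ one has $C(x)_v = 0$; and the coefficients on disjoint edge sets are mutually independent. Hence $\wt(C(x))$ is a sum of $|S|$ i.i.d.\ Bernoulli$(1/2)$ bits, and the entropy form of Chernoff's inequality yields, once $\epsilon$ is small enough that $\gamma/(1-2\epsilon) < 1/2$,
\[
\Pr_C\bigl[\wt(C(x)) < \gamma m\bigr] \;\le\; 2^{-|S|\,(1 - h(\gamma m/|S|))} \;\le\; 2^{-(1-2\epsilon)\,m\,(1 - h(\gamma/(1-2\epsilon)))}.
\]
The hypothesis $h(\gamma) < 1 - 1/c$ is precisely $c(1-h(\gamma)) > 1$; by continuity of $h$ one can fix $\epsilon = \epsilon(c,\gamma) > 0$ so small that $(1-2\epsilon)\,c\,(1 - h(\gamma/(1-2\epsilon))) > 1$ still holds, and then the exponent above exceeds $n$ for all sufficiently large $n$.

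The main obstacle is the three-way tradeoff between the disperser parameter $\epsilon$ (smaller is better for the Chernoff exponent), the left degree $D$ (which after truncation becomes the output-fanin bound $K$, and must stay an absolute constant), and the target failure probability $2^{-n}$. It is essential that the lemma asks only for a \emph{per-$x$} probability bound: a union bound over all $2^{32n}$ weight-$\ge\delta n$ inputs would force the much stronger hypothesis $h(\gamma) < 1 - 33/c$, whereas the per-$x$ formulation is matched exactly by the exponent $cn(1-h(\gamma))$ that our Chernoff estimate produces in the limit $\epsilon \to 0$.
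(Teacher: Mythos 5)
Your proof is correct and takes essentially the same approach as the paper's: a random-coefficient depth-1 linear circuit over a sparse disperser, truncated to discard high-degree right vertices to keep output fanin bounded, followed by the entropy-form Chernoff estimate and a continuity argument for choosing $\epsilon$. The only cosmetic difference is that you invoke the disperser with right side $2m$ and keep half via Markov, whereas the paper uses right side $(c'+1)n$ and discards only $n$ vertices; both give the same $O_{\delta,c,\gamma}(1)$ fanin bound and $O_{\delta,c,\gamma}(n)$ wires.
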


\begin{proof} Let $c' = \frac{\lfloor cn \rfloor}{n}$. Let $n$ be sufficiently large such that $h(\gamma) < 1 - \frac{1}{c'}$.

By Theorem \ref{thm:prob_disperser}, there exists a $(\delta n, \epsilon)$-disperser graph $G' = (V_1 = [32n], V'_2 = [(c'+1)n], E)$, where $\epsilon > 0$ is small constant to be determined. After removing $n$ vertices in $V'_2 = [(c'+1)n]$ with the largest degree, we are left with a bipartite graph $G = (V_1 = [32n], V_2=[c'n], E)$ such that
\begin{equation}
\label{equ:G_exp}	
|\Gamma(X)| \:\ge\: c'n - \epsilon (c'+1)n \:=\: c'n \left(1 -\epsilon - \frac{\epsilon}{c'}\right)
\end{equation}
for all $X \subseteq V_1$ of size at least $\delta n$. Moreover, the degree of the vertices in $V_2$ is bounded by $32 D$, where
$
D = \left\lceil \frac{1}{\epsilon} (\ln(\frac{32}{\delta})+1) + \frac{c'+1}{\delta}(\ln(\frac{1}{\epsilon}) + 1)\right\rceil = O_{\delta, c, \epsilon}(1).
$

We transform bipartite graph $G$ into a probabilistic circuit by identifying vertices in $V_1$ with $32n$ inputs, denoted by $x_1, x_2, \ldots, x_{32n}$, and replacing each vertex in $V_2$ by an XOR gate with random coefficients. That is, if vertex $j \in V_2$ is connected to $i_1, i_2, \ldots, i_k \in V_1$, then vertex $j$ computes an output
\[
y_j \:=\: c_{j, i_1} x_{i_1} + c_{j, i_2} x_{i_2} + \ldots + c_{j, i_k} x_{i_k} \pmod 2,
\]
where $c_{j, i_1}, c_{j, i_2}, \ldots, c_{j, i_k} \in \{0, 1\}$ are chosen independently and uniformly at random.

For any $x \in \{0, 1\}^{32n}$ with $\wt(x) \ge \delta n$, let $X = \supp(x)$. By the expansion property \eqref{equ:G_exp}, we have $|\Gamma(X)| \ge c'n(1-\epsilon')$, where $\epsilon' = (1 + \frac{1}{c'})\epsilon$. For every $j \in \Gamma(X)$, vertex $j$ is incident to at least one vertex in $X$; pick an arbitrary neighbor, denoted by $i \in X$, and leave coefficient $c_{i, j}$ unfixed. In other words, we fix all the coefficients except those $|\Gamma(X)|$ coefficients (that are incident to $X$). Observe that $C(x)$, restricted to $\Gamma(X)$, is uniformly distributed in $\{0, 1\}^{|\Gamma(X)|}$. Thus,
\begin{eqnarray}
\Pr_C[\wt(C(x)) < \gamma c'n] 
& \le & \frac{{\lceil c'n(1-\epsilon') \rceil \choose 0} + {\lceil c'n(1-\epsilon') \rceil \choose 1} + \ldots + {\lceil c'n(1-\epsilon') \rceil \choose \lfloor \gamma c'n \rfloor}}{2^{\lceil c'n(1-\epsilon') \rceil}} \nonumber \\
& \le & 2^{-c'n(1-\epsilon')(1 - h(\frac{\gamma}{1 - \epsilon'}))}. \label{equ:fail_prob}
\end{eqnarray}

Recall that $\epsilon' = (1 + \frac{1}{c'})\epsilon$. Let $\epsilon > 0$ be sufficiently small such that $c'(1-\epsilon')(1 - h(\frac{\gamma}{1 - \epsilon'})) > 1$, the probability \eqref{equ:fail_prob} would be less than $2^{-n}$. (Such $\epsilon' > 0$ clearly exists, since $h(\gamma) < 1 - \frac{1}{c'} \Leftrightarrow c'(1-h(\gamma)) > 1$).
\end{proof}

Next, a ``composition lemma'' allows us to combine several partial good codes into a larger one. The lemma is already implicit in \cite{GHK+12}.

\begin{lemma}
\label{lem:composition}
(Composition Lemma) For any $n, d \ge 1$, $\ell \ge 2$, and any $1 \le r_1 \le r_2 \le \ldots r_{\ell+1} \le n$, assume there exist linear circuits of size $s_i$ and depth $d$ encoding some $(n, r_{i}, r_{i+1})$-PGC, $i = 1, 2, \ldots, \ell$. We have
\begin{equation}
	S_{d+1}(n, r_1, r_{\ell+1}) \,\,\le\,\, \sum_{i = 1}^\ell s_i + O(n).
\end{equation}

Moreover, if the output gates of the linear circuits encoding $(n, r_{i}, r_{i+1})$-PGCs have bounded fan-in $D$, for all $i = 1, 2, \ldots, \ell$, then we have
\begin{equation}
\label{equ:composition_depthd}
	S_{d}(n, r_1, r_{\ell+1}) \,\,\le\,\, \sum_{i = 1}^\ell s_i + O(D\ell n).
\end{equation}
In addition, the fanin of the output gates (of the depth-$d$ circuit encoding the $(n, r_1, r_{\ell+1})$-PGC) is bounded by $O(\ell D)$. 
\end{lemma}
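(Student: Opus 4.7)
My plan is to run $C_1, \ldots, C_\ell$ in parallel on a shared input $x$, producing intermediate outputs $y_i := C_i(x) \in \{0,1\}^{32n}$, and then attach a combiner layer that produces the final codeword $z \in \{0,1\}^{32n}$. The parallel portion contributes $\sum_i s_i$ wires at depth $d$; correctness hinges on the observation that any $x$ with $\wt(x) \in [r_1, r_{\ell+1}]$ lies in some sub-range $[r_j, r_{j+1}]$, so $\wt(y_j) \ge 4n$ is automatic for that $j$, and the combiner only has to transfer this guaranteed weight from the unknown $y_j$ onto $z$.

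For the first bound ($S_{d+1}$), I would attach a depth-$(d{+}1)$ layer of $32n$ XOR gates with independent uniform random coefficients $\alpha_{i,k} \in \{0,1\}$, computing $z_k := \bigoplus_{i=1}^\ell \alpha_{i,k}\, y_{i,k}$. For any fixed $x$ with $\wt(x) \in [r_j, r_{j+1}]$ and any $k \in \supp(y_j)$, the bit $z_k$ is uniform over $\{0,1\}$ and independent across such $k$ when conditioned on the remaining coefficients, so $(z_k)_{k \in \supp(y_j)}$ is a uniform random bit-string of length $\ge 4n$. A Chernoff bound followed by a union bound over the $\le 2^n$ inputs in the weight range $[r_1, r_{\ell+1}]$ then produces a deterministic fixing of the $\alpha$'s under which the PGC threshold is achieved simultaneously. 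This combiner uses at most $32n\ell$ wires in the new layer, which is $O(n)$ in the paper's shorthand, and adds exactly one level of depth. For the second bound ($S_d$), the bounded-fanin hypothesis lets me fold the combiner into the existing top layer: each $y_{i,k}$ is an XOR gate of fan-in $\le D$ at depth $d$, so substituting its inputs directly into the expression for $z_k$ yields a single depth-$d$ XOR gate of fan-in $\le D\ell$ with inputs at depth $d-1$, adding $O(D\ell n)$ wires and zero depth; the output gates inherit fan-in $O(D\ell)$, matching the final claim.

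The delicate step is the weight analysis: because $|\supp(y_j)| \ge 4n$, the naive expectation of the random XOR restricted to this support is only $2n$, half of the PGC target. I would close this gap in one of two standard ways: either enlarge the set of ``uniform'' coordinates to include every $k$ with some $y_{i,k} \ne 0$, so that more sub-PGC outputs contribute to the mean, or exploit the slack in the sub-PGC constructions, which routinely output codewords of weight well above the $4n$ minimum. Either refinement lets the Chernoff-plus-union-bound argument push through and deliver a valid depth-$(d{+}1)$ $(n, r_1, r_{\ell+1})$-PGC with the claimed size, and the depth-$d$ folded version then follows by simple gate merging as above.
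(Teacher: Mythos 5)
You have correctly identified the central difficulty—restricting the random XOR $z_k = \bigoplus_i \alpha_{i,k}\, y_{i,k}$ to $\supp(y_j)$ yields a uniform string of length $\ge 4n$ whose expected weight is only about $2n$, short of the $4n$ target—but neither of your two proposed fixes actually closes the gap. Enlarging the set of uniform coordinates to $\{k : \exists i,\, y_{i,k}\neq 0\}$ does not help in the worst case: that set can be exactly $\supp(y_j)$ (size $4n$) if all $C_i$ support sets coincide, leaving the expected weight at $2n$. And ``slack'' in the sub-PGCs is not available: Definition \ref{def:pgc} guarantees only $\wt(C(x))\ge 4n$, and a sub-PGC that outputs exactly $4n$ ones is a legal one, so you cannot rely on its weight exceeding the minimum. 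The paper's actual resolution is to stop aiming for $4n$ directly: it only proves $\wt(y)\ge n/2$ with overwhelming probability (trivially within reach of the $2n$ expectation), and then appends a separate ``rate booster'' circuit (Lemma \ref{lem:rate_booster} with $\delta=\tfrac12,\,c=32,\,\gamma=\tfrac18$) that maps any $32n$-bit string of weight $\ge n/2$ to a $32n$-bit string of weight $\ge 4n$ using $O(n)$ wires and one extra layer. This rate-booster step is the missing ingredient in your argument.

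There is also a depth/wire accounting issue entangled with the above. Your plan builds the combiner as a new layer, which already consumes depth $d{+}1$ and $\Theta(\ell n)$ wires (not $O(n)$; the $O(n)$ in the lemma is a genuine $O(n)$, not shorthand for $O(\ell n)$). Once you add the needed rate booster on top, you land at depth $d{+}2$, exceeding the claimed $d{+}1$. The paper avoids both costs by \emph{merging} the output gates of $C_1,\dots,C_\ell$ coordinatewise (with a random $\tfrac12$ probability of inclusion) rather than feeding them into fresh gates: this is a relabeling of gates that adds zero wires and zero depth, and the random merge decisions play exactly the role of your coefficients $\alpha_{i,k}$. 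The single added layer and the $O(n)$ extra wires in the $S_{d+1}$ bound then come entirely from the rate booster, and the ``moreover'' $S_d$ bound with $O(D\ell n)$ wires is obtained by collapsing that one $O(1)$-fanin booster layer into the depth-$d$ merged gates of fanin $\le \ell D$—close to your folding idea, but applied to the booster, not to a combiner that isn't there.
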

\begin{proof}
Denote the linear circuit encoding an $(n, r_i, r_{i+1})$-PGC by $C_i$, $i = 1, 2, \ldots, \ell$. Recall that each $C_i : \{0, 1\}^n \to \{0, 1\}^{32n}$ has $32n$ outputs, denoted by $w_1^{(i)}, w_2^{(i)}, \ldots, w_{32n}^{(i)}$.

Create $32n$ gates, denoted by $y_1, y_2, \ldots, y_{32n}$. Let 
\[
y_j \,\,=\,\, \sum_{i = 1}^\ell c_j^{(i)}w_j^{(i)} \pmod 2,
\]
where coefficients $c_j^{(i)} \in \{0, 1\}$ are chosen independently and uniformly at random. See Figure \ref{fig:composition} for illustration. 

\begin{figure}[ht]
\includegraphics[scale=0.4]{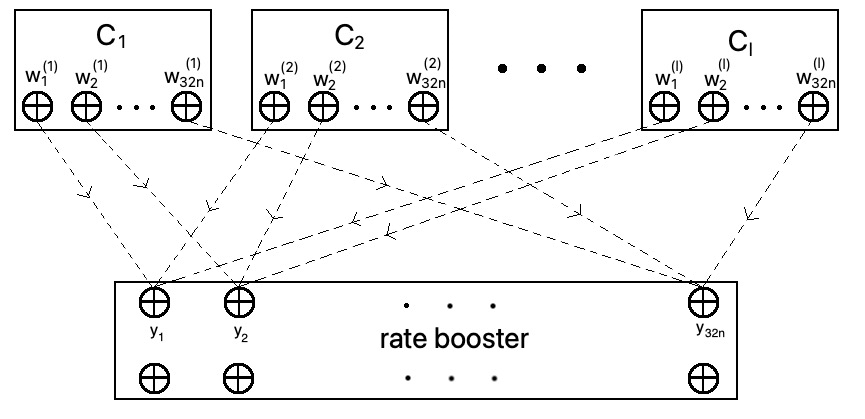}
\centering
\caption{Partial good codes composition}
\label{fig:composition}
\end{figure}

Figure \ref{fig:composition} is a bit misleading. Instead of creating $32n$ new gates $y_1, y_2, \ldots, y_{32n}$, we \emph{merge} $w_j^{(1)}, w_j^{(2)}, \ldots, w_j^{(\ell)}$, for each $j = 1, 2, \ldots, 32n$. Fixing $j$, independently for each $i \in \{1, 2, \ldots, \ell\}$, with probability $\frac{1}{2}$, merge $w_j^{(i)}$ with $y_j$, and with probability $\frac{1}{2}$, do nothing. As such, the depth of the circuit is still $d$, and the size is at most $\sum_{i = 1}^\ell s_i$.

Let us analyze the weight of $y = (y_1, y_2, \ldots, y_{32n}) \in \{0, 1\}^{32n}$. For any $x \in \{0, 1\}^n$ with $\wt(x) \in [r_1, r_{\ell+1}]$, there exists $k$ such that $\wt(x) \in [r_k, r_{k+1}]$. By the definition of PGC, $C_k(x)$ has weight at least $4n$. Fix all coefficients $c_j^{(i)}$ except for those $i = k$ and $j \in \supp(C_k(x))$, one can easily see that $\{y_j : j \in \supp(C_k(x))\}$ is uniformly distributed in $\{0, 1\}^{|\supp(C_k(x))|}$. Thus, for any fixed $x$, 
\begin{eqnarray*}
\Pr\left[\wt(y) < \frac{n}{2}\right] & \,\le\, & \frac{{4n \choose 0} + {4n \choose 1} + \ldots + {4n \choose \lfloor \frac{n}{2} \rfloor}}{2^{4n}} \\
	& \,\le\, & 2^{-4n(1 - h(\frac{1}{8}))} \,\,<\,\, 2^{-1.8n}.
\end{eqnarray*}
Taking a union bound over $x$, we know there exists an assignment of coefficients such that the weight of the output is always $\ge \frac{n}{2}$.

Finally, we boost the distance from $\frac{1}{2}n$  to $4n$ using a rate booster in Lemma \ref{lem:rate_booster}. Specifically, apply Lemma \ref{lem:rate_booster} with $\delta = \frac{1}{2}, c = 32, \gamma = \frac{1}{8}$. Adding the rate booster increases the depth by 1, and increases the size by $O(n)$.

For the ``moreover'' part, observe that the outputs in the rate booster have bounded fan-in $O(1)$. As in the use of rate boosting in \cite{GHK+12}, we collapse the last layer, where, originally, the last layer has bounded fan-in $O(1)$, and the second last layer has fan-in $\le \ell D$. As such, $O(n)$ wires in the rate booster will extend to $O(D\ell n)$ wires, which proves \eqref{equ:composition_depthd}. Furthermore, the output gates have fanin $O(\ell D)$. 
\end{proof}

\begin{definition} \label{def:range_detector} \cite{GHK+12} An \emph{$(m, n, \ell, k, r, s)$-range detector} is a linear circuit that has $m$ inputs, $n$ outputs, and on any input of Hamming weight between $\ell$ and $k$, it outputs a string with Hamming weight between $r$ and $s$. We omit the last parameter if $s = n$.
\end{definition}

\begin{lemma}
\label{lem:condenser}
 (Condenser, Lemma 23 in \cite{GHK+12}) \label{lem:range_detector} There exists a constant $c_0 \ge 6$ such that for any $c_0 \le r \le n$ and $1 \le s \le \frac{n}{r^{1.5}}$, where $n$ is an integer, $r, s$ are reals, there exists an
$
\left(n, \lfloor \frac{n}{r} \rfloor, s, \frac{n}{r^{1.5}}, s\right)\text{-range detector}
$
of depth 1 and size $6n$.
\end{lemma}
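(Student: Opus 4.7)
The plan is to construct, via the probabilistic method, a bipartite graph with a strong unique-neighbor expansion property, and to read it off as the desired depth-$1$ XOR circuit.

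First I would fix $D := 6$ and pick a random bipartite graph $G = ([n], [m], E)$ with $m := \lfloor n/r \rfloor$ by letting each of the $n$ left vertices independently choose $D$ right neighbors uniformly at random. The XOR circuit has $m$ output gates, each computing the parity of its left neighbors; the total number of wires is at most $Dn = 6n$. The key structural observation is that if a right vertex $j$ is a \emph{unique neighbor} of $\supp(x)$, meaning it has exactly one left neighbor in $\supp(x)$, then the $j$-th output equals the value of that unique neighbor, which is $1$. Hence it suffices to show that, with positive probability over $G$, every $S \subseteq [n]$ with $|S| \le n/r^{1.5}$ has at least $|S|$ unique right-neighbors: applied to $S = \supp(x)$ for any $x$ with $\wt(x) \in [s, n/r^{1.5}]$, this delivers output weight $\ge \wt(x) \ge s$, which is exactly the range-detector guarantee.

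Second I would establish the unique-neighbor property by a standard two-step argument. Fix $k \le n/r^{1.5}$ and a set $S$ of size $k$; the $Dk$ edges out of $S$ are i.i.d.\ uniform on $[m]$, and a failure to produce at least $k$ unique neighbors forces at least $\lceil (D-1)k/2 \rceil$ of those edges to participate in collisions (pairs of edges landing on a common right vertex). Since any prescribed set of $t$ disjoint collision pairs occurs with probability at most $m^{-t}$, summing over the choices of such pair systems yields a per-set failure probability of the form $(C D k / m)^{\Omega(k)}$ for some absolute $C$. The crucial quantitative input is $Dk/m \le Dr/r^{1.5} = D/\sqrt{r} \le 6/\sqrt{c_0}$, which can be made arbitrarily small by taking $c_0$ large.

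Finally, union-bounding over the $\binom{n}{k} \le (en/k)^k$ sets and summing $k$ from $1$ to $n/r^{1.5}$ keeps the total failure probability strictly below $1$, provided the per-set bound decays fast enough to absorb $(en/k)^k$. The main obstacle, and the calibration carried out in Lemma 23 of Găl \emph{et al.}, is choosing $D$ and $c_0$ simultaneously so that the exponent $\Omega(k)\log(c_0/D^2)$ in the per-set bound dominates $k\log(en/k)$ uniformly in $k \in [1,\, n/r^{1.5}]$; with $D = 6$ and a sufficiently large universal constant $c_0 \ge 6$ both endpoints of the range are handled and the argument closes, so a graph of the required type exists and its associated depth-$1$ XOR circuit is the claimed range detector.
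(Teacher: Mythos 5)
The paper does not actually prove this lemma; it imports it verbatim from G\'al \emph{et al.} (their Lemma 23), so there is no internal proof to compare you against. That said, your plan --- a random left-$D$-regular bipartite graph read off as a depth-$1$ XOR circuit, with correctness reduced to a unique-neighbor expansion property established by a first-moment union bound --- is consistent with how the present paper characterizes the construction ("unbalanced unique-neighbor expanders" converted to condensers).

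The quantitative calibration, however, does not close as sketched, and the repair you propose (enlarge $c_0$) pushes in the \emph{wrong} direction. For a set $S$ of size $k$ near the top of the range, $k\approx n/r^{1.5}$, the number of size-$k$ sets is $\binom{n}{k}\le(en/k)^k\ge(e\,r^{1.5})^k$, which \emph{grows} with $r$, while the per-set failure bound you get from counting $t\gtrsim (D-1)k/2$ repeated edge endpoints (each of conditional probability at most $Dk/m\le D/\sqrt r$) is $\bigl(O(D/\sqrt r)\bigr)^{(D-1)k/2}$. The $k$-th term of the union bound thus has base proportional to $r^{\,1.5-(D-1)/4}$; at $D=6$ this is $r^{0.25}$, which diverges and diverges faster the larger $c_0$ is. Even the sharper count $t\gtrsim(Dk-s)/2$ (you only need $\ge s$, not $\ge k$, unique neighbors) only improves the exponent to $1.5-D/4$, which is exactly $0$ at $D=6$ with an absolute constant well above $1$; closing the union bound this way seems to require $D\ge 8$, hence $8n$ wires rather than the claimed $6n$. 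There is also a secondary imprecision: $T$ ``repeats'' does not yield $T$ \emph{disjoint} collision pairs, so the $m^{-t}$ per-prescribed-pair estimate should instead be phrased via the sequential-repeat argument $\Pr[T\ge t]\le\binom{Dk}{t}(Dk/m)^t$. In short, the step ``with $D=6$ and sufficiently large $c_0$ the argument closes'' is the gap: the $\binom{n}{k}$ factor dominates, $c_0\to\infty$ hurts rather than helps, and either a larger degree, a different probabilistic device than a plain union bound, or a different random-graph model is needed to hit the stated $6n$.
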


The following lemma says there exist linear size ``amplifiers'' that amplify the number of outputs while preserving the relative distance.

\begin{lemma} [Amplifier]
\label{lem:amplifier} For any positive integers $n, m \ge 3n$, there exists an $(n, m, \frac{n}{8}, n, \frac{m}{8})$-range detector computable by a depth-1 linear circuit of size $O(m)$. Moreover, the fanin of the outputs is bounded by an absolute constant.
\end{lemma}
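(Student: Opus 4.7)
The proof will closely parallel that of the rate booster (Lemma~\ref{lem:rate_booster}): build a probabilistic depth-$1$ linear circuit from a slightly oversized disperser graph, then derandomize by a union bound over inputs.

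\textbf{Step 1 (disperser).} Apply Theorem~\ref{thm:prob_disperser} with $k = n/8$, right-side size $m + K$ (where $K = \Theta(m)$ is to be chosen), and a small constant $\epsilon > 0$, obtaining a $(n/8,\epsilon)$-disperser $G'$ from $[n]$ to $[m+K]$ with left degree
\[
D \;=\; O\!\left(\tfrac{1}{\epsilon} + \tfrac{m}{n}\log\tfrac{1}{\epsilon}\right).
\]
Because $m \ge 3n$, the total edge count $nD$ is $O(m)$.

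\textbf{Step 2 (trim for bounded output fanin).} Remove the $K$ right vertices of largest degree, keeping the other $m$ as the outputs. Ordering the right degrees $d_1 \ge d_2 \ge \cdots$, the identity $\sum_i d_i \le nD$ yields $d_{K+1} \le nD/(K+1)$. Set $K = \lceil nD/B\rceil$ for a constant $B$; since $nD/m = O(\tfrac{1}{\epsilon}+\log\tfrac{1}{\epsilon})$ is bounded independently of $n$ and $m$, one can choose $B$ as an absolute constant with $K \le m$, so every surviving output gate has fanin at most $B$. The disperser property degrades only mildly: for any $X \subseteq [n]$ with $|X| \ge n/8$, the trimmed $\Gamma(X)$ still has size at least $(1-\epsilon)(m+K) - K \ge (1-\epsilon')m$, where $\epsilon' \to 0$ as $\epsilon \to 0$.

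\textbf{Step 3 (random coefficients and union bound).} Replace each surviving right vertex by an XOR gate with independent uniform $\{0,1\}$ coefficients on its incident wires. Fix $x \in \{0,1\}^n$ with $\wt(x) \in [n/8, n]$ and, as in Lemma~\ref{lem:rate_booster}, freeze all coefficients except, for each $j \in \Gamma(\supp(x))$, one chosen coefficient connecting $j$ to $\supp(x)$. Then the outputs indexed by $\Gamma(\supp(x))$ become independent uniform bits, and the standard entropy tail bound yields
\[
\Pr\!\left[\wt(C(x)) < \tfrac{m}{8}\right] \;\le\; 2^{-(1-\epsilon')m\bigl(1 - h(\tfrac{1}{8(1-\epsilon')})\bigr)} \;\le\; 2^{-0.45\, m}
\]
for $\epsilon'$ sufficiently small, using $1 - h(1/8) > 0.45$. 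Taking a union bound over the at most $2^n$ inputs of weight in $[n/8,n]$ gives total failure probability at most $2^{n - 0.45 m} \le 2^{-0.35 n} < 1$, since $m \ge 3n$. Hence a deterministic setting of coefficients realizes the claim, and the resulting circuit has depth $1$, $O(m)$ wires, and output fanin at most $B$.

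\textbf{Main obstacle.} The only delicate point is tuning $\epsilon, \epsilon', B$ in tandem: trimming aggressively enough to enforce bounded output fanin (i.e., taking $B$ small) shrinks the surviving $\Gamma(X)$ and hence the binomial-tail exponent $(1-\epsilon')\bigl(1 - h(\tfrac{1}{8(1-\epsilon')})\bigr)$, which must exceed $1/3$ for the union bound against $2^n$ inputs to close under the constraint $m \ge 3n$. This is comfortably achievable because $1 - h(1/8) \approx 0.456 > 1/3$, leaving a constant-margin regime for any sufficiently small $\epsilon$.
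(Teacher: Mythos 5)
The proposal is correct and takes essentially the same approach as the paper: apply the Radhakrishnan--Ta-Shma disperser (Theorem~\ref{thm:prob_disperser}) with an oversized right side, trim the highest-degree right vertices to enforce bounded output fanin, attach random $\{0,1\}$ coefficients to form a probabilistic XOR circuit, and close with a binomial tail bound plus a union bound over the $2^n$ inputs. The only differences are cosmetic: the paper fixes concrete constants ($\epsilon=1/20$, right side $2m$, remove the top $m$, yielding a tail bound around $2^{-0.37m}$), whereas your parameterization leaves $\epsilon$ and $K$ generic and pushes the exponent to $0.45m$; either comfortably exceeds the $m/3$ threshold required since $m \ge 3n$. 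One minor presentational issue: your $K = \lceil nD/B\rceil$ is defined in terms of $D$, which in turn depends on the right-side size $m+K$; the paper sidesteps this circularity by simply taking right side $2m$ and removing the top $m$ vertices, which is the cleaner way to instantiate your Step 2.
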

\begin{proof}
By Theorem \ref{thm:prob_disperser}, there exists a bipartite graph 	$G=(V_1=[n], V_2=[m], E)$ with $O(m)$ edges such that for every $X \subseteq V_1$ of size at least $\frac{n}{8}$, $|\Gamma(X)| \ge \frac{9}{10} \cdot m$. In addition, the vertices in $V_2$ have bounded degrees. (For example, applying Theorem \ref{thm:prob_disperser} with $n = m$, $m = 2m$, $k = \frac{n}{8}$, and $\epsilon = \frac{1}{20}$, we have a disperser with left degree $O(\frac{m}{n})$ and right average degree $O(1)$. After removing $m$ right vertices with largest degree, we have the desired graph.)

Convert graph $G$ into a probabilistic circuit $C:\{0, 1\}^n \to \{0, 1\}^m$ as follows:
\begin{itemize}
	\item View the vertices in $V_1$ as inputs, denoted by $x_u$, for each $u \in V_1$.
	\item Replace the vertices in $V_2$ by XOR gates. View them as outputs, denoted by $y_v$, for each $v \in V_2$.
	\item For each edge $(u, v) \in E$, choose a coefficient $c_{u, v} \in \{0, 1\}$ independently at random. Let the output gate $v$ compute the function
$$
y_v \:=\: \sum_{(u, v) \in E} c_{u, v} x_u \pmod 2.
$$
\end{itemize}

For any $x \in \{0, 1\}^n$ with $\wt(x) \ge \frac{n}{8}$, by the expansion property of graph $G$, $|\Gamma(\supp(x))| \ge \frac{9}{10} \cdot  m$. In other words, there exist at least $\lceil \frac{9}{10} \cdot  m \rceil$ outputs, which are incident to at least one nonzero input. So, $C(x)$, restricted to $\Gamma(\supp(x))$, is uniformly distributed in $\{0, 1\}^{|\Gamma(\supp(x))|}$. Thus,
\begin{eqnarray*}
	\Pr[\wt(C(x)) < \frac{m}{8}] & \,\,\le\,\, & \frac{{\lceil \frac{9}{10} \cdot m \rceil \choose 0} + {\lceil\frac{9}{10} \cdot m \rceil \choose 1} + \ldots + {\lceil\frac{9}{10} \cdot m \rceil \choose \lfloor \frac{m}{8} \rfloor}}{2^{\lceil \frac{9}{10} \cdot m \rceil}} \\
	& \,\,\le\,\, & 2^{-(1 - h(\frac{5}{36})) \cdot \frac{9}{10} m} \\
	& \,\,<\,\, & 2^{-0.37 m} \,\,<\,\, 2^{-n}.
\end{eqnarray*}
Finally, taking a union bound over all $x$, we conclude there exists a linear circuit that computes an $(n, m, \frac{n}{8}, n, \frac{m}{8})$-range detector.
\end{proof}

The following lemma says, by putting a condenser at the top and an amplifier at the bottom, one can reduce the size of the problem (of encoding partial good codes).

\begin{lemma} (Reduction Lemma)
\label{lem:reduction} Let $c_0$ be the constant in Lemma \ref{lem:condenser}.
 For any $r \in [c_0, n]$ and $1 \le s \le t \le \frac{n}{r^{1.5}}$,
\begin{equation}
\label{equ:red_ineq}
S_d(n, s, t) \,\,\le\,\, S_{d-2}\left(\left\lfloor \frac{n}{r} \right\rfloor, s, \frac{n}{r}\right) + O(n).
\end{equation}
In addition, the output gates encoding $(n, s, t)$-PGC have bounded fanin $O(1)$.
\end{lemma}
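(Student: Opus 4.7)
The plan is to realize an $(n,s,t)$-PGC by concatenating three pieces top-to-bottom: a condenser from Lemma \ref{lem:condenser} on top, a smaller $(\lfloor n/r \rfloor, s, n/r)$-PGC of depth $d-2$ in the middle, and an amplifier from Lemma \ref{lem:amplifier} on the bottom. I will track how the Hamming weight of the intermediate signal propagates through each layer to confirm it matches the PGC definition, and then sum the sizes and depths.

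First, invoke Lemma \ref{lem:condenser} (applicable because $r \in [c_0, n]$ and $t \le n/r^{1.5}$) to obtain a depth-$1$ circuit of size $6n$ mapping $n$ bits to $\lfloor n/r \rfloor$ bits, which on any input of weight in $[s, n/r^{1.5}]$ produces an output of weight $\ge s$. Hence for $x$ with $\wt(x) \in [s,t] \subseteq [s, n/r^{1.5}]$, the condenser output $y$ satisfies $\wt(y) \in [s, \lfloor n/r \rfloor] \subseteq [s, n/r]$. Feed $y$ into an optimal depth-$(d-2)$ $(\lfloor n/r \rfloor, s, n/r)$-PGC, at cost $S_{d-2}(\lfloor n/r \rfloor, s, n/r)$ wires; its output $z$ has length $32 \lfloor n/r \rfloor$ and weight $\ge 4 \lfloor n/r \rfloor$.

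Finally, apply the amplifier of Lemma \ref{lem:amplifier} with input size $N = 32 \lfloor n/r \rfloor$ and output size $M = 32n$. The precondition $M \ge 3N$ holds because $r \ge c_0 \ge 6$ yields $\lfloor n/r \rfloor \le n/6$. This is a depth-$1$ circuit of size $O(M) = O(n)$ whose output gates have bounded fanin; since $\wt(z) \ge 4 \lfloor n/r \rfloor = N/8$ meets the amplifier's threshold, the final output has weight $\ge M/8 = 4n$, matching the $(n,s,t)$-PGC requirement. The total depth is $1 + (d-2) + 1 = d$, and the total number of wires is $6n + S_{d-2}(\lfloor n/r \rfloor, s, n/r) + O(n) = S_{d-2}(\lfloor n/r \rfloor, s, n/r) + O(n)$, giving the claimed bound \eqref{equ:red_ineq}. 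The moreover clause of Lemma \ref{lem:amplifier} immediately yields the bounded-output-fanin assertion, since the output gates of the composite circuit are exactly those of the amplifier.

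There is essentially no obstacle here; the construction is forced once one decides to use Lemmas \ref{lem:condenser} and \ref{lem:amplifier} as the two extra depth-$1$ layers sandwiching a smaller PGC. The only things that need verification are that the weight ranges mesh at each interface and that the parameter hypotheses ($r \in [c_0, n]$, $t \le n/r^{1.5}$, and $n \ge 3 \lfloor n/r \rfloor$) match those of the cited lemmas, all of which are immediate from the assumption $r \ge c_0 \ge 6$ and $t \le n/r^{1.5}$.
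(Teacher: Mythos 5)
Your proof is correct and follows essentially the same three-layer sandwich as the paper: the condenser of Lemma~\ref{lem:condenser} on top (depth 1, size $6n$), a depth-$(d-2)$ $(\lfloor n/r \rfloor, s, n/r)$-PGC in the middle, and the amplifier of Lemma~\ref{lem:amplifier} (viewed as a $(32\lfloor n/r\rfloor,\,32n,\,4\lfloor n/r\rfloor,\,32\lfloor n/r\rfloor,\,4n)$-range detector) at the bottom, with the bounded output fanin inherited from the amplifier. Your explicit verification of the weight interfaces and the precondition $32n \ge 3\cdot 32\lfloor n/r\rfloor$ (via $r \ge c_0 \ge 6$) makes the argument slightly more self-contained than the paper's terse statement, but the underlying construction is identical.
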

\begin{proof} Construct a depth-$d$ linear circuit that encodes an $(n, s, t)$-PGC by concatenating three circuits:
\begin{itemize}
	\item (Condenser) The top is an $(n, \lfloor \frac{n}{r} \rfloor, s, t, s)$-range detector of size $O(n)$ and depth 1, whose existence is proved by Lemma \ref{lem:condenser}.
	\item (Partial good code) The middle is an $(\lfloor \frac{n}{r} \rfloor, s, \lfloor \frac{n}{r} \rfloor)$-PGC of depth $d-2$.
	\item (Amplifier) The bottom is an $(32 \lfloor \frac{n}{r} \rfloor, 32 n,  4 \lfloor \frac{n}{r} \rfloor, 32 \lfloor \frac{n}{r} \rfloor, 4n)$-range detector of depth 1 and size $O(n)$, whose existence is proved by Lemma \ref{lem:amplifier}. Moreover, the output gates have bounded fanin $O(1)$.
\end{itemize}
The output gates of the overall circuit have bounded fanin $O(1)$, because the output gates of the amplifier have bounded fanin $O(1)$.
\end{proof}

The following bounds are tight for fixed depth $d$. To avoid redoing the same work, we will use it for the case $d = 4$ at the base step in our main construction. Our technical contribution is to remove the dependence on $d$, i.e., we prove $S_d = O(\lambda_d(n)\cdot n)$ for any $d$, not necessarily fixed, where the hidden constant is an absolute constant. 

\begin{lemma}
\label{lem:ub_gal}
(Lemma 28 in \cite{GHK+12}) Fix an even constant $d \ge 4$. For any $1 \le r \le n$,
\[
S_d\left(n, \frac{n}{r}, n\right) \,\,=\,\, O_d(\lambda_d(r) \cdot n),
\]
where the constant in $O_d(\lambda_d(r) \cdot n)$ only depends on $d$.
\end{lemma}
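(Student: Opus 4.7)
The plan is induction on the even depth parameter $d \ge 4$. The base case $d = 4$ would be taken from G\'al \emph{et al.}~(Lemma~27 of \cite{GHK+12}), which constructs a depth-$4$ $(n, n/r, n)$-PGC of size $O(\lambda_4(r) \cdot n) = O(\log^{*}\! r \cdot n)$ by composing $\Theta(\log^{*}\! r)$ simple range-detector pieces on top of a trivial good code that covers the lowest-weight regime. The heart of the argument is the inductive step $d \to d+2$: we chop the weight range $[n/r, n]$ into a telescoping sequence of sub-intervals, handle each one by the Reduction Lemma (Lemma~\ref{lem:reduction}) applied to a smaller depth-$d$ PGC supplied by the induction hypothesis, and glue the pieces using the bounded-fanin form~\eqref{equ:composition_depthd} of the Composition Lemma.

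Concretely, for the inductive step set $r_1 := r$ and $r_{i+1} := \lceil \lambda_d(r_i)^3 \rceil$, and let $h$ be the smallest index with $r_h \le c_0$, where $c_0$ is the constant in Lemma~\ref{lem:condenser}. By the definition of $\lambda_d^{*}$ we then have $h = O(\lambda_{d+2}(r))$. For each $i < h$, set $r'_i := \lambda_d(r_i)^2$, a choice tuned so that the Reduction Lemma's precondition $n/r_{i+1} \le n/(r'_i)^{1.5}$ holds with equality up to rounding. Applying Lemma~\ref{lem:reduction} yields a depth-$(d+2)$ circuit encoding an $(n, n/r_i, n/r_{i+1})$-PGC built from an inner depth-$d$ $(\lfloor n/r'_i\rfloor,\, n/r_i,\, \lfloor n/r'_i\rfloor)$-PGC. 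Writing $n' = \lfloor n/r'_i \rfloor$ and $\rho = n' r_i / n \approx r_i/r'_i$, the induction hypothesis bounds the size of this inner PGC by
\[
O_d(\lambda_d(\rho) \cdot n') \;\le\; O_d\!\left(\lambda_d(r_i) \cdot \tfrac{n}{r'_i}\right) \;=\; O_d\!\left(\tfrac{n}{\lambda_d(r_i)}\right),
\]
using $\rho \le r_i$, monotonicity of $\lambda_d$, and $r'_i = \lambda_d(r_i)^2$. Each such inner PGC inherits bounded output fanin from the amplifier in the Reduction Lemma, so the depth-preserving form~\eqref{equ:composition_depthd} of the Composition Lemma fuses the $h-1$ pieces at depth $d+2$ with total wire count
\[
\sum_{i=1}^{h-1} O_d\!\left(\tfrac{n}{\lambda_d(r_i)}\right) + O(h n) \;=\; O_d(h \cdot n) \;=\; O_d(\lambda_{d+2}(r) \cdot n),
\]
closing the induction.

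The main obstacle is calibrating the intermediate parameter $r'_i$ against the sequence $r_{i+1}$ so that three constraints hold simultaneously: (i) the Reduction Lemma's precondition $(r'_i)^{1.5} \le r_{i+1}$; (ii) the induction hypothesis delivers an inner PGC of size $O_d(n)$, which amounts to $\lambda_d(r_i)/r'_i = O(1)$; and (iii) the telescoping sequence reaches $c_0$ in only $\Theta(\lambda_{d+2}(r))$ steps. The pair of choices $r'_i = \lambda_d(r_i)^2$ and $r_{i+1} = \lambda_d(r_i)^3$ satisfies all three with constant slack; a slower decay of $r_i$ would push $h$ past $\lambda_{d+2}(r)$, while a faster decay would violate the Reduction Lemma's precondition. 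Because this lemma is only needed as the base case of the authors' main, absolute-constant construction, the $O_d$ dependence on $d$ that the induction incurs is harmless here.
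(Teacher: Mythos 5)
The paper does not prove this statement; it cites it as Lemma 28 of \cite{GHK+12} and uses it only as a black-box base case. The paper's own construction (Theorem~\ref{thm:ub_main}) deliberately departs from the route you sketch: instead of a long cascade $r_{i+1} = \lambda_d(r_i)^3$, it composes exactly two PGCs per inductive step, an $(n, n/A(k-1,r), n/4r^2)$-PGC obtained from Lemma~\ref{lem:reduction} with $r' = 2r$ and an $(n, n/4r^2, n/r)$-PGC from Lemma~\ref{lem:handy}, precisely so as to make the resulting constant absolute rather than $O_d(1)$. So your proposal is genuinely different from what the paper does; it is a reconstruction of the G\'al \emph{et al.} argument summarized in the paper's introduction.

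That reconstruction has a real gap in the termination analysis. You define $r_{i+1} = \lceil \lambda_d(r_i)^3 \rceil$ and let $h$ be the least index with $r_h \le c_0$, where $c_0 \ge 6$ is the fixed universal constant from Lemma~\ref{lem:condenser}. But this sequence need not ever drop below $c_0$: already for $d = 4$ the value $64$ is a fixed point, since $\lambda_4(64) = \lambda_2^{*}(64) = 4$ and $4^3 = 64$. In fact every $r_i \in [17, 2^{16}]$ sends $r_{i+1}$ to $64$, and values in $[7, 16]$ jump back up to $27$ and thence to $64$; the iteration stabilizes well above $c_0$, so $h$ is undefined. The correct scheme must stop at a threshold $K_d$ large enough that $\lambda_d(k)^3 < k$ for all $k > K_d$; this $K_d$ grows roughly like an Ackermann function of $d$, and the residual range $[n/K_d, n]$ must then be capped by a separate depth-$d$ PGC of size $O_d(\lambda_d(K_d)\cdot n) = O_d(n)$. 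This $d$-dependent threshold is exactly the source of the rapidly growing $O_d(1)$ constant the paper alludes to (\emph{``They did not mention the constant in $O_d(1)$; if one works out the constant straightforwardly, it is a very fast-growing constant depending on $d$''}); your write-up should surface it rather than let the argument appear to stop at a universal constant. A smaller issue: your base case is misattributed. Lemma 27 of \cite{GHK+12} (Lemma~\ref{lem:ub_d2} here) is a depth-$2$ bound $S_2(n, n/r, n) = O(n\log^2 r) = O(\lambda_2(r)^2 \cdot n)$, not a depth-$4$ bound of the form $O(\lambda_4(r)\cdot n)$; the depth-$4$ base case needs one application of the inductive step starting from $d = 2$, and since that input is $\lambda_2(r)^2$ rather than $\lambda_2(r)$, the bookkeeping is slightly different from the generic step. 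Finally, in the inductive step you should also explicitly include the top-range piece for $[n/r_h, n]$ as one of the PGCs fed to the Composition Lemma, with bounded output fanin arranged as in the paper's proof of Theorem~\ref{thm:ub_main}.
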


\begin{lemma}
\label{lem:ub_d2}
(Lemma 27 in \cite{GHK+12}) For any $1 \le r \le n$,
\[
S_2\left(n, \frac{n}{r}, n\right) \,\,=\,\, O(n \log^2 r).
\]
\end{lemma}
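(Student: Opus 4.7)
The plan is to construct a depth-2 probabilistic linear circuit with $O(n \log^2 r)$ wires and then show, via a union bound over inputs, that some realization of its random XOR coefficients is an $(n, n/r, n)$-PGC. The circuit has two layers whose incidence structures come from probabilistic disperser graphs (Theorem \ref{thm:prob_disperser}) and whose coefficients are i.i.d.\ uniform in $\{0,1\}$. The first ``condensing'' layer sends the $n$ inputs into $m_1 = \Theta(n/r)$ middle gates, using a $(n/r, \epsilon)$-disperser of left degree $\Theta(\log r)$, for $O(n \log r)$ wires. The second ``amplifying'' layer produces the $32n$ outputs from the middle gates, where each output is an XOR of $\Theta(\log r)$ middle gates with random coefficients, for another $O(n \log r)$ wires. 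A slightly less optimized analysis absorbs the bound into $O(n \log^2 r)$.

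The analysis proceeds input-by-input. Fix $x \in \{0,1\}^n$ with $\wt(x) \in [n/r, n]$. The disperser property of the first layer ensures that a $(1-\epsilon)$ fraction of the middle gates have at least one neighbor in $\supp(x)$; over the random first-layer coefficients, each such ``live'' middle gate is a uniform independent bit in $\{0,1\}$. The disperser (or random) structure of the second layer then ensures that, for a typical value of the middle-layer vector $p$, almost every output $y_j$ has some middle-neighbor in $\supp(p)$, so that when its (independent) second-layer random coefficients are exposed, $y_j$ becomes uniform Bernoulli. A quick computation shows that $\mathbb{E}[\wt(C(x)) \mid p]$ is bounded below by, say, $12n$ for typical $p$.

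Conditional on the middle-layer realization, the $32n$ outputs are mutually independent (their second-layer coefficient vectors are independent), so a standard Chernoff bound gives $\Pr[\wt(C(x)) < 4n \mid p] \le 2^{-c n}$ for typical $p$, with the constant $c$ tunable by increasing the second-layer degree. The probability that $p$ is ``atypical'' (too few live bits, or the live bits have an unbalanced weight distribution) is itself $2^{-\Omega(n)}$ by Chernoff on the first layer. A union bound over the $\le 2^n$ choices of valid $x$ then shows that with positive probability the random circuit satisfies the PGC property for every heavy input, so some realization works; fixing one gives the claimed deterministic depth-$2$ linear circuit.

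The main technical obstacle, in my view, is the clean handling of the two-stage randomness: the $y_j$'s are not jointly independent over the full probability space because they share the first-layer gates. The fix, as sketched, is to peel the randomness in two steps — first bound the probability of an atypical middle-layer vector $p$, then condition on a typical $p$ and use the independence of the second-layer coefficients across distinct output gates. Calibrating the degrees and $m_1$ so that both exponentially small terms dominate the $2^n$ union bound, while keeping the total wire count within $O(n \log^2 r)$, is the quantitative crux.
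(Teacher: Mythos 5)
The paper does not supply a proof of Lemma~\ref{lem:ub_d2}; it cites the statement directly as Lemma~27 of \cite{GHK+12}, so there is no in-paper argument to compare against. Judged on its own terms, your proposal has a structural flaw that makes the construction fail for every choice of random coefficients, not merely with small probability.

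Your circuit $C = L_2 \circ L_1$ is linear, and the first layer $L_1 : \{0,1\}^n \to \{0,1\}^{m_1}$ condenses to $m_1 = \Theta(n/r)$ middle gates, so its kernel has dimension at least $n - m_1 = n\bigl(1 - \Theta(1/r)\bigr)$. Any linear subspace of $\{0,1\}^n$ of dimension $d$ contains a vector of weight at least $d/2$: its support spans at least $d$ coordinates, and each supported coordinate equals $1$ on exactly half the subspace, so the average weight of a subspace element is $\ge d/2$. Hence once $r$ exceeds a small absolute constant, $\ker L_1$ contains a vector $x$ with $\wt(x) \ge n/4 \ge n/r$, i.e.\ a vector lying inside the weight range $[n/r, n]$ that an $(n, n/r, n)$-PGC must handle; but $C(x) = L_2(L_1(x)) = 0$ identically. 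So no realization of your two-layer circuit can be a valid PGC. This is precisely why the condenser of Lemma~\ref{lem:condenser} is stated as an $\bigl(n, \lfloor n/r \rfloor, s, n/r^{1.5}, s\bigr)$-range detector with the input weight capped at $n/r^{1.5}$ rather than $n$: a bottleneck of $\lfloor n/r \rfloor$ gates can only be weight-faithful on inputs whose weight stays below the kernel's minimum weight, and the weight range must be narrowed before condensing. There is also a secondary quantitative problem that would remain even if the bottleneck were removed: your claim that ``the probability that $p$ is atypical is $2^{-\Omega(n)}$'' is incorrect — with $m_1 = \Theta(n/r)$ middle gates, a Chernoff bound on the weight of the middle vector gives failure probability $2^{-\Omega(m_1)} = 2^{-\Omega(n/r)}$, which cannot dominate the union bound over roughly $2^n$ heavy inputs once $r$ is super-constant.
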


We will need the following handy result in our recursive construction.

\begin{lemma}
\label{lem:handy}	
For any $1 \le r \le n$,
\[
S_4\left(n, \frac{n}{4r^2}, \frac{n}{r}\right) \,\,=\,\, O(n) \ ,
\]
where the constant in $O(n)$ is an absolute constant. Moreover, the output gates of the linear circuits have bounded fanin $O(1)$, which is an absolute constant.
\end{lemma}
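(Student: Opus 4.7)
The plan is to split on the size of $r$, using the Reduction Lemma (Lemma~\ref{lem:reduction}) together with the depth-$2$ bound (Lemma~\ref{lem:ub_d2}) for large $r$, and a direct depth-$1$ disperser-based construction for small $r$.

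For $r$ above an absolute threshold $R_0$ (to be fixed in a moment), I would apply the Reduction Lemma with condenser parameter $\rho := \lfloor r^{2/3} \rfloor$. This is the largest choice compatible with the hypothesis $t \le n/\rho^{3/2}$ when $t = n/r$, and it satisfies $\rho \ge c_0$ as soon as $r \ge c_0^{3/2}$. Lemma~\ref{lem:reduction} then yields
\[
S_4\!\left(n,\tfrac{n}{4r^2},\tfrac{n}{r}\right) \;\le\; S_2\!\left(n',\tfrac{n'}{4r^{4/3}},n'\right) + O(n), \qquad n' := \lfloor n/r^{2/3} \rfloor,
\]
since $n/(4r^2) \approx n'/(4r^{4/3})$. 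Feeding the subproblem into Lemma~\ref{lem:ub_d2} with parameter $4r^{4/3}$ bounds its size by $O\!\bigl(n'\log^2(4r^{4/3})\bigr) = O\!\bigl(n\cdot r^{-2/3}\log^2 r\bigr)$, which collapses to $O(n)$ with an \emph{absolute} constant once $R_0$ is chosen large enough that $\log^2 r \le r^{2/3}$ for all $r \ge R_0$.

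For $r \le R_0$ the weight range $[n/(4r^2),\,n/r]$ already has its lower endpoint at $\Omega(n)$, so no recursion is needed; I would construct a single depth-$1$ probabilistic circuit by mimicking the proof of Lemma~\ref{lem:amplifier}. Use Theorem~\ref{thm:prob_disperser} to supply a $(\lceil n/(4R_0^2)\rceil,\epsilon)$-disperser from $[n]$ to an output set of size $32n$, for a small constant $\epsilon$; after discarding the few highest-degree right vertices the right degree is bounded by an absolute constant depending only on $R_0$. Replacing right vertices by XOR gates with independent uniform $\{0,1\}$ coefficients and invoking the standard union bound --- each input of weight $\ge n/(4R_0^2)$ fails to produce output weight $\ge 4n$ with probability $2^{-\Omega(n)}$, which dominates the $2^n$ candidate inputs --- fixes a deterministic depth-$1$ circuit of size $O(n)$ and output fanin $O(1)$.

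The bounded-fanin claim at the output layer then comes for free in both cases: in the large-$r$ case it is inherited from the amplifier sitting at the bottom of Lemma~\ref{lem:reduction}, and in the small-$r$ case it was built into the disperser by the trimming step. The only delicate point, which the whole plan is engineered around, is the exponent $2/3$ in $\rho = r^{2/3}$: it is simultaneously as large as the hypothesis of Lemma~\ref{lem:reduction} permits and large enough that $n'\log^2 r = (n/r^{2/3})\log^2 r = O(n)$, which is precisely what eliminates the $r$-dependence from the final constant.
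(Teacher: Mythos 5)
Your proof is correct and takes the same route as the paper: reduce via Lemma~\ref{lem:reduction} to a depth-$2$ subproblem and finish with Lemma~\ref{lem:ub_d2}. The only substantive difference is the choice of condenser parameter: the paper uses $\rho = \lfloor\sqrt{r}\rfloor$ while you use $\rho = \lfloor r^{2/3}\rfloor$; both satisfy the $t\le n/\rho^{3/2}$ constraint, and since $(n/r^{\beta})\log^2 r = O(n)$ for any fixed $\beta > 0$, neither choice is forced --- any exponent in $(0,2/3]$ would do, so the extremality of $2/3$ is not actually load-bearing.

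One thing you do that the paper does not: you explicitly handle small $r$ (roughly $r \lesssim c_0^{3/2}$ in your parametrization, $r < c_0^2$ in the paper's), where $\lfloor r^{2/3}\rfloor$ (respectively $\lfloor\sqrt{r}\rfloor$) drops below the threshold $c_0$ required by Lemma~\ref{lem:reduction}. The paper's proof as written silently assumes $r$ is large enough for the reduction to apply; your direct disperser construction for the bottom range fills that gap cleanly and automatically delivers the bounded-output-fanin property (which would not be obvious if one tried to fall back on Lemma~\ref{lem:ub_d2} alone for small $r$). A small cosmetic point: you should say a word about padding the depth-$1$ construction up to depth $4$ (e.g.\ by stacking identity layers on top, which adds only $O(n)$ wires and preserves output fanin), since the statement is about $S_4$, but this is routine.
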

\begin{proof} 

We apply Lemma \ref{lem:reduction} with $r = \lfloor \sqrt{r} \rfloor$, $s = \frac{n}{4r^2}$, $t = \frac{n}{r}$. Thus,
\begin{eqnarray*}
S_4\left(n, \frac{n}{4r^2}, \frac{n}{r}\right) & \le & S_2\left(\left\lfloor \frac{n}{\lfloor \sqrt{r} \rfloor} \right\rfloor, \frac{n}{4r^2}, \frac{n}{\lfloor \sqrt{r} \rfloor}\right) + O(n) \\
& \le & O\left(\frac{n}{\lfloor \sqrt{r} \rfloor} \log^{2}{(4r^2)}\right) + O(n) \quad\quad \text{By Lemma \ref{lem:ub_d2}} \\
& = & O(n) \ .
\end{eqnarray*}
The output gates have bounded fanin $O(1)$ by Lemma \ref{lem:reduction}.
\end{proof}

The following theorem is our main construction, which is proved by an induction on the depth.

\begin{theorem}
\label{thm:ub_main} Let $c_0$ be the constant in Lemma \ref{lem:condenser}. There exist absolute constants $c, D > 0$ such that the following holds.

For any $c_0 \le r \le n$, and for any $k \ge 3$, we have
\begin{equation}
\label{equ:d2k_first}
S_{2k}\left(n, \frac{n}{A(k-1, r)}, \frac{n}{r}\right) \;\le\; 2cn.
\end{equation}
Moreover, the output gates of the linear circuits encoding $\left(n, \frac{n}{A(k-1,r)}, \frac{n}{r}\right)$-PGC have bounded fanin $D$.

For any $2 \le r \le n$ and for any $k \ge 2$,
\begin{equation}
\label{equ:d2k_second}
S_{2k}\left(n, \frac{n}{r}, n\right) \;\le\; 3c\lambda_{2k}(r) \cdot n.
\end{equation}
(But the circuit encoding $\left(n, \frac{n}{r}, n\right)$-PGC does not necessarily have bounded output fanin.)
\end{theorem}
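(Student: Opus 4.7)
The plan is an induction on $k$, with the two inequalities feeding into each other: at level $k$, \eqref{equ:d2k_first} is proved first using \eqref{equ:d2k_second} at level $k-1$ through the Reduction Lemma, and then \eqref{equ:d2k_second} at level $k$ is proved using \eqref{equ:d2k_first} at level $k$ through the Composition Lemma. The base case $k=2$ of \eqref{equ:d2k_second} is Lemma \ref{lem:ub_gal} with $d=4$; if $C_4$ denotes the absolute constant appearing there, I would fix the absolute constant $c$ once at the outset, large enough to satisfy $3c \ge C_4$ together with the inductive slack requirements below, and similarly fix $D$ once to dominate all the fanin bounds coming from Lemmas \ref{lem:handy}, \ref{lem:reduction}, and \ref{lem:composition}.

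For the inductive step proving \eqref{equ:d2k_first} at $k \ge 3$, I would split the weight interval $[n/A(k-1,r),\, n/r]$ at $n/(4r^2)$ into two subproblems and compose them. The upper part $(n,\, n/(4r^2),\, n/r)$-PGC is given by Lemma \ref{lem:handy} with size $O(n)$ and bounded output fanin. For the lower part, the Reduction Lemma with condenser rate $2r$ (whose constraint $n/(4r^2) \le n/(2r)^{1.5}$ holds for $r \ge c_0$) yields
\[
S_{2k}\bigl(n,\, n/A(k-1,r),\, n/(4r^2)\bigr) \,\le\, S_{2k-2}\bigl(n/(2r),\, n/A(k-1,r),\, n/(2r)\bigr) + O(n),
\]
and the inner instance has the form $S_{2k-2}(n', n'/R, n')$ with $n' = n/(2r)$ and $R = A(k-1,r)/(2r)$. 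The induction hypothesis \eqref{equ:d2k_second} at level $k-1$ bounds this by $3c\lambda_{2k-2}(R)\cdot n'$, and the Ackermann-inverse identity $\lambda_{2k-2}(A(k-1,r)/(2r)) \le r$ (verified from Appendix \ref{app:inv_Ack_properites}) reduces the bound to $(3/2)cn$. Lemma \ref{lem:composition} then glues the two sub-PGCs at depth $2k$ — staying at depth $2k$ because both have bounded output fanin — with total size at most $(3/2)cn + O(n) + O(Dn) \le 2cn$ once $c$ dominates the absolute constants involved; the resulting output fanin is at most $O(D)$, still an absolute constant.

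For the inductive step proving \eqref{equ:d2k_second} at level $k$, I would chain PGCs along the sequence $r_i = A_{k-1}^{(i)}(c_0)$: cover $[n/c_0, n]$ by a depth-$4$ $(n, n/c_0, n)$-PGC of size $O(n)$ from Lemma \ref{lem:ub_gal} (with the constant $c_0$), and cover each interval $[n/r_{i+1}, n/r_i]$ by the PGC guaranteed by \eqref{equ:d2k_first} at level $k$ with parameter $r_i$, which uses at most $2cn$ wires and has bounded output fanin $D$. The number $\ell$ of levels needed so that $r_\ell \ge r$ is $\lambda_{2k}(r) + O(1)$ by the inverse-Ackermann identities, since iterating $A_{k-1}$ upward from $c_0$ corresponds to iterating $\lambda_{2k-2}$ downward from $r$. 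The bounded-fanin form of the Composition Lemma at depth $2k$ (after noting that the depth-$4$ top piece is also a depth-$2k$ circuit) yields total size at most $2c\ell n + O(D\ell n) + O(n) \le 3c\lambda_{2k}(r)\cdot n$ once $c$ is large compared to $D$; the small-$r$ regime ($r \le c_0$) reduces directly to Lemma \ref{lem:ub_gal}, since then $\lambda_{2k}(r) \ge 1$ while $S_{2k}(n, n/r, n) = O(n)$.

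The main obstacle, and the whole point of the paper's improvement over G\'{a}l \emph{et al.}, is keeping every constant absolute rather than $k$-dependent. Two ingredients make this work: the Ackermann identity $\lambda_{2k-2}(A(k-1,r)/(2r)) \le r$ makes the shrinkage in the $\eqref{equ:d2k_first}$ step clean (without incurring a $k$-dependent multiplicative loss), and the bounded-fanin version of Lemma \ref{lem:composition} lets us compose without paying any extra depth, so that both the two-way composition in \eqref{equ:d2k_first} and the $\ell$-way composition in \eqref{equ:d2k_second} fit inside depth $2k$. Once these are in place, the bookkeeping is routine: fix $c$ once at the beginning to exceed a single linear combination of $C_4$, $D$, and the absolute $O(1)$'s from Lemmas \ref{lem:handy}, \ref{lem:reduction}, and \ref{lem:composition}, and both inductive closures are absorbed by the slack $3c - 2c = c$.
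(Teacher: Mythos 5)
Your proposal follows the paper's proof essentially step for step: the same mutual induction on $k$ (base case \eqref{equ:d2k_second} at $k=2$ via Lemma~\ref{lem:ub_gal}, then the chain $\eqref{equ:d2k_second}_{k-1} \Rightarrow \eqref{equ:d2k_first}_k \Rightarrow \eqref{equ:d2k_second}_k$); the same splitting of $[n/A(k-1,r),\,n/r]$ at $n/(4r^2)$ with Lemma~\ref{lem:handy} handling the upper piece and the Reduction Lemma with condenser rate $2r$ plus the identity $\lambda_{2(k-1)}(A(k-1,r)/(2r)) \le r$ handling the lower piece; the same chain $r_i = A_{k-1}^{(i)}(c_0)$ with $h \le \lambda_{2k}(r)$ pieces for \eqref{equ:d2k_second}; and the same use of the bounded-fanin version of the Composition Lemma to stay at depth $2k$ rather than paying a depth. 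The one small discrepancy is cosmetic: for the top piece covering $[n/c_0, n]$ the paper uses Lemma~\ref{lem:ub_d2} (a depth-2 circuit) with one extra output layer to get output fanin $1$, whereas you cite Lemma~\ref{lem:ub_gal}; since the bounded-fanin Composition Lemma needs \emph{every} piece to have bounded output fanin, if you use Lemma~\ref{lem:ub_gal} you should still append that trivial extra layer to bound the fanin of the top piece — an easy fix that doesn't change the argument.
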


\begin{figure}[h]
\includegraphics[scale=0.6]{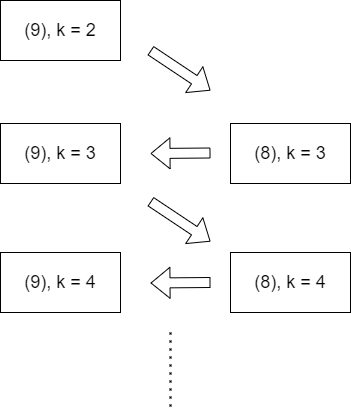}
\centering
\caption{Inductive proof of Theorem \ref{thm:ub_main}}
\label{fig:ub_induction_proof}
\end{figure}
\begin{proof} (of Theorem \ref{thm:ub_main}) The overall proof is an induction on $k$, as illustrated by Figure \ref{fig:ub_induction_proof}. 

We pick constants $D, c$ such that
\[
D \,\ge\, 2 c_2 \max(D_1, D_2)
\]
and
\[
c \,\ge\, \max\left(c_1 D + (\log c_0)^2 c_3, 2(2c_1D_1 + c_4 + c_5), c_6\right).
\]
Relevant constants used in the proof are summarized in Table \ref{table:constants} on page~\pageref{table:constants}.

The base step is \eqref{equ:d2k_second} for $k = 2$, which is true by Lemma \ref{lem:ub_gal}.

\begin{table}
\centering
\begin{tabular}{ |c|c| } 
 \hline
 $c_0$ & the constant in Lemma \ref{lem:condenser} and \ref{lem:reduction} \\ 
 \hline
 $c_1$ & \makecell{the constant in \eqref{equ:composition_depthd} in Lemma \ref{lem:composition}, i.e.,\\$S_{d}(n, r_1, r_{\ell+1}) \,\,\le\,\, \sum_{i = 1}^\ell s_i + c_2 D\ell n.$} \\ 
 \hline
 $c_2$ & \makecell{Lemma \ref{lem:composition}, the fanin of the output gates (of the depth-$d$ circuit\\ encoding the $(n, r_1, r_{\ell+1})$-PGC) is bounded by $c_2 \ell D$} \\ 
 \hline
 $c_3$ & Lemma \ref{lem:ub_d2}, $S_2\left(n, \frac{n}{r}, n\right) \le c_3 n \log^2 r$ \\ 
 \hline
 $c_4$ & \makecell{the constant in \eqref{equ:red_ineq} in Lemma \ref{lem:reduction}, i.e.,\\$S_d(n, s, t) \le S_{d-2}\left(\left\lfloor \frac{n}{r} \right\rfloor, s, \frac{n}{r}\right) + c_4 n$} \\ 
  \hline
 $c_5$ & \makecell{the constant in Lemma \ref{lem:handy}, i.e.,\\$S_4\left(n, \frac{n}{4r^2}, \frac{n}{r}\right) \le c_5 n$} \\ 
 \hline
  $c_6$ & \makecell{when $d = 4$, the constant in $O_d(\lambda_d(r) \cdot n)$ in Lemma \ref{lem:ub_gal}, i.e.,\\$S_4\left(n, \frac{n}{r}, n\right) \le c_6 \lambda_4(r) \cdot n$} \\ 
 \hline
   $D_1$ & Lemma \ref{lem:reduction}, output gates have fanin at most $D_1$ \\ 
 \hline
    $D_2$ & Lemma \ref{lem:handy}, output gates have fanin at most $D_2$ \\ 
 \hline
\end{tabular}
\caption{Absolute constants used in the proof.}
\label{table:constants}
\end{table}

For the induction step, we shall prove
\begin{itemize}
	\item \eqref{equ:d2k_first} implies \eqref{equ:d2k_second} for the same $k$, and
	\item \eqref{equ:d2k_second} with $k -1$ implies \eqref{equ:d2k_first} with $k$.
\end{itemize}

First, let us prove \eqref{equ:d2k_first} implies \eqref{equ:d2k_second} for the same $k$, where $k \ge 3$. By the composition lemma (i.e., Lemma \ref{lem:composition}),
\begin{eqnarray*}
& & S_{2k}\left(n, \frac{n}{r}, n\right) \\
& \le &  S_{2k}\left(n, \frac{n}{c_0}, n\right)+\sum_{i=2}^{h} S_{2k}\left(n, \frac{n}{A^{(i)}_{k-1}(c_0)}, \frac{n}{A^{(i-1)}_{k-1}(c_0)}\right) + c_1 Dhn \ ,
\end{eqnarray*}
where $h$ is the minimum integer such that $A^{(h)}_{k-1}(c_0) \ge r$ (and hence $A^{(h-1)}_{k-1}(c_0) < r$). Note that $A^{(h)}_{k-1}(c_0) \ge A^{(h)}_{k-1}(1) =A(k, h)$ and $A^{(h-1)}_{k-1}(c_0) \ge A(k, h-1)$. Since $A(k, h-1) \le A^{(h-1)}_{k-1}(c_0) < r$, by Proposition \ref{prop:ack_properties}, we have $h \le \lambda_{2k}(r)$. By Lemma \ref{lem:ub_d2}, $S_{2k}\left(n, \frac{n}{c_0}, n\right) \le c_3 n (\log c_0)^2$, and the circuit encoding $\left(n, \frac{n}{c_0}, n\right)$-PGC has output fanin 1, which can be done by adding an extra layer to the depth-2 circuit given by Lemma \ref{lem:ub_d2}. By induction hypothesis, $S_{2k}\left(n, \frac{n}{A^{(i)}_{k-1}(c_0)}, \frac{n}{A^{(i-1)}_{k-1}(c_0)}\right) \le 2cn$, and the fanin of the outputs in the circut encoding $\left(n, \frac{n}{A^{(i)}_{k-1}(c_0)}, \frac{n}{A^{(i-1)}_{k-1}(c_0)}\right)$-PGC is at most $D$.
Thus
\begin{eqnarray*}
S_{2k}\left(n, \frac{n}{r}, n \right) & \le & c_3 n (\log c_0)^2 + (\lambda_{2k}(r)-1) \cdot 2cn + c_1 D\lambda_{2k}(r)n \\
& \le &  2c \lambda_{2k}(r) n + \lambda_{2k}(r)n (c_1D + (\log c_2)^2 c_3) \\
& \le & 3c \lambda_{2k}(r) \cdot n,
\end{eqnarray*}
since $c \ge c_1D + (\log c_2)^2 c_3$.

\vspace{0.2cm}
Now, we prove \eqref{equ:d2k_second} with $k -1$ implies \eqref{equ:d2k_first} with $k$. That is, assuming $S_{2(k-1)}\left(n, \frac{n}{r}, n\right) \le 3c\lambda_{2(k-1)}(r) \cdot n$ for any $2 \le r \le n$, we will prove $S_{2k}\left(n, \frac{n}{A(k-1, r)}, \frac{n}{r}\right) \le 2cn$ for any $r\in [c_0, n]$, and the output gates of the linear circuit encoding $S_{2k}\left(n, \frac{n}{A(k-1, r)}, \frac{n}{r}\right)$-PGC have bounded fanin $\le D$.

Note that $r \ge c_0$ by our condition. Applying the reduction lemma (i.e., Lemma \ref{lem:reduction}) with $r' = 2r$, we have
\begin{eqnarray*}
& & S_{2k}\left(n,  \frac{n}{A(k-1, r)}, \frac{n}{4r^2}\right) \\
 & \le & S_{2(k-1)}\left(\left\lfloor \frac{n}{2r} \right\rfloor, \frac{n}{A(k-1, r)}, \frac{n}{2r}\right) + c_4 n \\& \le & 3c \cdot  \frac{n}{2r} \cdot \lambda_{2(k-1)}\left(\frac{A(k-1, r)}{2r}\right)  + c_4 n \quad\quad \text{By induction hypothesis}\\
& \le & \frac{3c}{2} \cdot n + c_4 n,
\end{eqnarray*}
where the last step $\lambda_{2(k-1)}\left(\frac{A(k-1, r)}{2r}\right) \le \lambda_{2(k-1)}(A(k-1, r)) = r$ is by Proposition \ref{prop:ack_properties}. By the reduction lemma (i.e., Lemma \ref{lem:reduction}), the output gates of the linear circuit encoding $\left(n,  \frac{n}{A(k-1, r)}, \frac{n}{4r^2}\right)$-PGC have bounded fanin $D_1$.

Applying the composition lemma (i.e., Lemma \ref{lem:composition}), we have
\begin{eqnarray*}
	S_{2k}\left(n,  \frac{n}{A(k-1, r)}, \frac{n}{r} \right) & \le & S_{2k}\left(n, \frac{n}{A(k-1, r)}, \frac{n}{4r^2}\right)  + S_{2k}\left(n, \frac{n}{4r^2}, \frac{n}{r}\right) + 2c_1D_1n \\
	& \le & \frac{3c}{2} \cdot n + c_4 n + c_5 n + 2c_1D_1n \\
 & \le & 2cn \ ,
\end{eqnarray*}
where the second last step is by Lemma \ref{lem:handy}, and the last step is because $c \ge 2(2c_1D_1 + c_4 + c_5)$. We claim the output gates of the overall linear circuit encoding $\left(n,  \frac{n}{A(k-1, r)}, \frac{n}{r} \right)$-PGC have fanin at most $2 c_2 \max(D_1, D_2)$, because
\begin{itemize}
	\item The output gates of the circuit encoding $\left(n,  \frac{n}{A(k-1, r)}, \frac{n}{4r^2}\right)$-PGC have bounded fanin $D_1$ by the reduction lemma (i.e., Lemma \ref{lem:reduction}). (Notice that the circuit encoding $\left(\left\lfloor \frac{n}{2r} \right\rfloor, \frac{n}{A(k-1, r)}, \frac{n}{2r}\right)$-PGC does not have bounded outputs fanin, which lies in the \emph{inner part} in the reduction lemma.)
	\item The output gates of the depth-$2k$ circuit encoding $\left(n, \frac{n}{4r^2}, \frac{n}{r}\right)$-PGC have bounded fanin $D_2$ by Lemma \ref{lem:handy}.
	\item In the composition lemma (i.e., Lemma \ref{lem:composition}), we compose 2 circuits, encoding $\left(n,  \frac{n}{A(k-1, r)}, \frac{n}{4r^2} \right)$-PGC and $\left(n, \frac{n}{4r^2}, \frac{n}{r}\right)$-PGC respectively. So the fanin of the outputs of the overall circuit is bounded by
 $
 2 c_2 \max(D_1, D_2),
 $
 as desired.
\end{itemize}
\end{proof}

By taking $r = n$ in Theorem \ref{thm:ub_main}, we immediately have

\begin{corollary}
\label{cor:linear_328}
For any $n$, 
\[
S_{\alpha(n)}(n) \;=\; O(n).
\]
\end{corollary}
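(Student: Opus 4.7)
The plan is to derive the corollary as an immediate specialization of the second bound \eqref{equ:d2k_second} of Theorem \ref{thm:ub_main}, choosing the two parameters $r$ and $k$ together to make the $\lambda_{2k}(r)$ factor collapse to an absolute constant. Concretely, I would first take $r = n$ in \eqref{equ:d2k_second}, which gives
\[
S_{2k}(n, 1, n) \;\le\; 3c\,\lambda_{2k}(n)\cdot n
\]
for every $k \ge 2$. Since $S_d(n)$ is the minimum size of a depth-$d$ linear circuit encoding a code with distance $\ge 4n$, and since for a linear code the minimum distance equals the minimum nonzero codeword weight, the quantities $S_d(n)$ and $S_d(n,1,n)$ coincide (any nonzero input has $\wt(x) \in [1,n]$, hence its image has weight $\ge 4n$). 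So the displayed bound is a bound on $S_{2k}(n)$.

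Next, I would set $2k = \alpha(n)$. By definition, $\alpha(n)$ is the smallest \emph{even} $d$ with $\lambda_d(n) \le 6$, so $\alpha(n)$ is indeed even and $\lambda_{\alpha(n)}(n) \le 6$. For sufficiently large $n$ we have $\lambda_4(n) > 6$, so $\alpha(n) \ge 6$, i.e., $\alpha(n)/2 \ge 3 \ge 2$, and \eqref{equ:d2k_second} applies with $k = \alpha(n)/2$. Substituting yields
\[
S_{\alpha(n)}(n) \;\le\; 3c\,\lambda_{\alpha(n)}(n)\cdot n \;\le\; 18c\cdot n \;=\; O(n).
\]
For the finitely many small $n$ where $\alpha(n) < 6$, the bound $S_{\alpha(n)}(n) = O(n)$ is trivial, absorbed into the hidden constant.

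There is no real obstacle here: the work has already been carried out inside Theorem \ref{thm:ub_main}, whose key point was that the constant in front of $\lambda_{2k}(r)\cdot n$ does \emph{not} depend on $k$ (or equivalently on $d$). The corollary is essentially the tautological observation that, once the hidden constant in the $\lambda_d(n)\cdot n$ bound is absolute, we are free to let $d$ grow with $n$ and in particular to choose $d$ just large enough to force $\lambda_d(n)$ itself to be an absolute constant, which is exactly what the definition of $\alpha(n)$ guarantees. The only small care needed is to match the parity (noting $\alpha(n)$ is even by definition), to verify the legality of the index $k = \alpha(n)/2 \ge 2$ for large $n$, and to identify $S_{\alpha(n)}(n)$ with $S_{\alpha(n)}(n,1,n)$.
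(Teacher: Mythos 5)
Your proof is correct and takes essentially the same route as the paper, which simply states the corollary follows "by taking $r = n$ in Theorem \ref{thm:ub_main}." You have filled in the routine checks (parity of $\alpha(n)$, legality of $k = \alpha(n)/2 \ge 2$, identification of $S_d(n)$ with $S_d(n,1,n)$ via minimum weight of linear codes, and $\lambda_{\alpha(n)}(n)\le 6$) that the paper leaves implicit.
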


We have constructed a linear circuit of size $O(n)$ and depth $\alpha(n)$ encoding a code $C:\{0, 1\}^n \to \{0, 1\}^{32n}$ with relative distance $\frac{1}{8}$. By putting a rate booster at the bottom, one can achieve any constant rate and relative distance within the Gilbert-Varshamov bound.

\begin{proof} (of Theorem \ref{thm:depth_lb}) By Corollary \ref{cor:linear_328}, there exists a linear circuit of size $O(\lambda_d(n) \cdot n)$ and depth $d$ encoding a code $C:\{0, 1\}^n \to \{0, 1\}^{32n}$ with relative distance $\frac{1}{8}$.

By Lemma \ref{lem:rate_booster} with $c = \frac{1}{r}$, $\gamma = \delta$, $\delta = 4$, we know there exists a probabilistic circuit $D : \{0, 1\}^{32n} \to \{0, 1\}^{\lfloor n/r \rfloor}$ of depth $1$ and size $O_{r, \delta}(n)$ such that
\[
\Pr_D\left[\wt(D(x)) \ge \delta \cdot \left\lfloor \frac{n}{r}\right\rfloor\right] \;>\; 1 - 2^{-n}
\]
for any $x \in \{0, 1\}^{32n}$ with $\wt(x) \ge 4n$. Applying a union bound over all $C(x)$, where $0 \not= x \in \{0, 1\}^n$, we claim there exists a (deterministic) linear circuit $D$ of depth $1$ and size $O_{r, \delta}(n)$ such that $\wt(D(C(x))) \ge \delta \cdot \lfloor \frac{n}{r}\rfloor$ for all nonzero $x \in \{0, 1\}^n$. In addition, the output gates in circuit $D$ have bounded fanin $O_{r, \delta}(1)$.

Note that the size of the circuit $D(C(x))$ is $O(n) + O_{r, \delta}(n) = O_{r, \delta}(n)$ and the depth of the circuit is $d + 1$. Notice that the output gates have bounded fanin $O_{r, \delta}(1)$. So collapsing the last layer (for the linear circuit) will increase the size by a multiplicative factor $O_{r, \delta}(1)$. As such, we obtain a circuit of depth $d$ and size $O_{r, \delta}(n)$.

In particular, by letting $d = \alpha(n)$, we obtain a circuit of depth $\alpha(n)$ and size $O_{r, \delta}(n)$.
\end{proof}

\section{Depth lower bound}
\begin{definition} \label{def:densely_regular} (Densely regular graph \cite{Pud94}) Let $G$ be a directed acyclic graph with $n$ inputs and $n$ outputs. Let $0 < \epsilon, \delta$ and $0 \le \mu \le 1$. We
say $G$ is \emph{$(\epsilon, \delta, \mu)$-densely regular} if for every $k \in [\mu n, n]$, there are probability distributions $\mathcal{X}$
and $\mathcal{Y}$ on $k$-element subsets of inputs and outputs respectively, such that for every $i \in [n]$,
\[
\Pr_{X \in \mathcal{X}}[i \in X] \;\le\; \frac{k}{\delta n} \ , \quad\quad \Pr_{Y \in \mathcal{Y}}[i \in Y] \;\le\; \frac{k}{\delta n} \ ,
\]
and the expected number of vertex-disjoint paths from $X$ to $Y$ is at least $\epsilon k$ for randomly chosen $X \in \mathcal{X}$ and $Y \in \mathcal{Y}$.

Denote by $D(n, d, \epsilon, \delta, \eta)$ the minimal size of a $(\epsilon, \delta, \mu)$-densely regular layered directed acyclic graph with $n$ inputs and $n$ outputs and depth $d$. 
\end{definition}

\begin{theorem}
\label{thm:pudlak_lb}
(Theorem 3 in \cite{Pud94}) Let $\epsilon, \delta > 0$. For every $d \ge 3$, and every $r \le n$,
\[
D\left(n, d, \epsilon, \delta, \frac{1}{r}\right) \;=\; \Omega_{d, \epsilon, \delta}\left(n \lambda_d(r)\right).
\]	
\end{theorem}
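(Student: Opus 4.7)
The plan is to prove Theorem~\ref{thm:pudlak_lb} by induction on the depth $d$, closely following the structure of Pudl\'{a}k's argument in \cite{Pud94} but with explicit tracking of the hidden constants (which is what allows the extension to super-constant depths, as the footnote earlier in the paper remarks). The base cases ($d = 3$ and, if needed for parity, $d=4$) are handled directly: in such shallow graphs every input-to-output path passes through one or two internal layers, so one can pair up vertex-disjoint $X$-to-$Y$ paths with the vertices of these layers and count the edges incident to each such vertex. The key quantitative claim is that if the graph has fewer than $c_{\epsilon,\delta}\cdot n\lambda_d(r)$ edges, then the expected number of vertex-disjoint paths between random $(X, Y)$ drops below $\epsilon k$, contradicting the densely-regular assumption.

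For the inductive step, let $G$ be an $(\epsilon, \delta, 1/r)$-densely regular layered DAG of depth $d$ with $n$ inputs, $n$ outputs, and $m$ edges, and assume for contradiction that $m$ is much smaller than $c_{d,\epsilon,\delta}\cdot n\lambda_d(r)$. Since $G$ has $d-1$ internal vertex layers and $\sum_j |L_j| \le m$, the narrowest internal layer $L_j$ has size at most $m/(d-1)$. Split $G$ at $L_j$ into an upper piece $G^{\mathrm{up}}$ and a lower piece $G^{\mathrm{down}}$. Every vertex-disjoint path from $X\in\mathcal{X}$ to $Y\in\mathcal{Y}$ must route through $L_j$, so by averaging over $(X,Y)$ the distribution of pass-through vertices induces new probability distributions certifying that both halves are densely regular with essentially the same $\epsilon, \delta$ but with worse $\mu$, namely $\mu' \approx 1/r'$ where $\lambda_{d-2}(r')$ is of order $\lambda_d(r) - 1$. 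Applying the induction hypothesis to $G^{\mathrm{up}}$ (with $L_j$ playing the role of its outputs) or $G^{\mathrm{down}}$ (with $L_j$ playing the role of its inputs) then yields a recursion of the form $m \ge c_{d-2,\epsilon,\delta}\cdot n\cdot(\lambda_d(r)-1)$, which iterates to the claimed bound.

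The main obstacle is the careful reweighting of the probability distributions $\mathcal{X}, \mathcal{Y}$ after the cut, so that both halves genuinely inherit the densely-regular parameters at the right strength. Concretely, one must show that the induced distributions on $k'$-element subsets of $L_j$ satisfy the uniform marginal bound $\Pr[i\in\cdot]\le k'/(\delta'\,|L_j|)$ with $\delta'$ not much smaller than $\delta$, since otherwise $\delta$ (and hence the constant $c_{d,\epsilon,\delta}$) would degrade geometrically with $d$. Pudl\'{a}k handles this by a pigeonhole on the pass-through vertices that loses only a small constant factor in $\epsilon$ and $\delta$ per recursion step; the task I would undertake is to verify that this loss accumulates to at most a polynomial-in-$d$ factor after $\lambda_d(r)$ levels of recursion, i.e.\ that the recursion $c_d \ge \kappa\cdot c_{d-2}$ holds for some absolute $\kappa>0$ independent of $d$. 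This explicit tracking is exactly what is needed to push the lower bound up to $d=\alpha(n)-3$ as used in the proof of Theorem~\ref{thm:depth_lb}.
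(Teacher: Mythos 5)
Your outline gets the high-level goal right (induction on depth with explicit tracking of constants), but the mechanism you propose for the inductive step is not the one that produces the inverse-Ackermann factor, and I do not see how to repair it. Cutting at the narrowest internal layer $L_j$ (so $|L_j|\le m/(d-1)$) is a classic device, but by itself it only yields bounds of the form $m\ge \epsilon(d-1)n$ (since $\epsilon k\le|L_j|$ for $k$ up to $n$), which for fixed $d$ gives only $\Omega(n)$, not $\Omega(n\lambda_d(r))$. More importantly, your claim that after the cut one inherits densely-regular pieces with parameter $r'$ satisfying $\lambda_{d-2}(r')\approx\lambda_d(r)-1$ is not derived from anything in the argument: the two halves have $n$ inputs but only $|L_j|$ outputs (or vice versa), so they are not even of the $n$-inputs/$n$-outputs form required by Definition~\ref{def:densely_regular}, and nothing in the cut determines how $\mu$ should shrink. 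The $\lambda_d=\lambda_{d-2}^*$ recursion has to be produced by the proof, and in your sketch it is assumed rather than obtained.

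The actual argument (Pudl\'ak's Lemma~9, reproduced here as Lemma~\ref{lem:pudlak_depth_reduction}) does something quite different. It does not cut at a narrow interior layer; instead it looks at the two \emph{outermost} internal levels $V_1$ and $V_{d+1}$ and partitions their vertices by degree into bands $A_i$ keyed to the iterates $f^{(i)}(r)$ of a function $f$ with $f^*=\lambda_d$. Claim~\ref{claim:three_cases} then shows that for each $i$ one of three things holds: either a large number of high-degree vertices accounts for $\Omega(nf^*(r))$ edges, or a fixed $\Omega(\epsilon\delta n)$ edges are spent in each band (and summing over $\Theta(f^*(r))$ bands gives the bound), or one can delete $V_1\cup V_{d+1}$, shortcut the remaining degree-bounded vertices, and recurse on a depth-$d$ graph that is $(\tfrac{\epsilon}{2},\delta,1/f^{(i+1)}(r))$-densely regular. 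It is precisely this degree-threshold structure that forces the parameter to march through the iterates of $f$ and hence produces $f^*(r)=\lambda_{d+2}(r)$; it also keeps the loss per step to an absolute constant ($\beta=\min\{\epsilon\delta/27,\alpha\}$), which is what lets the bound survive to super-constant depth. Without the degree-band decomposition your recursion has no source for the $\lambda$-iteration, so the inductive step as you state it does not close.
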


To apply the above lower bounds when $d$ is not \emph{fixed}, we need to figure out the hidden constant that depends on $d, \epsilon, \delta$.

\begin{theorem}
\label{thm:pudlak_lb_refinement}
Let $\epsilon, \delta > 0$. For every $d \ge 3$, and every $r \le n$,
\[
D\left(n, d, \epsilon, \delta, \frac{1}{r}\right) \;\ge\; \Omega\left(2^{-d/2} \epsilon \delta^2 \lambda_d(r) n\right).
\]	
\end{theorem}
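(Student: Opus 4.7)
The plan is to revisit Pudl\'{a}k's original inductive proof of Theorem \ref{thm:pudlak_lb} in \cite{Pud94} and track every hidden constant, in particular its dependence on $\epsilon$, $\delta$, and the depth $d$. Pudl\'{a}k's argument proceeds by induction on $d$: starting from a depth-$d$ $(\epsilon,\delta,1/r)$-densely regular graph $G$, one identifies a ``thin'' intermediate layer whose removal accounts for $\Omega(n)$ wires, and shows that the residual graph contains a depth-$(d-2)$ densely regular subgraph with parameter $r' \le \lambda_{d-2}(r)$ and slightly degraded $(\epsilon', \delta')$. Unrolling this recursion $\lambda_d(r)$ times yields a bound of the claimed form, provided we can control the per-level degradation of constants.

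First, I would re-examine each lemma in Pudl\'{a}k's sequence and isolate the two kinds of constant losses that occur at each recursion step: (i) a factor $c_\epsilon \in (0,1)$ absorbed into the expansion parameter $\epsilon$ (from the fraction of vertex-disjoint paths that survive after the layer is removed), and (ii) a factor $c_\delta \in (0,1)$ absorbed into the density parameter $\delta$ (from restricting to a constant fraction of inputs and outputs). Both $c_\epsilon$ and $c_\delta$ are universal constants that do not depend on $d$. Crucially, each recursion step advances $d$ by $2$, and the additive per-level contribution is proportional to $\epsilon\delta^2 n$ because $\epsilon$ enters linearly in the expected number of vertex-disjoint paths, while each of the two probability constraints in Definition \ref{def:densely_regular} contributes a factor of $\delta$.

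Second, I would verify that the product $c_\epsilon c_\delta^2$ is at least $1/2$; this is where slack in Pudl\'{a}k's random-restriction arguments can be exploited, for instance by choosing the restriction distributions so that the loss in each probability constraint is as mild as possible. Once $c_\epsilon c_\delta^2 \ge 1/2$ is verified, summing the geometric contributions over $d/2$ recursion levels yields the $2^{-d/2}\epsilon\delta^2$ prefactor, and the $\lambda_d(r)$ factor comes from the total number of outer recursion steps, exactly as in Pudl\'{a}k.

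The main obstacle will be a meticulous audit of Pudl\'{a}k's estimates to confirm that every absorbed constant is indeed absolute and not secretly $d$-dependent, and that the recursion terminates cleanly when $\lambda_d(r)$ drops to $O(1)$ so that the base case ($d=3$) contributes only an absolute-constant factor rather than something catastrophic like $2^{-\Theta(d)}$ coming from a naive base-case estimate. The calculation is tedious but conceptually parallel to Pudl\'{a}k's; the novelty is only the explicit quantification needed to apply the lower bound at super-constant depths $d$, as required for the near-matching lower bound against circuits of depth up to $\alpha(n) - 3$.
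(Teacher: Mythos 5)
Your high-level plan---re-derive Pudl\'{a}k's induction with explicit constants---is the same approach the paper takes in Appendix B. But the mechanism you describe misreads the structure of Pudl\'{a}k's argument in two ways that would send a careful audit in the wrong direction.

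First, the outer recursion on depth is unrolled only $d/2$ times (from depth $d$ down to depth 1), not $\lambda_d(r)$ times. The $\lambda_d(r)$ factor does not come from the number of unrollings; it comes from the key amplification step inside Pudl\'{a}k's Lemma 9 (the paper's Lemma \ref{lem:pudlak_depth_reduction}): a bound of the form $D(n, d, \epsilon/2, \delta, 1/r) \ge \alpha n f(r)$ at depth $d$ is upgraded to $D(n, d+2, \epsilon, \delta, 1/r) \ge \beta n f^*(r)$ at depth $d+2$. Applying this with $f = \lambda_{d}$ turns $\lambda_d$ into $\lambda_d^* = \lambda_{d+2}$ in a single depth-reduction step, so after $d/2$ steps starting from the depth-1 base case $D(n,1,\epsilon,\delta,1/r) \ge \epsilon\delta^2 n r$ (Pudl\'{a}k's Lemma 8, the paper's Lemma \ref{lem:pudlak_d1}), one reaches $\lambda_d$. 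If you genuinely tried to unroll $\lambda_d(r)$ times you would run out of depth.

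Second, $\delta$ does not degrade at each step. In the paper's Lemma \ref{lem:pudlak_depth_reduction}, the hypothesis at depth $d$ is stated with $\epsilon/2$ and the \emph{same} $\delta$; the conclusion at depth $d+2$ recovers $\epsilon$ but keeps $\delta$ intact. The entire $2^{-d/2}$ loss is accounted for by halving $\epsilon$ at each of the $d/2$ steps, and the $\delta^2$ appears once, in the depth-1 base case. So there is no $c_\delta < 1$ to optimize, and the proposed ``verify $c_\epsilon c_\delta^2 \ge 1/2$'' step is not the right target---the paper simply has $c_\epsilon = 1/2$ exactly and no compounding $\delta$-loss at all, which is what makes the constants clean: $\beta = \min\{\epsilon\delta/27, \alpha\}$ per step, with $\alpha$ halved each time, giving the geometric $2^{-d/2}\epsilon\delta^2$ directly. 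The remaining small technical point the paper handles, which your plan does not mention, is the case-analysis in Lemma \ref{lem:pudlak_depth_reduction} (large-degree vertices vs.\ many edges near a degree band vs.\ the residual graph), together with a slight sharpening of Pudl\'{a}k's Lemma 5 to Lemma \ref{lem:f_star} to avoid losing a further factor of 2 in $f^*(r)$.
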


The constant in $\Omega(2^{-d/2} \epsilon \delta^2 \lambda_d(r) n)$ is an absolute constant. The proof of Theorem \ref{thm:pudlak_lb_refinement} is almost the same as Theorem \ref{thm:pudlak_lb}, which is in Appendix \ref{app:pudlak_lb_ref}.

\begin{corollary}
\label{cor:code_dr}
(Corollary 15 in \cite{GHK+12}) Let $0 < \rho, \delta < 1$ be constants and $C$ be a circuit encoding an error-correcting code $\{0, 1\}^{\rho n} \to \{0, 1\}^n$ with relative distance at least $\delta$. If we extend the underlying graph with $(1-\rho)n$ dummy inputs, then its underlying graph is $(\rho \delta, \rho, \frac{1}{n})$-densely regular.
\end{corollary}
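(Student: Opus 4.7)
The plan is to exhibit natural uniform distributions and derive the expected-flow bound from the distance property combined with Menger's theorem. Let $\mathcal{X}$ be the uniform distribution on $k$-element subsets of the $\rho n$ real inputs and $\mathcal{Y}$ the uniform distribution on $k$-element subsets of the $n$ outputs. For each real input $i$, $\Pr[i \in X] = k/(\rho n)$, while dummy inputs are never chosen; for each output $j$, $\Pr[j \in Y] = k/n \le k/(\rho n)$. This verifies the marginal bound of the densely-regular definition with its ``$\delta$'' parameter played by $\rho$.

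Two structural ingredients from having a good code will drive the rest of the argument. First, for every nonempty $A \subseteq$ real inputs, setting the non-$A$ inputs to $0$ makes $C(0)$ and $C(e_A)$ differ in at least $\delta n$ positions by the distance property, and every differing position must be reachable from $A$ in the underlying graph, so $|R(A)| \ge \delta n$, where $R(A)$ denotes the set of reachable outputs. Second, since the code has distance at least $1$ and hence is injective, any vertex cut $S$ that separates the reals from the outputs must satisfy $2^{|S|} \ge 2^{\rho n}$ (because all $2^{\rho n}$ output tuples factor through the $S$-values), so $|S| \ge \rho n$; by Menger's theorem this yields a global routing matching $\sigma:\text{reals} \to \text{outputs}$ realized by $\rho n$ vertex-disjoint paths.

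With these in place I would bound $\mathbb{E}[\nu(X,Y)]$ in two regimes. In the large-$k$ regime (say $k \ge \rho \delta n$), tracking paths of $\sigma$ through the random $(X,Y)$ pair and using linearity of expectation gives $\mathbb{E}[\nu(X, Y)] \ge \mathbb{E}\bigl[|\{i \in X : \sigma(i) \in Y\}|\bigr] = k^2/n \ge \rho \delta k$. In the small-$k$ regime, $|R(X)| \ge \delta n$ implies $\Pr_Y[R(X) \cap Y = \emptyset] \le \binom{(1-\delta)n}{k}/\binom{n}{k} \le (1-\delta)^k$, so $\mathbb{E}[\nu(X,Y)] \ge \Pr[\nu \ge 1] \ge 1 - (1-\delta)^k$, which for $k \lesssim 1/\delta$ expands to $k\delta(1 + o(1)) \ge \rho \delta k$.

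The main obstacle is the intermediate regime of $k$ (roughly $1/\delta \lesssim k \lesssim \rho \delta n$), where the global-matching estimate $k^2/n$ is still too weak and $\Pr[\nu \ge 1]$ has already saturated near $1$ and cannot supply the required growth with $k$. I would handle this by the tail-sum $\mathbb{E}[\nu(X, Y)] = \sum_{t \ge 1} \Pr[\nu(X, Y) \ge t]$, bounding each $\Pr[\nu \ge t]$ by applying the structural ingredients to \emph{sub}-subsets $X' \subseteq X$ of appropriate size (each of which still satisfies $|R(X')| \ge \delta n$ by the first ingredient, and admits $|X'|$ vertex-disjoint paths to the output side by a local version of the min-cut argument), then combining with Menger duality. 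Establishing these tail bounds quantitatively is the technical heart of the argument and closely mirrors the superconcentrator-style connection analyses of G\'{a}l \emph{et al.} and of Pudl\'{a}k.
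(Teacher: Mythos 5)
The paper does not actually prove this corollary; it cites it directly from G\'{a}l \emph{et al.}\ \cite{GHK+12}, where the argument is carried out with matroid theory (the relevant object is the gammoid of the underlying graph with source set $X$, and the code's distance is used to constrain its rank function so that a uniformly random $k$-subset $Y$ of outputs has expected rank $\ge \delta k$ for \emph{every} $k$). So the yardstick here is whether your attempt is a complete proof in its own right, and it is not.

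Your own last paragraph concedes the gap: the elementary two-regime argument (global matching giving $\mathbb{E}[\nu] \ge k^2/n$, and $\Pr[\nu \ge 1]$ from the reachability bound $|R(X)| \ge \delta n$) only covers $k \ge \rho\delta n$ and $k \lesssim 1/\delta$, leaving the entire middle regime $1/\delta \lesssim k \lesssim \rho\delta n$ unhandled. The sketch you give for that regime --- summing $\Pr[\nu(X,Y)\ge t]$ over $t$ and ``applying the structural ingredients to sub-subsets $X' \subseteq X$'' --- does not currently contain a mechanism that produces $t$ vertex-disjoint paths: the reachability bound $|R(X')| \ge \delta n$ only supplies $\nu(X',Y) \ge 1$ for most $Y$, and stacking such statements over different $X'$ does not obviously yield disjointness or a growing lower bound on $\nu$. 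That aggregation step is exactly where G\'{a}l \emph{et al.}'s matroid (gammoid) machinery enters and cannot be routed around by the union-bound style estimates used in your other two regimes. Until that step is supplied, the proposal is a plausible program, not a proof.

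A secondary issue: you define $\mathcal{X}$ as uniform over $k$-subsets of the $\rho n$ \emph{real} inputs only, but the densely-regular property must hold for all $k \in [1, n]$, and for $k > \rho n$ there are no $k$-subsets of the real inputs. One must let $X$ pick up dummy inputs in that range (e.g.\ take all reals and $k - \rho n$ random dummies) and re-check the marginal bound $\Pr[i \in X] \le k/(\rho n)$, which does hold but needs to be said. This is fixable, unlike the main gap.

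Finally, a small caution on the ``global routing matching'' step: the min-cut $\ge \rho n$ argument via injectivity is sound, but Menger gives $\rho n$ vertex-disjoint paths from the input side to the output side; to get a well-defined bijection $\sigma$ from \emph{all} $\rho n$ real inputs to distinct outputs one should say why the paths can be taken to have distinct real-input endpoints (they can, since the inputs form one side of the cut-set pair, but this deserves a sentence).
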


The proof of Theorem \ref{thm:depth_lb} readily follows from Theorem \ref{thm:pudlak_lb_refinement} and Corollary \ref{cor:code_dr}.

\begin{proof} (of Theorem \ref{thm:depth_lb}) By Corollary \ref{cor:code_dr}, the underlying graph $G$ of the circuit $C_n$, is $\left(\rho' \delta, \rho', \frac{1}{\lfloor n/\rho \rfloor}\right)$-densely regular with $\lfloor n/\rho \rfloor$ inputs and $\lfloor n/\rho \rfloor$ outputs, by adding dummy inputs, where
$
\rho' = \frac{n}{\lfloor n/\rho \rfloor} > \frac{1}{2} \rho.
$

Notice that graph $G$ is not necessarily layered. We can always make $G$ layered, by increasing the size by a factor of $d$, where $d$ is the depth of $G$. By slightly abusing notations, we use $G$ to denote a layered $(\rho' \delta, \rho', \frac{1}{\lfloor n/\rho \rfloor})$-densely regular graph of depth $d$, where $|E(G)| \le cdn$.

On the other hand, by Theorem \ref{thm:pudlak_lb_refinement}, we have
\[
|E(G)| \;\ge\; \Omega\left(2^{-d/2} \delta \rho'^3 \lambda_d(\lfloor n/\rho \rfloor) \lfloor n/\rho \rfloor\right).
\]
Combining with $|E(G)| \le cdn$, we have
\begin{equation}
\label{equ:lb_bound_lamd}
\lambda_d(\lfloor n/\rho \rfloor) \le O\left(\frac{1}{\delta \rho'^2} c d 2^{d/2}\right) = O_{\rho,\delta, c}\left(d 2^{d/2}\right).
\end{equation}

Therefore,
\begin{align*}
\lambda_{d+2}(n) \;\;\le\;\; &
	\lambda_{d+2}(\lfloor n/\rho \rfloor) \\
	  \;\;=\;\; & \lambda_d^*(\lfloor n/\rho \rfloor) \\
	 \;\;\le\;\; & \lambda_d^*(\lambda_d(\lfloor n/\rho \rfloor)) + 1 && \text{By Definition \ref{def:lambda}} \\
	 \;\;\le\;\; & \lambda_d^*(O_{\rho,\delta, c}(1) \cdot d 2^{d/2})  + 1 && \text{By \eqref{equ:lb_bound_lamd}}\\
	 \;\;\le\;\; & \lambda_d^*(\lambda_d(O_{\rho,\delta, c}(1) \cdot d 2^{d/2})) + 2 && \text{By Definition \ref{def:lambda}} \\
	 \;\;\le\;\; & \lambda_d^*(d) + 2 && \text{By Proposition \ref{prop:lambda_dd}}\\
	 \;\;\le\;\; & 6. && \text{By Proposition \ref{prop:inv_ack_def_properties}}
\end{align*}
So, we conclude $d+2 \ge \alpha(n)$, i.e., $d \ge \alpha(n) - 2$.
\end{proof}

We would like to point out that, alternatively, one can use a powerful lemma by Raz and Shipilka \cite{RS03}, to prove the depth lower bound  (i.e., Theorem \ref{thm:depth_lb}).

\section{Superconcentrator-induced codes}

It is already known that circuits for encoding error-correcting codes must satisfy some superconcentrator-like connectivity properties. For example, Speilman \cite{Spi96} observed that any circuits encoding codes from $\{0, 1\}^n$ to $\{0, 1\}^m$ with distance $\delta m$ must have $\delta n$ vertex-disjoint paths connecting any chosen $\delta n$ inputs to any set of $(1-\delta)m$ outputs; using matroid theory, G\'{a}l \emph{et al.} \cite{GHK+12} proved that, for any $k \le n$, for any $k$-element subset of inputs $X$, taking a random $k$-element subset of outputs $Y$, the expected number of vertex-disjoint paths from $X$ to $Y$ is at least $\delta k$.

We observe a connection in the \emph{reverse} direction by showing that \emph{any} superconcentrator graph, converted to an arithmetic circuit over a sufficiently large field, can encode a good code. 
(Recall that a directed acyclic graph $G = (V, E)$ with $m$ inputs and $n$ outputs is an \emph{$(m, n)$-superconcentrator}, if for any equal-size inputs $X \subseteq [m]$ and outputs $Y \subseteq [n]$, the number of vertex-disjoint paths from $X$ to $Y$ is $|X|$.) Furthermore, the code $C: \mathbb{F}^n \to \mathbb{F}^m$ (encoded by the above circuits) satisfies a distance criterion, stronger than MDS (maximum distance separable) codes, captured by the following definition.


\begin{definition} 
\label{def:sc_codes}
(Superconcentrator-induced code) $C : \mathbb{F}^n \to \mathbb{F}^m$ is a \emph{\superconcentrator code} if
\[
\dist(C(x), C(y)) \;\ge\; m - \dist(x, y) + 1
\]
for any distinct $x, y \in \mathbb{F}^n$.
\end{definition}

For a linear code $C : \mathbb{F}^n \to \mathbb{F}^m$, it is well known that $C$ is an MDS code, i.e., $C$ satisfies the Singleton bound
\[
\dist(C(x), C(y)) \;\ge\; m - n + 1 \quad \forall x \not= y \in \mathbb{F}^n,
\]
if and only if any $n$ columns of its generator matrix are linearly independent. Similarly, we show that a linear code is a \superconcentrator code if and only if every square submatrix of its generator matrix is nonsingular. (Such matrices are sometimes called \emph{totally invertible matrices}.)

\begin{lemma}
\label{lem:linear_algebra_cha}
Linear code $C : \mathbb{F}^n \to \mathbb{F}^m$ is a \superconcentrator code if and only if every square submatrix of its generator matrix is nonsingular.
\end{lemma}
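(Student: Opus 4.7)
The plan is to reduce the defining inequality of a \superconcentrator code to a purely linear-algebraic statement about the generator matrix $G \in \mathbb{F}^{m \times n}$ (with columns indexed by input coordinates and rows by output coordinates), and then show the equivalence by contrapositive in both directions.

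First I would exploit linearity. Setting $z = x-y$, the condition $\dist(C(x),C(y)) \ge m - \dist(x,y)+1$ for all $x \ne y$ becomes: for every nonzero $z \in \mathbb{F}^n$ with $k := \wt(z)$,
\[
\wt(Gz) \,\ge\, m - k + 1, \quad\text{equivalently,}\quad \#\{i : (Gz)_i = 0\} \,\le\, k-1.
\]
So the \superconcentrator condition is precisely: for every nonzero $z$ with support $S := \supp(z)$ of size $k$, the vector $Gz = \sum_{i \in S} z_i G_i$ (where $G_i$ denotes the $i$-th column) vanishes on at most $k-1$ row coordinates.

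Next I would prove the two implications. For the ``only if'' direction, suppose some $k \times k$ submatrix $M$ of $G$, with row set $T \subseteq [m]$ and column set $S \subseteq [n]$ (both of size $k$), is singular. Pick a nonzero $w \in \mathbb{F}^k$ with $Mw = 0$ and extend it to $z \in \mathbb{F}^n$ supported in $S$ by placing $w$ on $S$ and $0$ elsewhere. Then $z$ is nonzero, $\wt(z) \le k$, and $(Gz)_i = 0$ for every $i \in T$, so $Gz$ has at least $k$ zero entries. This contradicts the bound $\#\text{zeros}(Gz) \le \wt(z) - 1 \le k-1$, so $C$ is not a \superconcentrator code.

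For the ``if'' direction, suppose every square submatrix of $G$ is nonsingular, and let $z \ne 0$ with $k = \wt(z)$ and $S = \supp(z)$. If $Gz$ had at least $k$ zero entries, choose any $k$-subset $T \subseteq [m]$ of those zero coordinates; then the $k \times k$ submatrix $M$ of $G$ with rows $T$ and columns $S$ satisfies $M \cdot (z|_S) = 0$ with $z|_S \ne 0$, contradicting nonsingularity of $M$. Hence $Gz$ has at most $k-1$ zero entries, establishing the \superconcentrator bound. There is no real obstacle here; the only thing to be careful about is that the support of $z$ may be a proper subset of $S$ in the singular-submatrix argument, but since the required bound $\wt(Gz)\ge m-\wt(z)+1$ only gets stronger as $\wt(z)$ decreases, the contradiction remains valid.
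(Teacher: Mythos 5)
Your proof is correct and takes essentially the same approach as the paper's: both directions are argued by (contra)positive using the $k\times k$ submatrix indexed by the support of a low-weight $z$ and a $k$-subset of zero coordinates of $Gz$, differing only in the transpose convention for the generator matrix (the paper uses $M\in\mathbb{F}^{n\times m}$ with $C(x)=xM$, you use $G\in\mathbb{F}^{m\times n}$ with $C(z)=Gz$). Your closing remark handling the case $\wt(z)<k$ is a careful touch that the paper addresses more tersely via the chain $m-|Y|\le m-\wt(x)$.
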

\begin{proof}
For the ``if'' direction, assume for contradiction that there exists a nonzero $x \in \mathbb{F}^n$ such that $\wt(C(x)) \le m - \wt(x)$. Let $X = \supp(x) \subseteq [n]$, and let $Y$ be any subset of $[m] \setminus \supp(C(x))$ of size $|X|$. Denote the generator matrix by $M \in \mathbb{F}^{n \times m}$. By the definition of the generator matrix, we have $C(x) = x M$. Thus,
\begin{align*}
C(x)|_Y \;=\; & x M_{[n], Y} \\
\;=\; & x|_X M_{X, Y} + x|_{[n] \setminus X} M_{[n] \setminus X, Y} \\
\;=\; & x|_X M_{X, Y} && {\text{since } x|_{[n] \setminus X} = 0,}
\end{align*}
where $M_{X, Y}$ denotes the $|X| \times |Y|$ submatrix indexed by rows $X$ and columns $Y$. Since matrix $M_{X, Y}$ is nonsingular, and $x|_X$ is nonzero, we claim $C(x)|_Y$ is nonzero. This contradicts the definition of $Y$.

For the ``only if'' direction, we prove the contraposition. That is, if there exists a singular square submatrix, then $C$ is not a \superconcentrator code. Suppose there exist equal-size $X \subseteq [n]$ and $Y \subseteq [m]$ such that $M_{X, Y}$ is singular. Since $\rank M_{X, Y} < |X|$, there exists a nonzero $x' \in \mathbb{F}^{|X|}$ such that $x' M_{X, Y} = 0$. Extend $x'$ to a vector $x \in \mathbb{F}^n$ by adding zeros on the coordinates outside of $X$, that is, $x' = x|_{\supp(x)}$. As such, we have
\begin{align*}
	C(x)|_Y \;=\; & (x M)|_Y \\
	 \;=\; & x M_{[n], Y} \\
	 \;=\; & x|_{X} M_{X, Y} + x|_{[n] \setminus X} M_{[n] \setminus X, Y} \\
	 \;=\; & x|_{X} M_{X, Y} && \text{ since $x_{[n] \setminus X}$ is zero} \\
	 \;=\; & x' M_{X, Y} \\
	 \;=\; & 0,
\end{align*}
which implies that $\wt(C(x)) \le m - |Y| = m - \wt(x)$. That is, $C$ is not a \superconcentrator code.
\end{proof}

We justify the naming ``\superconcentrator codes'' by proving
\begin{itemize}
	\item Any (unrestricted) arithmetic circuit encoding a \superconcentrator code, viewed as a graph, must be a superconcentrator.
	\item Any superconcentrator can be converted to an arithmetic circuit encoding a \superconcentrator code, by replacing each vertex with an addition gate over a sufficiently large finite field, and choosing the coefficients uniformly at random. (The field size is exponential in the size of the graph.)
\end{itemize}

We consider the following unrestricted arithmetic circuit model encoding a \superconcentrator code $C:\mathbb{F}^n \to \mathbb{F}^m$:
\begin{itemize}
	\item Let $\mathbb{F}$ be a finite field.
	\item There are $n$ inputs, denoted by $x_1, x_2, \ldots, x_n \in \mathbb{F}$; there are $m$ outputs, denoted by $y_1, y_2, \ldots, y_m \in \mathbb{F}$.
	\item Gates have unbounded fanin and fanout. The size of the circuit is defined to be the number of wires, instead of gates.
	\item Each internal and output gate can compute \emph{any} function from $\mathbb{F}^s \to \mathbb{F}$, where $s$ is the fanin of the gate (unbounded). 
\end{itemize}
In this model, any single output function from $\mathbb{F}^n$ to $\mathbb{F}$ can be computed by a single gate. It makes sense to consider the circuit complexity of multiple output functions.

\begin{lemma} Any unrestricted arithmetic circuit encoding a \superconcentrator code must be a superconcentrator.
\end{lemma}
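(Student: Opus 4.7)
The plan is to argue by contradiction via a Menger-type cut argument, working entirely at the level of the underlying DAG and ignoring what the gates actually compute. Suppose the circuit graph $G$ for a superconcentrator-induced code $C:\mathbb{F}^n\to\mathbb{F}^m$ is not an $(n,m)$-superconcentrator. Then there exist equal-size subsets $X\subseteq[n]$ of input vertices and $Y\subseteq[m]$ of output vertices admitting fewer than $|X|$ vertex-disjoint paths from $X$ to $Y$. After adding a super-source joined to every vertex of $X$ and a super-sink joined from every vertex of $Y$, Menger's theorem produces a set $S\subseteq V(G)$ with $|S|<|X|$ that meets every directed path from $X$ to $Y$ inside $G$.

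The key observation is structural and does not use linearity: in any arithmetic circuit, the value computed at a vertex is a function of the values at its in-neighbors. Fix all inputs outside $X$ to $0$ and let the inputs on $X$ vary freely. Since every $X$-to-$Y$ path passes through $S$, the restricted map $\mathbb{F}^X\to\mathbb{F}^Y$ induced by $C$ factors as $\mathbb{F}^X\to\mathbb{F}^S\to\mathbb{F}^Y$, where the first arrow records the tuple of values at the vertices of $S$ and the second is determined by the sub-circuit below $S$.

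Because $|\mathbb{F}|^{|S|}<|\mathbb{F}|^{|X|}$, pigeonhole gives two distinct vectors $x\ne x'$ in $\mathbb{F}^n$, both supported on $X$, whose values at $S$ coincide, and hence $C(x)|_Y=C(x')|_Y$. Then $\dist(C(x),C(x'))\le m-|Y|=m-|X|$, while $\dist(x,x')\le|X|$ because $x-x'$ is supported on $X$. Plugging these into the superconcentrator-induced code inequality $\dist(C(x),C(x'))\ge m-\dist(x,x')+1$ gives
\[
m-|X|+1\;\le\;\dist(C(x),C(x'))\;\le\;m-|X|,
\]
a contradiction, so $G$ must be an $(n,m)$-superconcentrator.

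The main (minor) technical point to verify is that the Menger cut behaves correctly even when $S$ overlaps $X$ or $Y$. This is handled cleanly by the standard super-source/super-sink reduction: a minimum vertex cut there corresponds to a set of at most $|X|-1$ genuine vertices of $G$ (possibly lying in $X\cup Y$ themselves) whose deletion disconnects every $X$-to-$Y$ path, and including an input or output vertex in $S$ only strengthens the factoring $\mathbb{F}^X\to\mathbb{F}^S\to\mathbb{F}^Y$. Once this is noted, the argument above goes through uniformly for arbitrary (not necessarily linear) gate functions over any finite field $\mathbb{F}$.
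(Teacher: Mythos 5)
Your proof is correct and follows essentially the same approach as the paper: contradiction, Menger's theorem to extract a vertex cut of size less than $|X|$, fixing inputs outside $X$, pigeonhole over the cut to find distinct $x,x'$ with identical outputs on $Y$, and then violating the distance inequality $\dist(C(x),C(x'))\ge m-\dist(x,x')+1$. The only cosmetic difference is that you spell out the super-source/super-sink reduction and the factoring $\mathbb{F}^X\to\mathbb{F}^S\to\mathbb{F}^Y$, a detail the paper states more tersely.
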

\begin{proof} Let $C:\mathbb{F}^n \to \mathbb{F}^m$ be an unrestricted arithmetic circuit encoding a \superconcentrator code. Let $G$ denote the underlying directed acyclic graph with $n$ inputs and $m$ outputs.

Suppose for contradiction that $G$ is not an $(n, m)$-superconcentrator, that is, there exist equal-size inputs $S \subseteq [n]$ and outputs $T \subseteq [m]$ such that the number of vertex-disjoint paths from $S$ to $T$ is less than $|S|$. By Menger's theorem, there exists a cut vertex set, denoted by $V$, whose removal disconnects $S$ and $T$ and $|V| < |S|$.

Fix the inputs outside of $S$ arbitrarily, while leave the inputs in $S$ undetermined. So, the outputs in $T$ are completely determined by the gates in $V$, the ``cut vertices'' between $S$ and $T$. Since $|V| < |S|$, the outputs in $T$ can take at most $|\mathbb{F}|^{|V|} < |\mathbb{F}|^{|S|}$ values. By the pigeonhole principle, there exist $x, x' \in \mathbb{F}^n$ such that 
\begin{itemize}
	\item $x|_{S} \not= x'|_{S}$,
	\item $x|_{[n] \setminus S} = x'|_{[n] \setminus S}$, and
	\item $C(x)|_T = C(x')|_T$.
\end{itemize}
Therefore, $\dist(C(x), C(x')) \le m - |T| = m - |S|$, and $\dist(x, x') \le |S|$, which implies $\dist(C(x), C(x')) \le m -  \dist(x, x') $. This contradicts the definition of \superconcentrator codes.
\end{proof}


In the reverse direction, we prove the following.

\begin{lemma}
\label{lem:sc_codes_on_sc}
Let $G$ by an $(n, m)$-superconcentrator. Let $C_G : \mathbb{F}^n \to \mathbb{F}^m$ be an arithmetic circuit by replacing each vertex in $G$ with an addition gate, and choosing the coefficient on each edge uniformly and random (over the finite field $\mathbb{F}$). With probability at least
$
1-\sum_{i = 1}^n {n \choose i}{m \choose i} \frac{di}{|\mathbb{F}|},
$ $C_G$ encodes a \superconcentrator code.
\end{lemma}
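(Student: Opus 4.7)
The plan is to invoke Lemma~\ref{lem:linear_algebra_cha} and show that, over the random choice of edge coefficients, every square submatrix of the generator matrix of $C_G$ is nonsingular with the claimed probability. First, I would observe that $C_G$ computes a linear map $x \mapsto xM$, where $M \in \mathbb{F}^{n \times m}$ has entries
\[
M_{u,v} \;=\; \sum_{P:\, u \to v} \prod_{e \in P} c_e,
\]
the sum being over directed paths $P$ in $G$ from input $u$ to output $v$. Viewed formally, each entry of $M$, and hence the determinant of any $i \times i$ submatrix $M_{X,Y}$, is a polynomial in the random variables $\{c_e\}_{e \in E(G)}$. Because every path in $G$ has length at most $d$, and the Leibniz expansion of $\det(M_{X,Y})$ is a product of $i$ path-sum entries, every monomial has total degree at most $di$.

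Next, I would prove that for any fixed $X \subseteq [n]$, $Y \subseteq [m]$ with $|X|=|Y|=i$, the polynomial $\det(M_{X,Y})$ is not identically zero. By the $(n,m)$-superconcentrator property of $G$, there exist $i$ vertex-disjoint paths $P_1, \ldots, P_i$ from $X$ to $Y$; say $P_j$ goes from $x_j$ to $y_{\pi(j)}$ for some bijection $\pi : X \to Y$. I would evaluate $\det(M_{X,Y})$ at the specific assignment $c_e^\star = 1$ if $e$ lies on some $P_j$ and $c_e^\star = 0$ otherwise. Because the $P_j$ are vertex-disjoint and $G$ is a DAG, the only $X$-to-$Y$ directed paths surviving in the subgraph of nonzero edges are the $P_j$ themselves (any deviation must traverse a zeroed edge). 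Thus the specialization of $M_{X,Y}$ is the permutation matrix for $\pi$, whose determinant is $\pm 1 \ne 0$, and hence $\det(M_{X,Y})$ is a nonzero polynomial of degree at most $di$ in the variables $\{c_e\}$.

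Applying the Schwartz--Zippel lemma then gives $\Pr[\det(M_{X,Y}) = 0] \le di/|\mathbb{F}|$ over the uniform random choice of $\{c_e\}$. Taking a union bound over $i \in \{1,\ldots,n\}$ and over the $\binom{n}{i}\binom{m}{i}$ pairs $(X,Y)$ of size $i$ yields a total failure probability of at most $\sum_{i=1}^n \binom{n}{i}\binom{m}{i}\frac{di}{|\mathbb{F}|}$. The complementary event is that every square submatrix of $M$ is nonsingular, which by Lemma~\ref{lem:linear_algebra_cha} is exactly the statement that $C_G$ encodes a \superconcentrator code.

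The main step is the construction of the evaluation point witnessing that $\det(M_{X,Y})$ is a nonzero polynomial; this is where the superconcentrator hypothesis is used, and where \emph{vertex}-disjointness (rather than merely edge-disjointness) is essential, to rule out alternative surviving $X$-to-$Y$ paths that could splice together pieces of different $P_j$'s through a shared internal vertex. Everything else is a routine Schwartz--Zippel union bound, with the degree bound $di$ coming directly from the depth-$d$ structure of $G$.
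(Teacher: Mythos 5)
Your proof is correct and follows essentially the same approach as the paper: express the generator matrix entries as path polynomials, use the $i$ vertex-disjoint paths guaranteed by the superconcentrator property to exhibit an evaluation making $\det(M_{X,Y})$ nonzero, then apply Schwartz--Zippel with degree bound $di$ and a union bound over all $(X,Y)$. Your explicit argument for why the specialization yields a permutation matrix (vertex-disjointness killing all stray paths) is a welcome detail that the paper leaves implicit.
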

\begin{proof}
Denote the inputs by $x_1, x_2, \ldots, x_n$, and outputs by $y_1, y_2, \ldots, y_m$. For each edge $e \in E(G)$, assign a random coefficient $r_e$. Let $M$ be the $n \times m$ generator matrix of the linear code $C_G$. One can see that
\[
M_{i, j} \;=\; \sum_{\text{path $p$ from $x_i$ to $y_j$}} \prod_{e \in E(p)} r_e,
\]	
where $p$ enumerates \emph{all} paths from $x_i$ to $y_j$. View $M_{i, j}$ as a (multilinear) polynomial in $\mathbb{F}[\{r_e\}_{e \in E(G)}]$. Note that the degree of $M_{i, j}$ is exactly the length of the longest path from $x_i$ to $y_j$.

\begin{claim}
\label{claim:det0}
For any equal-size subsets $X \subseteq [n]$ and $Y \subseteq [m]$,
\[
\Pr[\det(M_{X, Y}) = 0] \;\le\; \frac{d|X|}{|\mathbb{F}|},
\]	
where $d$ is the depth of the superconcentrator.
\end{claim}
\begin{proof} (of Claim \ref{claim:det0}) Note that $\det(M_{X, Y})$ is a polynomial of degree at most $d|X|$, since each entry is of degree at most $d$. By the definition of superconcentrator, there are $|X|$ vertex-disjoint paths connecting $X$ and $Y$. Setting $r_e$ to 1 for all edges $e$ on these $|X|$ vertex-disjoint paths, and leaving all others 0, the polynomial $\det(M_{X, Y})$ evaluates to $\pm 1$, which implies that $\det(M_{X, Y})$ is a nonzero polynomial. The claim follows from the Schwarz-Zipple lemma.
\end{proof}

By Lemma \ref{lem:linear_algebra_cha}, $C_G$ encodes a \superconcentrator code if and only if all square submatrix of $M$ is nonsingular. Taking a union bound over all equal-size $X \subseteq [n]$ and $Y \subseteq [m]$, we have
\[
\Pr[C_G \text{ encodes a \superconcentrator code}] \;\ge\; 1 - \sum_{i=1}^n {n \choose i}{m \choose i} \frac{di}{|\mathbb{F}|}.
\]
\end{proof}

\section{Conclusion}

In this work, we determine, up to an additive constant 2, the minimum depth required to encode asymptotically good error-correcting codes, i.e., codes with constant rate and constant relative distance, using a linear number of wires. The minimum depth is between $\alpha(n)-2$ and $\alpha(n)$, where $\alpha(n)$ is a version of the inverse Ackermann function. The upper bound is met by certain binary codes we construct (building on G\'{a}l \emph{et al.} \cite{GHK+12} with a few new ingredients) for any constant rate and constant relative distance within the Gilbert-Varshamov bounds. The lower bound applies to any constant rate and constant relative distance. We credit the lower bound to G\'{a}l \emph{et al.} \cite{GHK+12}, although not explicitly claimed or fully verified in \cite{GHK+12}; because our contribution is a routine checking of details.

Valiant articulated graph-theoretic arguments for proving circuit lower bounds \cite{Val75, Val76, Val77}. Since then, there have been fruitful results along this line. We show a result in the reverse direction, that is, we prove that any superconcentrator (after being converted to an arithmetic circuit) can encode a good code over a sufficiently large field (exponential in the size of the superconcentrator graph).

\printbibliography

\appendix

\section{Properties of the inverse Ackermann-type functions}
\label{app:inv_Ack_properites}

\begin{proposition}
\label{prop:lambda_rs}
(Claim 2.4 in \cite{RS03}) 
	\item[(i).] Each $\lambda_i(n)$ is a monotone function tending to infinity with $n$.
	\item[(ii).] For $i \ge 2$, $\lambda_{2i+1}(n) \le \lambda_{2i}(n) \le 2\lambda_{2i+1}(n)$.
	\item[(iii).] For $i \ge 2$ and $n \ge 2^7$, $\lambda_i(n) \le \lfloor \sqrt{\frac{n}{2}}\rfloor$.
\end{proposition}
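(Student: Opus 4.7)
The plan is to prove the three parts together by induction on $i$, bootstrapping (iii) from (ii), (ii) from (i), and isolating an auxiliary double-iteration inequality as the one non-routine step. Part (i) is the foundation: the cases $i = 1, 2$ follow from the closed forms, and for $i \ge 3$ I would use $\lambda_i = \lambda_{i-2}^*$. Monotonicity in $n$ lifts from monotonicity of $\lambda_{i-2}$ since $\lambda_{i-2}^{(k)}$ is monotone for each $k$, making the least $k$ with $\lambda_{i-2}^{(k)}(n) \le 1$ non-decreasing in $n$. Divergence follows by a sub-induction on $M$ showing $\lambda_{i-2}^{(M)}$ itself tends to infinity, so it exceeds $1$ for all large $n$, forcing $\lambda_i(n) > M$.

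For the ``$\le$'' half of (ii), I would prove the stronger claim $\lambda_{j+1}(n) \le \lambda_j(n)$ for every $j \ge 2$ and sufficiently large $n$, by induction on $j$. The base $j = 2$ asserts $\lambda_1^*(n) \le \lceil \log n \rceil$, which holds because iterating $\lfloor \sqrt{\cdot} \rfloor$ reduces $n$ below $2$ within about $\log \log n + O(1)$ steps. The inductive step invokes the order-preservation principle that pointwise $f(n) \le g(n)$ for monotone $f,g$ implies $f^*(n) \le g^*(n)$ (a one-line induction on $k$ yields $f^{(k)} \le g^{(k)}$), applied to $f = \lambda_{j-1}$, $g = \lambda_{j-2}$. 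This monotonicity chain also delivers (iii): since $\lambda_i(n) \le \lambda_2(n)$ for $i \ge 2$ in the relevant range, the claim reduces to $\lceil \log n \rceil \le \lfloor \sqrt{n/2} \rfloor$ for $n \ge 128$, which holds at $n = 128$ (values $7$ and $8$) and persists for larger $n$ by the faster growth of $\sqrt{n/2}$.

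The upper half of (ii), $\lambda_{2i}(n) \le 2\lambda_{2i+1}(n)$, is where the essential content lies, and it explains why the hierarchy advances one ``inverse Ackermann level'' per \emph{two} $\lambda$-indices. I would reduce it to the auxiliary claim $\lambda_{2i-2}(\lambda_{2i-2}(n)) \le \lambda_{2i-1}(n)$ for $i \ge 2$, i.e., two iterations of the even-indexed function dominate one iteration of the next odd-indexed one. Iterating gives $\lambda_{2i-2}^{(2k)}(n) \le \lambda_{2i-1}^{(k)}(n)$ for all $k \ge 1$; taking $k = \lambda_{2i+1}(n) = \lambda_{2i-1}^*(n)$ forces the right side below $1$, whence $\lambda_{2i}(n) = \lambda_{2i-2}^*(n) \le 2k = 2\lambda_{2i+1}(n)$. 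The auxiliary claim is itself proved by induction on $i$: the base $i = 2$ is $\lceil \log \lceil \log n \rceil \rceil \le \lambda_1^*(n)$, valid because $\lambda_1^{(k)}(n) \approx n^{1/2^k}$ needs roughly $\log \log n$ steps to drop below $2$; the inductive step unpacks $\lambda_{2i-2} = \lambda_{2i-4}^*$, $\lambda_{2i-1} = \lambda_{2i-3}^*$, and combines the level-$(i-1)$ hypothesis with the order-preservation principle to compare the iteration counts.

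The hardest step will be the inductive case of that auxiliary double-iteration claim: the flooring operations in $\lfloor \sqrt{\cdot} \rfloor$ and $\lceil \log \cdot \rceil$ introduce off-by-one terms at every level, and the argument must ensure these do not accumulate into anything worse than the target factor of $2$.
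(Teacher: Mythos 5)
The paper does not prove this proposition; it imports it verbatim as Claim~2.4 of Raz--Shpilka, so there is no in-paper argument to compare against. Assessed on its own terms, your sketch for part~(i), the lower half of~(ii), and the skeleton of the upper half of~(ii) (reduce to $\lambda_{2i-2}^{(2k)}(n) \le \lambda_{2i-1}^{(k)}(n)$, take $k = \lambda_{2i+1}(n)$) is sound in outline, though you leave the load-bearing auxiliary inequality $\lambda_{2i-2}(\lambda_{2i-2}(n)) \le \lambda_{2i-1}(n)$ unproved. That inequality is very tight --- its base case $\lceil\log\lceil\log n\rceil\rceil \le \lambda_1^*(n)$ holds with equality at e.g.\ $n=17,32,63$ --- so waving it off as a floor/ceiling bookkeeping exercise is not yet a proof.

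There is also a concrete gap in your route to~(iii). You reduce~(iii) to $\lambda_i(n) \le \lambda_2(n)$ via the auxiliary claim ``$\lambda_{j+1}(n) \le \lambda_j(n)$ for every $j \ge 2$ and sufficiently large~$n$.'' That chain already breaks at $j=3$, $n = 128$: iterating $\lceil\log\cdot\rceil$ gives $128 \to 7 \to 3 \to 2 \to 1$, so $\lambda_4(128) = \lambda_2^*(128) = 4$, whereas iterating $\lfloor\sqrt{\cdot}\rfloor$ gives $128 \to 11 \to 3 \to 1$, so $\lambda_3(128) = \lambda_1^*(128) = 3$. The failure $\lambda_4 > \lambda_3$ persists for $128 \le n \le 255$. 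Since $n = 128$ is exactly the threshold in~(iii), ``sufficiently large $n$'' is not available, and your reduction does not establish~(iii) on its stated range. The chain you actually need is the parity-preserving descent $\lambda_d = \lambda_{d-2}^* \le \lambda_{d-2}$ (valid whenever $\lambda_{d-2}(n) \le \lfloor\sqrt{n}\rfloor$, by Pudl\'ak's Lemma~5(i), reproduced as Lemma~\ref{lem:pudlak_f_star}(i) in this paper), which brings even indices down to $\lambda_2$ and odd indices down to $\lambda_3 \le \lambda_2$ without ever comparing an even-indexed $\lambda$ to an adjacent odd-indexed one.
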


The following proposition is given as Proposition 2 and 3 in the subsequent paper \cite{Li23}, but was developed first in the present work.

\begin{proposition}
\label{prop:inv_ack_def_properties}
\begin{itemize}
	\item[(i)] For any $d \ge 1$, and for all $n \ge 4$,
\[
\lambda_d(n) \;\le\; n-2.
\]
\item[(ii)] For any $d \ge 1$,
\[
\lambda_d(d) \;\le\; 4.
\]	
\end{itemize}
\end{proposition}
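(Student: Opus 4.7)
Both parts will be proved by induction on $d$, using the recursive definition $\lambda_d = \lambda_{d-2}^*$ for $d \ge 3$ together with monotonicity of $\lambda_d$ in $n$ (Proposition \ref{prop:lambda_rs}).

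For part (i), the base cases $d = 1$ and $d = 2$ reduce to the elementary inequalities $\lfloor \sqrt{n} \rfloor \le n - 2$ (equivalent to $(n-1)(n-4) \ge 0$, true for $n \ge 4$) and $\lceil \log_2 n \rceil \le n - 2$ (easy calculus check, tight at $n = 4$). For the inductive step $d \ge 3$, I would trace the sequence $a_0 = n$, $a_{j+1} = \lambda_{d-2}(a_j)$ that defines $\lambda_{d-2}^*(n) = \lambda_d(n)$. The inductive hypothesis applied at level $d - 2$ gives $a_{j+1} \le a_j - 2$ while $a_j \ge 4$, and the defining condition $\lambda_{d-2}(k) < k$ gives $a_{j+1} \le a_j - 1$ during the terminating phase $a_j \in \{2, 3\}$. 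A short parity case analysis then bounds the total number of iterations by $n/2$ for even $n$ or $(n+1)/2$ for odd $n$, both of which are at most $n - 2$ for $n \ge 4$.

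For part (ii), I would first verify the small cases $d \in \{1, 2, 3, 4, 5\}$ directly from the definitions; each gives $\lambda_d(d) \le 2$. For $d \ge 6$, assume inductively that $\lambda_{d'}(d') \le 4$ for all $d' < d$, and trace the iteration defining $\lambda_{d-2}^*(d)$. Starting from $a_0 = d$, part (i) yields $a_1 \le d - 2$. Monotonicity of $\lambda_{d-2}$ in its argument combined with the inductive hypothesis then gives $a_2 \le \lambda_{d-2}(d - 2) = \lambda_{d-2}(d-2) \le 4$. From $a_2 \le 4$, part (i) applied at $n = 4$ gives $a_3 \le \lambda_{d-2}(4) \le 2$, and the defining condition gives $a_4 \le \lambda_{d-2}(2) \le 1$. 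Hence $\lambda_d(d) = \lambda_{d-2}^*(d) \le 4$.

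The one subtlety (not a real obstacle, just bookkeeping) lies in part (i): once $a_j$ enters $\{2, 3\}$, the ``decrease by at least 2'' guarantee from the IH no longer applies, so one must separately account for the final one or two steps. Since the bound $\lambda_d(n) \le n - 2$ is tight at $n = 4$, there is no slack, and the boundary cases $n = 4$ and $n = 5$ need explicit verification; both check out cleanly.
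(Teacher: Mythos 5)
Your proof is correct. Both parts go through, and part (ii) is essentially the same argument as the paper's (the paper uses the telescoping identity $\lambda_d^*(m) = \lambda_d^*(\lambda_d(m)) + 1$ twice, reducing $\lambda_d^*(d+2)$ to $\lambda_d^*(4) + 2$; you trace the same four iterations of $\lambda_{d-2}$ as a sequence $a_0,\dots,a_4$ and show $a_4 \le 1$ — same content, different packaging, and your base cases $d \le 5$ vs. the paper's $d \le 3$ both work).

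For part (i), however, you take a genuinely different and more laborious route than the paper. The paper's inductive step is a one-liner: $\lambda_{d+2}(n) = \lambda_d^*(n) \le \lambda_d(n) \le n-2$, where the middle inequality is the general fact $f^*(n) \le f(n)$ for any $f$ with $f(m) < m$ for $m > 0$ (iterating a strictly-decreasing function $f(n)$ more times from $f(n)$ can only reach $1$; formally, $f^{(j)}(n) \le f(n) - (j-1)$ for $j \ge 1$, so $f^{(f(n))}(n) \le 1$). Your approach instead traces the entire iteration and counts steps, which requires the parity case analysis and careful handling of the terminal phase $a_j \in \{2,3\}$ where the "decrease by $\ge 2$" guarantee from the inductive hypothesis no longer applies. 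What your route buys is that it is fully self-contained and elementary — it never invokes the abstract inequality $f^* \le f$; what it costs is that the bookkeeping is more delicate (and tight, as you correctly note, at $n = 4$ and $n = 5$). If you recognized the $f^*(n) \le f(n)$ shortcut, the whole induction for (i) collapses to two lines.
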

\begin{proof}
(i) We prove by induction on $d$. When $d = 1$ or $2$, it is easy to verify that $\lambda_d(n) \le n - 2$. Assuming $\lambda_d(n) \le n-2$ for all $n \ge 4$, let us prove that $\lambda_{d+2}(n) \le n-2$. By the definition of $\lambda_{d+2}(n)$, we have
\[
\lambda_{d+2}(n) \;=\; \lambda_{d}^*(n) \;\le\; \lambda_{d}(n) \;\le\; n - 2.
\]

(ii) We prove by induction on $d$. When $d \le 3$, it is easy to verify by straightforward calculation. Assuming $\lambda_d(d) \le 4$ is true for $d$, where $d \ge 2$, let us prove $\lambda_{d+2}(d+2) \le 4$. We have
\begin{align*}
\lambda_{d+2}(d+2) \;\;=\;\; & \lambda_{d}^*(d+2) \\
\;\;=\;\; & \lambda_{d}^*\left(\lambda_d(d+2)\right) + 1 \\
\;\;\le\;\; & \lambda_{d}^*(d) + 1 && \text{Proposition \ref{prop:inv_ack_def_properties} (i)} \\ 
\;\;=\;\; & \lambda_{d}^*(\lambda_d(d)) + 2 \\ 
\;\;\le\;\; & \lambda_{d}^*(4) + 2  && \text{By induction hypothesis} \\ 
\;\;\le\;\; & \max\{\lambda^*_2(4), \lambda^*_3(4)\} + 2 \\
\;\;=\;\; & 4.
\end{align*}
\end{proof}

\begin{proposition}
\label{prop:ack_properties}
For any $i, d \ge 1$,
\begin{itemize}
    \item[(1).] $A(i+1, d) = A^{(d)}_i(1)$.
	\item[(2).] $\lambda_{2i}(A(i, d)) = d$.
	\item[(3).] $\lambda_{2i}(A(i, d) + 1) = d + 1$.
	\item[(4).] $A(i, d) = \max\{n : \lambda_{2i}(n) \le d \}.$
	\item[(5).] $A(i, \lambda_{2i}(d)) \ge d$.
\end{itemize}
\end{proposition}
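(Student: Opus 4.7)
The plan is to prove the five items in order, treating each subsequent item as essentially a corollary of the previous ones. First, for (1), I will induct on $d$. The base case $d=1$ just says $A(i+1,1)=A_i(1)=2$, which is the definition. For the inductive step, the recursive clause $A(i+1,d+1)=A(i,A(i+1,d))$ together with the inductive hypothesis $A(i+1,d)=A^{(d)}_i(1)$ gives $A(i+1,d+1)=A_i(A^{(d)}_i(1))=A^{(d+1)}_i(1)$, as required.

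For (2), I will induct on $i$. The base $i=1$ is the identity $\lambda_2(A(1,d)) = \lceil \log 2^d\rceil = d$, using $A(1,d) = 2^d$. For the inductive step, assume $\lambda_{2i}(A(i,m))=m$ for all $m\ge 1$, and unfold $\lambda_{2(i+1)}=\lambda_{2i}^*$. By (1) applied level-by-level, $A(i+1,d)=A(i,A(i+1,d-1))$, so by the inductive hypothesis $\lambda_{2i}(A(i+1,d)) = A(i+1,d-1)$. Iterating $d-1$ times reduces us to $A(i+1,1)=2$, and then one more application gives $\lambda_{2i}(2)=1$ (easy to check by induction from $\lambda_2(2)=1$). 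Thus $\lambda_{2i}^{(d)}(A(i+1,d))=1$ and $\lambda_{2i}^{(d-1)}(A(i+1,d))=2>1$, so $\lambda_{2(i+1)}(A(i+1,d))=d$.

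Item (3) is where the most care is needed; I will again induct on $i$. The base $i=1$ is $\lambda_2(2^d+1)=\lceil \log(2^d+1)\rceil = d+1$. For the inductive step, the key observation is that $A(i,d)+1 = A(i-1, A(i,d-1))+1$, which by the inductive hypothesis of (3) at level $i-1$ satisfies $\lambda_{2(i-1)}(A(i,d)+1) = A(i,d-1)+1$. Iterating this, after $k$ applications of $\lambda_{2(i-1)}$ we obtain $A(i,d-k)+1$, which for $k=d-1$ equals $A(i,1)+1=3$. One more application yields $\lambda_{2(i-1)}(3)=2$ (provable by induction on $i$), and one more yields $\lambda_{2(i-1)}(2)=1$. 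Thus exactly $d+1$ iterations are needed to drop to $\le 1$, giving $\lambda_{2i}(A(i,d)+1)=d+1$. The main obstacle lies in this computation — specifically in verifying that the two ``tail'' applications on inputs $3$ and $2$ behave uniformly across all $i$, which requires the auxiliary facts $\lambda_{2(i-1)}(2)=1$ and $\lambda_{2(i-1)}(3)=2$ for all $i\ge 2$; both follow by a straightforward side induction from the base cases at $i=2$.

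Items (4) and (5) are immediate. For (4), (2) shows $A(i,d)$ belongs to the set $\{n:\lambda_{2i}(n)\le d\}$, and (3) together with monotonicity of $\lambda_{2i}$ (Proposition A.1(i)) shows every $n>A(i,d)$ satisfies $\lambda_{2i}(n)\ge \lambda_{2i}(A(i,d)+1)=d+1>d$, hence $A(i,d)$ is the maximum. For (5), setting $e=\lambda_{2i}(d)$ and applying (4) with this $e$ gives $A(i,e)=\max\{n:\lambda_{2i}(n)\le e\}$; since $d$ itself lies in that set, $d\le A(i,e)=A(i,\lambda_{2i}(d))$.
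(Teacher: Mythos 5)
Your proof is correct and follows essentially the same strategy as the paper's: (1) by unfolding the Ackermann recursion, (2) and (3) by induction on $i$ with the $\lambda^*$ expanded using the level-$i$ inductive hypothesis, and (4), (5) derived as corollaries. The only cosmetic difference is that you iterate $\lambda_{2(i-1)}$ directly and count applications until reaching $\le 1$, whereas the paper repeatedly applies the identity $f^*(n) = f^*(f(n)) + 1$; the two bookkeeping schemes are equivalent, and your explicit treatment of the tail values $\lambda_{2(i-1)}(3) = 2$, $\lambda_{2(i-1)}(2) = 1$ correctly fills in what the paper leaves as $\lambda^*_{2i}(1+1) + d = d+1$.
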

\begin{proof}
(1). By Definition \ref{def:acker},
\begin{eqnarray*}
A(i+1, d) & = & A(i, A(i+1, d-1)) \\
& = & A(i, A(i, A(i+1, d-2))) \\
& = & \underbrace{A(i, A(i, \ldots}_{d-1 \text{ times}}, A(i+1, 1) \ldots )) \\
& = & A_i^{(d)}(1),
\end{eqnarray*}
where we write $A(i, j)$ as $A_i(j)$.

(2). Prove by induction on $i$. For the base case $i = 1$, we have $A(1, d) = 2^d$, and $\lambda_2(n) = \lceil \log_2 n \rceil$. It is obvious that the $\lambda_2(A(1, d)) = d$ holds.

For the induction step, assuming the conclusion holds for $i$, let us prove it for $i+1$.
\begin{align*}
& \lambda_{2(i+1)}(A(i+1, d)) \\
\;\;=\;\; & \lambda_{2i}^*\left(A^{(d)}_i(1)\right) \\
\;\;=\;\; & \lambda_{2i}^*\left(A^{(d-1)}_i(1)\right) + 1 && \text{By induction hypothesis} \\ 
\;\;=\;\; & \lambda_{2i}^*\left(A^{(d-2)}_i(1)\right) + 2 && \text{By induction hypothesis} \\ 
\vdots & \\
\;\;=\;\; & d.
\end{align*}

(3). Prove by induction on $i$. For the base case $i = 1$, we have $A(1, d) = 2^d$, and $\lambda_2(n) = \lceil \log_2 n \rceil$. It is obvious that the $\lambda_2(A(1, d)+1) = d+1$ holds.

For the induction step, assuming the conclusion holds for $i$, let us prove it for $i+1$.
\begin{align*}
& \lambda_{2(i+1)}(A(i+1, d)+1) \\
\;\;=\;\; & \lambda_{2i}^*\left(A^{(d)}_i(1)+1\right) \\
\;\;=\;\; & \lambda_{2i}^*\left(A\left(i, A^{(d-1)}_i(1)\right)+1\right) \\
\;\;=\;\; & \lambda_{2i}^*\left(A^{(d-1)}_i(1)+1\right) + 1 && \text{By induction hypothesis} \\ 
\;\;=\;\; & \lambda_{2i}^*\left(A^{(d-2)}_i(1)+1\right) + 2 && \text{By induction hypothesis} \\ 
\vdots & \\
\;\;=\;\; & \lambda_{2i}^*(1 + 1) + d \\
\;\;=\;\; & d + 1.
\end{align*}

(4) follows from (2) and (3).

(5) follows from (4).
\end{proof}

\begin{proposition}
\label{prop:lambda_dd}
Let $c > 0$ be a constant. When $d$ is large enough, $\lambda_d(c \cdot d 2^{d/2}) \le d$.
\end{proposition}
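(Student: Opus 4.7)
The plan is to apply Proposition \ref{prop:ack_properties}(4), which gives the exact characterization $\lambda_{2i}(n) \le d$ if and only if $n \le A(i, d)$, combined with the super-fast growth of the Ackermann function. I will split into cases based on the parity of $d$.

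For even $d = 2i$, the claim $\lambda_{2i}(c \cdot 2i \cdot 2^i) \le 2i$ is equivalent to $c \cdot 2i \cdot 2^i \le A(i, 2i)$. Since $A$ is monotone non-decreasing in its first argument (for second argument $\ge 3$), I would bound $A(i, 2i) \ge A(2, 2i)$ for $i \ge 2$, and $A(2, k)$ is a tower of $2$'s of height $k$. A tower of height $2i$ dominates any singly exponential expression in $i$, so for $i$ sufficiently large depending only on $c$, the inequality holds. Concretely, $i \ge 3$ (i.e., $d \ge 6$) already suffices for any reasonable value of $c$, since $A(2, 6) \ge 2^{2^{65536}}$ vastly exceeds $c \cdot 6 \cdot 2^3$.

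For odd $d = 2i + 1$ with $i \ge 2$, I would invoke Proposition \ref{prop:lambda_rs}(ii) to bound $\lambda_{2i+1}(n) \le \lambda_{2i}(n)$, thereby reducing to showing $\lambda_{2i}(c(2i+1) 2^{(2i+1)/2}) \le 2i+1$. By Proposition \ref{prop:ack_properties}(4) this is equivalent to $c(2i+1) 2^{(2i+1)/2} \le A(i, 2i+1)$, and the same tower comparison settles it for all $i$ sufficiently large.

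The main obstacle, if any, is the preliminary monotonicity fact $A(i, n) \ge A(2, n)$ for $i \ge 2$ and $n \ge 3$, which is standard but not stated as a separate lemma in the paper. It admits a short induction on $i$ using the recurrence $A(i, n) = A(i-1, A(i, n-1))$ together with monotonicity of $A(i, \cdot)$ in its second argument. Beyond that, the proof reduces to comparing a tower of $2$'s against a single exponential, which is entirely routine bookkeeping.
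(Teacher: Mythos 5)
Your proof is correct and follows essentially the same route as the paper: split on the parity of $d$, use Proposition \ref{prop:lambda_rs}(ii) to reduce the odd case to an even-index $\lambda$, and translate the claim into an Ackermann inequality via Proposition \ref{prop:ack_properties}. The one cosmetic difference is that the paper lower-bounds $A(i,\cdot)$ by the simpler $A(1,\cdot)=2^{(\cdot)}$, which already dominates $c\,d\,2^{d/2}$ for large $d$, whereas you reach for the stronger tower bound $A(i,\cdot)\ge A(2,\cdot)$; both work.
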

\begin{proof}
If $d$ is even, i.e., $d = 2i$, by Proposition \ref{prop:ack_properties}, it suffices to prove
\begin{equation}
\label{equ:goal_lb}
c (2i) 2^i \;\le\; A_i(2i).	
\end{equation}
Notice that $A_1(2i) = 2^{2i}$ and $A_{i+1}(n) \ge A_i(n)$. Clearly, \eqref{equ:goal_lb} is true when $i$ is large enough, and this sufficiently large number only depends on $c$.

If $d$ is odd, i.e., $d = 2i+1$, it suffices to prove $\lambda_{2i}(c \cdot d 2^{d/2}) \le d$, as $\lambda_{2i+1}(n) \le \lambda_{2i}(n)$ by Proposition \ref{prop:lambda_rs}. By Proposition \ref{prop:ack_properties}, it suffices to prove
\[
c (2i+1) 2^{i+\frac{1}{2}} \;\le\; A_i(2i+1),
\]
which is true because $A_1(2i+1) = 2^{2i+1}$.
\end{proof}

\section{Proof of Theorem \ref{thm:pudlak_lb_refinement}}
\label{app:pudlak_lb_ref}

\begin{lemma}
\label{lem:pudlak_d1}
(Lemma 8 in \cite{Pud94}) For any $1 \le r \le n$, and for any $\epsilon, \delta \in (0, 1]$, we have
\[
D\left(n, 1, \epsilon, \delta, \frac{1}{r}\right) \;\ge\; \epsilon \delta^2 nr.
\]
\end{lemma}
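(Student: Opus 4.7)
The plan is to exploit the very simple structure of depth-1 graphs: any path from an input to an output is a single edge, so the number of vertex-disjoint paths from $X$ to $Y$ equals the size of a maximum matching in the bipartite subgraph induced by $X\cup Y$, which is in turn at most $|E(X,Y)|$, the number of edges between $X$ and $Y$.

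Fix $k=\lceil n/r\rceil$, which lies in $[\mu n, n]$ for $\mu=1/r$, and invoke the densely regular property to obtain distributions $\mathcal X$ and $\mathcal Y$ on $k$-subsets of inputs and outputs satisfying the marginal bounds $\Pr[i\in X],\Pr[i\in Y]\le k/(\delta n)$, as well as $\mathbb{E}[\#\text{vertex-disjoint paths from }X\text{ to }Y]\ge \epsilon k$. By the structural observation above,
\[
\epsilon k \;\le\; \mathbb{E}_{X,Y}\bigl[|E(X,Y)|\bigr] \;=\; \sum_{(u,v)\in E}\Pr[u\in X]\cdot\Pr[v\in Y],
\]
using independence of $X$ and $Y$ (which we may assume, since the definition lets us take them independently, or since the expected number of disjoint paths depends only on the two marginals for this depth-1 counting argument).

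Applying the marginal bounds to each term and summing gives $\epsilon k \le |E|\cdot k^{2}/(\delta^{2} n^{2})$. Rearranging yields $|E|\ge \epsilon\delta^{2}n^{2}/k \ge \epsilon\delta^{2} n r$, since $k\le n/r+1$ makes $n^{2}/k\ge nr$ up to lower-order terms (and for the stated inequality one can just take $k=n/r$, which is admissible by monotonicity of the densely regular condition in $\mu$). Since $|E|=D(n,1,\epsilon,\delta,1/r)$ by definition (the size of the graph is its edge count at depth 1), this gives the desired lower bound.

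The only step requiring any care is verifying that we may use the product bound $\Pr[u\in X]\Pr[v\in Y]$ on the joint probability; if the definition permitted correlated $(X,Y)$, we would have to argue either (a) that the minimum expectation is attained by a product distribution, or (b) simply redefine $\mathcal Y$ to be the marginal of $Y$ and $\mathcal X$ to be the marginal of $X$, and observe this only reduces the joint probability $\Pr[u\in X,v\in Y]$ under independence by AM-GM — but since the original lemma in Pudl\'ak already uses independent $\mathcal X,\mathcal Y$, no subtlety arises, and the calculation above is the proof.
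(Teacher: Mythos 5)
The paper does not reprove this lemma; it is quoted directly from Pudl\'ak (Lemma 8 in \cite{Pud94}), so there is no in-paper proof to compare against. That said, your argument is the natural and essentially correct one: at depth $1$ the vertex-disjoint paths from $X$ to $Y$ are exactly a matching in the bipartite graph between $X$ and $Y$, bounded by $|E(X,Y)|$, and then linearity of expectation plus the marginal bounds $\Pr[u\in X],\Pr[v\in Y]\le k/(\delta n)$ (with $X,Y$ sampled independently, as the definition intends) give $\epsilon k \le |E|\cdot k^2/(\delta^2 n^2)$, i.e.\ $|E|\ge \epsilon\delta^2 n^2/k$.

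The one place you are imprecise is the final step $n^2/k \ge nr$. The definition forces $k\ge \mu n = n/r$, and if $n/r$ is not an integer you must take $k=\lceil n/r\rceil > n/r$, which makes $n^2/k < nr$; this is not a ``lower-order'' loss in general --- when $n/r$ is just above an integer $j$ the loss can be as large as a factor $(j+1)/j$, up to $2$ when $n/r$ is just above $1$. Your parenthetical suggestion ``just take $k=n/r$'' is not available when $n/r\notin\mathbb{Z}$ since $k$ is a subset size. The clean statement your argument actually yields is $|E|\ge \epsilon\delta^2 n^2/\lceil n/r\rceil$, which matches the claimed $\epsilon\delta^2 nr$ whenever $n/r$ is an integer (or if one only needs the bound up to a constant factor, as is the case everywhere this lemma is used in the paper). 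So: right approach, right mechanism, but the last inequality as written slightly overclaims by glossing over the ceiling.
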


\begin{lemma}
\label{lem:pudlak_f_star}	
 (Lemma 5 in \cite{Pud94}) Suppose $f(0) = f(1) = 0$ and $f(n) \le \lfloor \sqrt{n} \rfloor$ for every $n > 1$. For every $n \ge 0$, we have
\begin{itemize}
\item[(i)]	$f^*(n) \le f(n) \le  \lfloor \sqrt{n} \rfloor,$
\item[(ii)] $\frac{f^{(i)}(n)}{f^{(i+1)}(n)} \ge f^{(i+1)}(n)$
for every $i \ge 1$ provided the denominator is not 0,
\item[(iii)] $f^{(i)}(n) \ge \frac{f^*(n)}{2}$ for every $i \le \frac{f^*(n)}{2}$.
\end{itemize}

Indeed, (iii) can be strengthened. 

\begin{lemma}
\label{lem:f_star} Suppose $f(n) \le \lfloor \sqrt{n} \rfloor$. For every $n \ge 1$ and every $i \le \frac{f^*(n)}{2}$, we have $f^{(i)}(n) \ge f^*(n)$.
\end{lemma}
\begin{proof} Let us assume $f^*(n) \ge 2$. Otherwise, $f^*(n) \le 1$, that is, $f(n) \le 1$. Note that $i \le f^*(n)/2 \le 1/2$, and thus $i = 0$. So $f^{(0)}(n) = n \ge f^*(n) = 1$ always holds.

By the definition of $f^*(n)$ and $f(n) \le \lfloor \sqrt{n} \rfloor$, we have
\begin{eqnarray*}
f^{(f^*(n))}(n) & = & 1, \\
f^{(f^*(n)-1)}(n) & \ge & 2, \\
f^{(f^*(n)-2)}(n) & \ge & 2^2, \\
f^{(f^*(n)-3)}(n) & \ge & 2^4, \\
\ldots \\
f^{(f^*(n)-j)}(n) & \ge & 2^{2^{j-1}}
\end{eqnarray*}	
for any $1 \le j \le f^*(n)$.

When $i \le \frac{f^*(n)}{2}$, $f^*(n) - i \ge \frac{f^*(n)}{2}$.  Thus,
\begin{eqnarray*}
f^{(i)}(n) & \ge & 2^{2^{f^*(n) - i - 1}} \\
& \ge & 2^{2^{\frac{f^*(n)}{2} - 1}} \\
& \ge & f^*(n),
\end{eqnarray*}
where the last step is because $2^{2^{\frac{x}{2}-1}} \ge x$ for all real $x \ge 4$. (If $f^*(n) = 2$, then $f^{(1)}(n) \ge 2$. If $f^*(n) = 3$, then $f^{(1)}(n) \ge 4$.)
\end{proof}

\end{lemma}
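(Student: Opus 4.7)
The plan is to run the definition of $f^*$ in reverse, leveraging the contrapositive of $f(m) \le \lfloor \sqrt{m} \rfloor$: whenever $f(m) \ge t$, necessarily $m \ge t^2$. Applied to consecutive iterates, this says each step back along the orbit $n, f(n), f^{(2)}(n), \ldots$ at least squares whatever lower bound we have managed to maintain.

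Setting $k := f^*(n)$, the minimality in the definition of $f^*$ forces $f^{(k-1)}(n) \ge 2$. Starting from that and squaring at each backward step yields
\[
f^{(k-j)}(n) \,\ge\, 2^{2^{j-1}} \qquad \text{for every } 1 \le j \le k.
\]
For any $i \le k/2$, taking $j := k - i \ge k/2$ gives $f^{(i)}(n) \ge 2^{2^{k/2 - 1}}$, so the whole lemma reduces to the elementary inequality $2^{2^{k/2 - 1}} \ge k$. This already holds with equality at $k = 4$ and is then preserved doubly exponentially for all larger $k$, so for $k \ge 4$ we are done.

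The remaining work is the handful of small values $k \in \{0, 1, 2, 3\}$, where the bound $2^{2^{k/2-1}} \ge k$ is either vacuous or tight. The constraint $i \le k/2$ restricts $i$ to $\{0\}$ for $k \le 1$ and to $\{0, 1\}$ for $k \in \{2, 3\}$; in each case the required inequality $f^{(i)}(n) \ge k$ is immediate, either from $n \ge 1$ (for $i = 0$ and $k \le 1$), from $n \ge f^{(k-1)}(n) \ge 2$ (for $i = 0$ and $k \ge 2$), or from one extra application of the reverse squaring step (for $i = 1$). I do not anticipate a real obstacle: the argument is essentially a finite induction along the depth-$k$ orbit combined with the comparison of a doubly exponential function against a linear one, and the only delicate point is to verify the base cases cleanly so that the reverse-squaring chain is never invoked in a regime where it degenerates.
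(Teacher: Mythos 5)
Your proposal is correct and follows essentially the same route as the paper: establish the reverse-squaring chain $f^{(f^*(n)-j)}(n) \ge 2^{2^{j-1}}$ starting from $f^{(f^*(n)-1)}(n) \ge 2$, take $j = f^*(n)-i$ to reduce to the elementary inequality $2^{2^{k/2-1}} \ge k$ for $k = f^*(n) \ge 4$, and check $k \le 3$ separately. (One tiny wrinkle in your base-case breakdown: for $k=3$, $i=0$, the fallback $n \ge f^{(k-1)}(n) \ge 2$ gives only $n \ge 2 < 3$; you should instead invoke the chain with $j = k$, which gives $n \ge 2^{2^{k-1}} = 16$, so there is no real gap.)
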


The following lemma is proved by Pudl{\'{a}}k \cite{Pud94}. To apply it for super-constant depth, we explicitly determine an appropriate $\beta$ (in terms of $\alpha$ and $\epsilon, \delta$). We reproduce the proof here.

\begin{lemma}
\label{lem:pudlak_depth_reduction}
(Lemma 9 in \cite{Pud94}) Let $f(n) \le \lfloor \sqrt{n} \rfloor$ for every $n$. For every positive reals $\alpha, \beta, \epsilon$ such that if
\begin{equation}
\label{equ:Dlb_d}	
\forall n \forall r \le n \quad D\left(n, d, \frac{\epsilon}{2}, \delta, \frac{1}{r}\right) \,\ge\, \alpha n f(r),
\end{equation}
then
\begin{equation}
\label{equ:Dlb_dp2}
\forall n \forall r \le n \quad D\left(n, d+2, \epsilon, \delta, \frac{1}{r}\right) \,\ge\, \beta n f^*(r),
\end{equation}
where $\beta = \min\left\{\frac{\epsilon\delta}{27}, \alpha\right\}$.
\end{lemma}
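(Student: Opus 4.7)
Let $G$ be a minimum-size depth-$(d+2)$ $(\epsilon,\delta,1/r)$-densely regular graph, with layers $L_0,L_1,\dots,L_{d+2}$ and $|L_0|=|L_{d+2}|=n$. The approach is to establish the one-step recursive inequality
$$D(n,d+2,\epsilon,\delta,1/r)\;\ge\;\beta n \,+\, D(n,d+2,\epsilon,\delta,1/f(r))$$
and iterate it along the sequence $r_0=r$, $r_{i+1}=f(r_i)$; since $r_{f^*(r)}\le 1$ makes the right-hand $D$-term vanish, the cumulative bound is $\beta n f^*(r)$, as claimed.

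The additive $\beta n$ is produced by a load-balancing analysis of the two outer layers. I invoke density regularity at the threshold subset size $k=n/r$ to obtain distributions $\mathcal{X},\mathcal{Y}$ on input and output $k$-subsets with $\mathbb{E}_{X,Y}[\#\text{ vertex-disjoint }X\!\to\!Y\text{ paths}]\ge\epsilon k$. Each such path uses exactly one edge out of $L_0$ and one edge into $L_{d+2}$. I then classify each $v\in L_1$ by its marginal probability of appearing as an interior vertex on a random disjoint path; a vertex is \emph{heavy} if its marginal exceeds a threshold of order $1/f(r)$. Summing boundary-edge contributions from heavy vertices in $L_1$, and symmetrically in $L_{d+1}$, should yield at least $\beta n$ outer-layer edges, where the constant $\epsilon\delta/27$ in $\beta$ arises from the marginal constraint $k/(\delta n)$ together with Markov-type concentration of disjoint-path counts, and the floor at $\alpha$ arises from invoking the depth-$d$ hypothesis on the inner subgraph.

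To close the recursion, I show that after removing the heavy vertices (and their incident outer-layer edges), the residual graph is $(\epsilon,\delta,1/f(r))$-densely regular at depth $d+2$. This involves redefining $\mathcal{X}',\mathcal{Y}'$ on subsets of size $n/f(r)$ with marginal bound $k'/(\delta n)$ and verifying that the expected-disjoint-path bound $\epsilon k'$ survives the heavy-vertex deletion up to a constant factor; the slack $\epsilon\to\epsilon/2$ in the hypothesis absorbs this constant loss when the depth-$d$ bound on the inner depth-$d$ subgraph $L_1\cup\cdots\cup L_{d+1}$ is finally invoked, forcing $\beta\le\alpha$. The main obstacle is precisely this density-regularity verification for the residual graph: the heavy-vertex set must be small enough not to destroy the support of the new distributions $\mathcal{X}',\mathcal{Y}'$, yet large enough that removing them furnishes the promised $\beta n$ additive edges. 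Lemma~\ref{lem:pudlak_d1} supplies both the depth-$1$ base case (giving the matching $\epsilon\delta^2$ constant) and anchors the bookkeeping behind the numerical factor $27$ in $\beta=\min\{\epsilon\delta/27,\alpha\}$.
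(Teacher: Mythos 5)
Your proposed one-step recursion
\[
D\!\left(n, d+2, \epsilon, \delta, \tfrac{1}{r}\right) \;\ge\; \beta n + D\!\left(n, d+2, \epsilon, \delta, \tfrac{1}{f(r)}\right)
\]
is not the structure of Pudl\'{a}k's argument, and it is unlikely to be provable as stated. The difficulty is exactly the one you flag as the ``main obstacle'': after deleting the heavy outer-layer vertices, the residual graph need not remain $(\epsilon,\delta,1/f(r))$-densely regular at depth $d+2$, because the vertex-disjoint paths witnessing regularity may be concentrated through those very vertices, so the expected path count can collapse. The paper sidesteps this by never asserting that the residual graph is densely regular at the \emph{same} $\epsilon$ and the \emph{same} depth; instead it partitions $V_1 \cup V_{d+1}$ into degree buckets $A_i$ determined by the iterates $f^{(i)}(r)$, and for each index $i$ proves a three-way alternative: either (i) the high-degree vertices are numerous enough that their degree alone yields $\Omega(\epsilon)\, n\, f^*(r)$ edges, or (ii) the edges incident with $A_i \cup A_{i+1} \cup A_{i+2}$ are at least $\tfrac{\epsilon\delta}{4} n$, or (iii) after \emph{contracting} the two outer layers (a genuine depth reduction to depth $d$) and passing to density parameter $\epsilon/2$, the depth-$d$ hypothesis \eqref{equ:Dlb_d} applies to the contracted graph. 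The role of the $\epsilon \to \epsilon/2$ slack is to accommodate the paths lost when cases (i) and (ii) fail, not to ``absorb a constant loss'' from a vertex deletion in a fixed-depth recursion.

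Two further mismatches. First, the multi-scale bucketing by $f^{(i)}(r)$ is essential: if Case (ii) holds at \emph{every} level $i \in [1, f^*(r)/2 - 3]$, summing over every third index produces the $\tfrac{\epsilon\delta}{27} n f^*(r)$ bound, and this is where $27$ comes from (the $\tfrac{1}{3}$ for disjointing the overlapping triples, $\tfrac{\epsilon\delta}{4}$ per level, and the $\tfrac{4}{9}$ relating $\tfrac{f^*(r)}{2} - 3$ to $f^*(r)$ when $f^*(r) \ge 54$). Lemma~\ref{lem:pudlak_d1} only supplies the depth-$1$ base case for the outer induction in Theorem~\ref{thm:pudlak_lb_refinement}; it does not anchor the $27$. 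Second, the recursion that iterates $f^*$ lives in the proof of Theorem~\ref{thm:pudlak_lb_refinement}, where the depth-reduction lemma is applied repeatedly to descend through depths $d, d-2, \ldots$; within Lemma~\ref{lem:pudlak_depth_reduction} itself there is no recursion on $r$ at fixed depth. To repair your proposal you would need to replace the single heavy/light split with the full bucketing $A_0, A_1, \ldots$ and argue all three cases per level, invoking \eqref{equ:Dlb_d} only through the contracted depth-$d$ graph, not through a depth-$(d+2)$ residual.
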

\begin{proof} Let $G$ be a depth-$(d+2)$ layered graph with $n$ inputs and $n$ outputs which is $(\frac{\epsilon}{2}, \delta, \frac{1}{r})$-densely regular. Let $V_i$, $i = 0, 1, \ldots, d+2$, be the $i$th level of $G$, where $V_0$ are inputs and $V_{d+2}$ are outputs. Let 
\[
A_0 \,=\, \left\{ v \in V_1 \cup V_{d+1} : \deg(v) > f(r) \right\},
\]
and
\[
A_i \,=\, \left\{ v \in V_1 \cup V_{d+1} : \deg(v) \in (f^{(i+1)}(r), f^{(i)}(r)] \right\}
\]
for $i = 1, 2, \ldots, f^{*}(r)$.

\begin{claim}
\label{claim:three_cases}
For every $i$, where $1 \le i \le \frac{f^*(r)}{2}-3$, at least one of the following inequalities holds:
\begin{equation}
\label{equ:case1}
|A_0 \cup \cdots \cup A_{i-1}| \,\ge\, \frac{\epsilon}{4} \cdot \frac{n}{f^{(i+1)}(r)};
\end{equation}
\begin{equation}
\label{equ:case2}
|\{(u, v) : (u, v) \text{ incident with } A_i \cup A_{i+1} \cup A_{i+2}\}| \,\ge\, \frac{\epsilon\delta}{4}n;
\end{equation}
\begin{equation}
\label{equ:case3}
|\{(u, v) : (u, v) \text{ not incident with } A_0 \cup \cdots \cup A_{i+2}\}| \,\ge\, \alpha n\frac{f^{(i+2)}(r)}{f^{(i+3)}(r)}n.
\end{equation}
\end{claim}
\begin{proof} (of Claim \ref{claim:three_cases}) The proof is reproduced from \cite{Pud94}. For any $i \in [1, 	\frac{f^*(r)}{2}-3]$, suppose neither \eqref{equ:case1} nor \eqref{equ:case2} is true, we prove \eqref{equ:case3}.

Let $k \in \left[\frac{n}{f^{(i+1)}(r)}, n\right]$ be an arbitrary integer. Let $\mathcal{X}, \mathcal{Y}$ be probability distributions on $k$-element subsets of inputs and outputs respectively satisfying the condition in Definition \ref{def:densely_regular}. Let $X \in \mathcal{X}$, $Y \in \mathcal{Y}$ be random subsets of inputs and outputs of size $k$.

By the definition of densely regular graphs, there are at least $\epsilon k$ vertex-disjoint paths connecting them. Out of these, there are at most $\frac{\epsilon}{4} \cdot \frac{n}{f^{(i+1)}(r)} \le \frac{\epsilon}{4} \cdot k$ paths passing through $A_0 \cup \cdots \cup A_{i-1}$ by non \eqref{equ:case1}. On \emph{average}, there are at most $\frac{\epsilon}{4} \cdot k$ paths passing through $A_i \cup A_{i+1} \cup A_{i+2}$, since the expected number of vertices of $X \cup Y$ directly connected with with $A_i \cup A_{i+1} \cup A_{i+2}$ is at most $\frac{k}{\delta n} \cdot \frac{\epsilon\delta}{4}n= \frac{\epsilon}{4} \cdot k$. So, on average, the number of vertex-disjoint paths from $X$ to $Y$ avoiding $A_0 \cup \cdots \cup A_{i+2}$ is at least $\frac{\epsilon}{2} \cdot k$.

We transform $G$ into a depth-$d$ graph as follows:
\begin{itemize}
	\item Omit the vertices in $V_1$ and $V_{d+1}$.
	\item For each pair of edges $(u, v), (v, w)$, where $v \in A_{i+3} \cup A_{i+4} \cup \cdots$, add the edge $(v, w)$.
\end{itemize}
As such, the size of the new graph, denoted by $G'$, is at most $f^{(i+3)}(r)$ times larger than the size of $G$, since $f^{(i+3)}(r)$ is an upper bound on the degrees of vertices in $A_{i+3} \cup A_{i+4} \cup \cdots$. We have shown that $G'$ is $(\frac{\epsilon}{2}, \delta, \frac{1}{f^{(i+1)}(r)})$-densely regular. By the assumption of the lemma, $G'$ is of size at least
\[
D\left(n, d, \frac{\epsilon}{2}, \delta, \frac{1}{f^{(i+1)}(r)}\right) \,\ge\, \alpha n f(f^{(i+1)}(r)) \,=\, \alpha n f^{(i+2)}(r),
\]
which implies that the number of edges not incident with $A_0 \cup \cdots \cup A_{i+2}$ is at least
\[
\frac{|E(G')|}{f^{(i+3)}(r)} \,\ge\, \alpha n\frac{f^{(i+2)}(r)}{f^{(i+3)}(r)}n,
\]
where the last step is by Lemma \ref{lem:pudlak_f_star} (ii).
\end{proof}

Consider the three cases according to Claim \ref{claim:three_cases}.

\textbf{Case 1:} For \emph{some} $i$, where $1 \le i \le \frac{f^*(r)}{2}-3$, \eqref{equ:case1} holds. Since each vertex in $A_0 \cup \ldots \cup A_{i-1}$ has degree at least $f^{(i)}(r)$, we have
\begin{align*}
|E(G)| \;\;\ge\;\; & \frac{\epsilon}{4} \cdot \frac{n}{f^{(i+1)}(r)} \cdot f^{(i)}(r) \\
\;\;\ge\;\; & \frac{\epsilon}{4} \cdot n \cdot f^{(i+1)}(r) && \text{By Lemma \ref{lem:pudlak_f_star} (ii)}\\
\;\;\ge\;\; & \frac{\epsilon}{4} \cdot n \cdot f^{*}(r) && \text{By Lemma \ref{lem:f_star}}.
\end{align*}

\textbf{Case 2:} For \emph{all} $i$, where $1 \le i \le \frac{f^*(r)}{2}-3$, \eqref{equ:case2} holds. If $f^*(r) \ge 54$, we have
\begin{align*}
|E(G)| \;\;\ge\;\; & \frac{1}{3} \cdot \left(\frac{f^*(r)}{2} - 3\right) \cdot \frac{\epsilon\delta}{4}n \\
\;\;\ge\;\; & \frac{1}{3} \cdot \frac{4}{9} f^*(r) \cdot \frac{\epsilon\delta}{4}n \\
\;\;=\;\; & \frac{1}{27} f^*(r) \epsilon \delta n.
\end{align*}
If $f^*(r) < 54$, we still have $|E(G)| \ge \frac{1}{27} \cdot f^*(r) \epsilon \delta n$ by Claim \ref{claim:dr_basic_lb}.

\begin{claim}
\label{claim:dr_basic_lb}
Let $G$ be a layered graph with $n$ inputs and $n$ outputs and of depth at least 2, which is $(\epsilon, \delta, \frac{1}{r})$-densely regular. Then $|E(G)| \ge 2\epsilon \delta n$.	
\end{claim}
\begin{proof} (of the claim) Let $k \in [\frac{n}{r}, n]$. Denote the inputs of $G$ by $I(G)$, and the outputs of $G$ by $O(G)$. Let $\Delta_G(X, Y)$ denote the number of vertex-disjoint paths between $X$ and $Y$.

By the definition of densely regular graphs, there are nonempty sets of $k$-element subsets $\mathcal{X} \in {I(G) \choose k}$ and $\mathcal{Y} \in {O(G) \choose k}$ such that for every $i \in I(G)$, for every $j \in O(G)$,
$
\Pr_{X \in \mathcal{X}}[i \in X] \le \frac{k}{\delta n} \text{ and } \Pr_{Y \in \mathcal{Y}}[j \in Y] \le \frac{k}{\delta n}
$, and $\mathbb{E}[\Delta_G(X, Y)] \ge \epsilon k$.
On the other hand,
\begin{align*}
\mathop{\mathbb{E}}_{X \in \mathcal{X} \atop Y \in \mathcal{Y}}\left[\Delta_G(X, Y)\right] \;\;\le\;\; & \mathop{\mathbb{E}}_{X \in \mathcal{X}}[\sum_{x \in \mathcal{X}} \mathbbm{1}_{\deg(x) > 0}] 
\\
\;\;\le\;\; & \sum_{x \in I(G) \text{ and } \deg(x) > 0} \Pr_{x \in \mathcal{X}}[x \in \mathcal{X}] \\
\;\;\le\;\; & \frac{k}{\delta n} \sum_{x \in I(G)} \deg(x).
\end{align*}
Putting them together, we have $\sum_{x \in I(G)} \deg(x) \ge \epsilon \delta n$. Similarly, we can prove $\sum_{y \in O(G)} \deg(y) \ge \epsilon \delta n$. Since $G$ is a layered graph of depth at least 2, we have $|E(G)| \ge 2 \epsilon \delta n$.
\end{proof}

\textbf{Case 3:} For \emph{some} $i$, where $1 \le i \le \frac{f^*(r)}{2}-3$, \eqref{equ:case3} holds. So we have
\[
|E(G)| \,\ge\, \alpha n\frac{f^{(i+2)}(r)}{f^{(i+3)}(r)}n \,\ge\, \alpha f^{(i+3)}(r) \,\ge\, \alpha n f^*(r).
\]

Combining 3 cases, we have $\beta = \min\left\{\frac{\epsilon}{4}, \frac{\epsilon \delta}{27}, \alpha\right\} = \min\left\{\frac{\epsilon \delta}{27}, \alpha\right\}$.
\end{proof}

We are ready to prove Theorem \ref{thm:pudlak_lb_refinement}.

\begin{proof} (of Theorem \ref{thm:pudlak_lb_refinement}) Since $\lambda_{2i}(n) = \Theta( \lambda_{2i+1}(n) )$ for $i \ge 2$, it suffices to prove
\begin{equation}
\label{equ:pudlak_induction}
D\left(n, 2d+1, \epsilon, \delta, \frac{1}{r}\right) \,\ge\, \min\left\{ \frac{1}{27}, \frac{\delta}{2} \right\} 2^{-d} \epsilon \delta n \lambda_{2d+1}(r).
\end{equation}

The base case for $d = 0$ is proved in Lemma \ref{lem:pudlak_d1}. Assume \eqref{equ:pudlak_induction} is true for $d$, let us prove it for $d + 1$. By induction hypothesis, we have
\[
D\left(n, 2d+1, \frac{\epsilon}{2}, \delta, \frac{1}{r}\right) \,\ge\, \min\left\{ \frac{1}{27}, \frac{\delta}{2} \right\} 2^{-(d+1)} \epsilon \delta n \lambda_{2d+1}(r)
\]
holds for any $r \le n$. By Lemma \ref{lem:pudlak_depth_reduction}, we have
$
D(n, 2d+3, \epsilon, \delta, \frac{1}{r}) \ge \beta n \lambda_{2d+1}^*(r) = \beta n \lambda_{2d+3}(r) ,
$
where 
\begin{eqnarray*}
\beta & = & \min\left\{ \frac{\epsilon \delta}{27}, \min\left\{ \frac{1}{27}, \frac{\delta}{2} \right\} 2^{-(d+1)} \epsilon \delta \right\} \\
& = & \min\left\{\frac{1}{27}, \frac{\delta}{2} \right\} 2^{-(d+1)} \epsilon \delta,
\end{eqnarray*}
which completes the induction step.
\end{proof}


\end{document}